\documentclass[11pt]{article}

\usepackage[usenames,dvipsnames]{xcolor}
\definecolor{Gred}{RGB}{219, 50, 54}
\definecolor{ToCgreen}{RGB}{0, 128, 0}

\usepackage[margin=1in]{geometry}
\usepackage{palatino}

\usepackage{makecell}
\usepackage{dsfont}

\usepackage{multirow}
\usepackage{amssymb,amsmath,amsthm,amsfonts}
\usepackage{bm}
\usepackage{bbm}
\usepackage{textgreek}
\usepackage{mathtools}
% \mathtoolsset{showonlyrefs}
\usepackage{enumitem}
\usepackage[numbers,comma,sort&compress]{natbib}
\usepackage{authblk}
\usepackage{graphicx}
\usepackage[font=small]{caption}
\usepackage[labelformat=simple]{subcaption}

\usepackage{float}
\usepackage[ruled,vlined,linesnumbered]{algorithm2e}
\usepackage{algorithmic}

\usepackage{physics}
\usepackage{footnote}
\usepackage{xcolor}
\usepackage{mathrsfs}
\usepackage{bbm}

\usepackage[colorlinks]{hyperref}
\usepackage{cleveref}
\hypersetup{
      colorlinks=true,
  citecolor=ToCgreen,
  linkcolor=Sepia,
  filecolor=Gred,
  urlcolor=Gred
  }
% \urlstyle{same}

\newtheorem{theorem}{Theorem}
\newtheorem{definition}{Definition}
\newtheorem{lemma}{Lemma}

\newtheorem{example}{Example}

\newtheorem{remark}{Remark}
\newtheorem{problem}{Problem}

\newcommand{\thm}[1]{\hyperref[thm:#1]{Theorem~\ref*{thm:#1}}}
\newcommand{\cor}[1]{\hyperref[cor:#1]{Corollary~\ref*{cor:#1}}}
\newcommand{\defn}[1]{\hyperref[defn:#1]{Definition~\ref*{defn:#1}}}
\newcommand{\lem}[1]{\hyperref[lem:#1]{Lemma~\ref*{lem:#1}}}
\newcommand{\prop}[1]{\hyperref[prop:#1]{Proposition~\ref*{prop:#1}}}
\newcommand{\assum}[1]{\hyperref[assum:#1]{Assumption~\ref*{assum:#1}}}
\newcommand{\fig}[1]{\hyperref[fig:#1]{Figure~\ref*{fig:#1}}}
\newcommand{\tab}[1]{\hyperref[tab:#1]{Table~\ref*{tab:#1}}}
\newcommand{\algo}[1]{\hyperref[algo:#1]{Algorithm~\ref*{algo:#1}}}
\renewcommand{\sec}[1]{\hyperref[sec:#1]{Section~\ref*{sec:#1}}}
\newcommand{\append}[1]{\hyperref[append:#1]{Appendix~\ref*{append:#1}}}
\newcommand{\fac}[1]{\hyperref[fac:#1]{Fact~\ref*{fac:#1}}}
\newcommand{\lin}[1]{\hyperref[lin:#1]{Line~\ref*{lin:#1}}}

\def\>{\rangle}
\def\<{\langle}

\newcommand{\vect}[1]{\ensuremath{\mathbf{#1}}}

\newcommand{\Z}{\mathbb{Z}}

\newcommand{\C}{\mathbb{C}}

\newcommand{\E}{\mathbb{E}}
\newcommand{\cM}{\mathcal{M}}

\newcommand{\cT}{\mathcal{T}}
\newcommand{\cC}{\mathcal{C}}

\DeclareMathOperator{\poly}{poly}

\newcommand{\z}{\vect{z}}

\def\Tr{\operatorname{Tr}}\def\:{\hbox{\bf:}}

\newcommand{\blambda}{{\bm{\lambda}}}

\newcommand{\Lambdadep}{{\Lambda_{\text{dep}}}}

\let\oldnl\nl
\newcommand{\nonl}{\renewcommand{\nl}{\let\nl\oldnl}}

\renewcommand{\epsilon}{\varepsilon}
%You can add your own command here

%%%%%%%%%%%%%%%%%%%%%%%%%%%%%%%%%%%%%%%%%%%%%%%%%%%%%%%%%%%%%%%%%%%%%%%%%%%%%%

\begin{document}

% \title{On the Complexity on Pauli Channel Estimation with Bounded Quantum Memory}
% \title{Sharper memory-sample tradeoffs for estimating Pauli noise: \\ how can a couple qubits help?}
% \title{The Complexity of Learning Pauli Channels with Bounded Quantum Memory}

\title{Efficient Pauli channel estimation with logarithmic quantum memory}

\author{Sitan Chen
\thanks{SEAS, Harvard University. Email: \href{mailto:sitan@seas.harvard.edu}{sitan@seas.harvard.edu}. This work was supported in part by NSF Award 2430375.}
\qquad
Weiyuan Gong
\thanks{SEAS, Harvard University; IIIS, Tsinghua University. Email: \href{mailto:wgong@g.harvard.edu}{wgong@g.harvard.edu}.}
}

\date{}
\maketitle

\begin{abstract}
Here we revisit one of the prototypical tasks for characterizing the structure of noise in quantum devices: estimating every eigenvalue of an $n$-qubit Pauli noise channel to error $\epsilon$. Prior work~\cite{chen2022quantum} proved no-go theorems for this task in the practical regime where one has a limited amount of quantum memory, e.g. any protocol with $\le 0.99n$ ancilla qubits of quantum memory must make exponentially many measurements, provided it is \emph{non-concatenating}. Such protocols can only interact with the channel by repeatedly preparing a state, passing it through the channel, and measuring immediately afterward. 

This left open a natural question: does the lower bound hold even for general protocols, i.e. ones which chain together many queries to the channel, interleaved with arbitrary data-processing channels, before measuring? Surprisingly, in this work we show the opposite: there is a protocol that can estimate the eigenvalues of a Pauli channel to error $\epsilon$ using only $O(\log n/\epsilon^2)$ ancilla qubits  and $\tilde{O}(n^2/\epsilon^2)$ measurements. In contrast, we show that any protocol with zero ancilla \--- even a concatenating one\--- must make $\Omega(2^n/\epsilon^2)$ measurements, which is tight.

Our results imply, to our knowledge, the first quantum learning task where logarithmically many qubits of quantum memory suffice for an exponential statistical advantage.
\end{abstract}

\clearpage

\tableofcontents

\clearpage

\section{Introduction}

One of the key challenges in demonstrating quantum advantage on near-term devices is that these devices are inherently noisy, from state preparation to gate application to measurement. To mitigate the effects of noise, it is essential to \emph{characterize} it. In this paper, we study this question in the context of \emph{Pauli channels}~\cite{watrous2018theory}, which constitute a widely studied family of noise models and encompass a range of natural channels like dephasing, depolarizing, and bit flips. Importantly, essentially any physically relevant noise model can be reduced to Pauli noise via twirling, making Pauli channel estimation a crucial sub-routine in numerous protocols for quantum benchmarking~\cite{wallman2016noise,erhard2019characterizing,magesan2011scalable,harper2020efficient,liu2021benchmarking,eisert2020quantum}. The central role it plays in both theory and practice has led to a long line of work on the complexity of this task~\cite{fujiwara2003quantum,hayashi2010quantum,chiuri2011experimental,ruppert2012optimal,collins2013mixed,flammia2020efficient,harper2021fast,flammia2021pauli,mohseni2008quantum,fawzi2023lower,chen2023learnability}.

Recently, there has been a surge of interest in understanding this and related quantum learning problems in settings where the quantum algorithms one is allowed to run are limited by the constraints faced by contemporary quantum devices~~\cite{bubeck2020entanglement,chen2022toward,chen2022tight,chen2022tight2,chen2022exponential,chen2021hierarchy,aharonov2022quantum,fawzi2023quantum,huang2021information,huang2022quantum,fawzi2023lower,liu2023memory}. In such settings, a number of works have shown nontrivial separations between the statistical rates one can achieve with and without such constraints. For instance, a series of works~\cite{huang2021information,chen2022exponential,huang2022quantum,chen2021hierarchy} showed that for certain natural state learning tasks, $n$ qubits of quantum memory are sufficient to achieve exponential speedups in the amount of quantum data needed for these tasks. Interestingly, it was also shown in~\cite{chen2022exponential} that $n$ qubits of quantum memory are \emph{necessary} for such speedups, in the sense that even with $0.99n$ qubits, the sample complexity for these tasks is exponential.

Following in the wake of these results, we study Pauli channel estimation in the same setting where one has bounded quantum memory~\cite{chen2022quantum,fawzi2023lower,flammia2020efficient,flammia2021pauli}. As we will see, while prior work has established a qualitatively similar picture to what is known in the state learning setting, the results in the present work will demonstrate a crucial difference between the two settings.

The starting point for our results is a recent paper by Chen et al.~\cite{chen2022quantum} that proved a number of upper and lower bounds in this regime. They showed that any protocol for estimating the eigenvalues of a Pauli channel using only $k$ ancillary qubits of quantum memory must make at least $\Omega(2^{(n-k)/3})$ measurements, provided the protocol is ``non-concatenating.'' Roughly speaking, this means that the algorithm is forced to perform a measurement immediately after every channel query (see \Cref{fig:model}). They also proved a qualitatively matching upper bound by giving a simple non-concatenating protocol that only makes $O(2^{n-k})$ measurements, suggesting that with a moderate amount of quantum memory, i.e. $(1 - o(1))n$ qubits, significant speedups are possible. Conversely, with a limited amount of quantum memory, e.g. $0.99$ qubits, it seems that exponentially many measurements are necessary, in close analogy to the aforementioned no-go results for state learning.

Intuitively however, the non-concatenating setting seems unnecessarily restrictive. One might wonder whether more general ``concatenating'' protocols could perform much better by querying the channel multiple times, with quantum post-processing between each query, before making a measurement (see \Cref{sec:problem} for formal definitions for this model). When $k = 0$, under this more general setting, Chen et al.~\cite{chen2022quantum} showed that $2^{n/3}$ measurements are still necessary.

Taken together, these two impossibility results imply that neither the ability to concatenate, nor the presence of a limited but nonzero amount of quantum memory (e.g. $k\le 0.99n$ qubits), is sufficient \emph{on its own} to achieve sample-efficient Pauli channel estimation. Indeed, there are various reasons to suspect that even protocols with both features suffer from exponential scaling:

\begin{enumerate}[leftmargin=*,itemsep=0pt,label=(\Alph*)]
\item \textbf{Existing separations for learning with quantum memory.} The only known separations (to our knowledge) between protocols with a small amount of quantum memory and protocols with zero quantum memory for any natural quantum learning task require enough qubits of memory to perform a Bell measurement or a swap test that is entangled with essentially all of the $n$ system qubits. Indeed, the upper bound in~\cite{chen2022quantum} falls under this umbrella, as do the aforementioned recent works on state learning~\cite{huang2022quantum, huang2021information}, as well as results on quantum process learning~\cite{caro2022learning} and purity testing~\cite{chen2022exponential,aharonov2022quantum}. Unfortunately, separations of this flavor are necessarily bottlenecked at requiring $k \ge (1 - o(1))n$ qubits of quantum memory.
    
\item \textbf{Concatenation and noise accumulation.} It is unclear whether being able to repeatedly query a Pauli channel before measurement buys us much. For example, if the channel has a spectral gap, then after a small number of repeated queries, the overall noise that the system experiences in the $n$ qubits on which the channel acts is close to completely depolarizing. This suggests that any efficient protocol for Pauli channel estimation should restrict to strategies that use limited or no concatenation.
\end{enumerate}
    
\noindent We therefore ask:
\begin{center}
    {\em Are exponentially many measurements necessary to estimate Pauli channels even using protocols that are concatenating and use at most $k = 0.99n$ qubits of quantum memory?}
\end{center}

\noindent The main result in our work is, surprisingly, an answer in the negative! In fact, we show that not only do concatenation and $k = 0.99n$ qubits of memory afford an \emph{exponential speedup}, but there is even a concatenating protocol that achieves $\poly(n)$ measurement complexity with just \emph{logarithmically} many qubits of memory:

\begin{theorem}[Informal, see Theorem~\ref{thm:ChanEstCon}]\label{thm:ChanEstCon_informal}
Given an arbitrary Pauli channel and error parameter $\epsilon$, there exists a $k=O(\log n/\epsilon^2)$-ancilla, non-adaptive, concatenating protocol that makes $\tilde{O}(n^2/\epsilon^2)$ measurements\footnote{Here, $\tilde{O}$ hides extra terms depending polylogarithmically on $n$} and learns all the eigenvalues of a given Pauli channel to within error $\epsilon$ with high probability.
\end{theorem}

\noindent Our protocol provides intuitive counterpoints to (A) and (B) above. As we explain in Section~\ref{sec:overview}, the algorithm circumvents (A) by utilizing the $O(\log n/\epsilon^2)$ qubits of quantum memory not to perform any kind of swap or Bell measurement on a subsystem, but instead to perform a certain \emph{purification} procedure that amplifies the difference between how the system behaves under the unknown noise channel versus under a fixed hypothesis channel. As for (B), the algorithm crucially exploits the small but nonzero amount of quantum memory in order to ``pump out entropy.'' The point is that while noise might accumulate in the $n$ system qubits from consecutive queries to the channel, the ancillary qubits do not experience noise and can thus be used to safely record information that is learned over the course of the protocol.

% \noindent It is helpful to contrast this with the analogous picture for quantum \emph{state} learning, where the task most directly analogous to Pauli channel estimation is that of \emph{Pauli shadow tomography}: given copies of an unknown state $\rho$, estimate $\Tr(P_a \rho)$ to error $\epsilon$ for all Pauli matrices $P_a$. As there is no notion of \emph{concatenation} in the state setting, there is only one canonical notion of the sample complexity of a protocol: the number of copies of $\rho$ that are measured. In this setting, it was shown by~\cite{chen2022exponential} that $(1 - o(1))n$ qubits of quantum memory are necessary to achieve sub-exponentially sample complexity. In other words, $k = n$ qubits of quantum memory marks a ``phase transition'' in the sample complexity for Pauli shadow tomography estimation. In contrast, Theorem~\ref{thm:ChanEstCon_informal} shows that for Pauli channel estimation, this phase transition happens much earlier, even at $k = \Theta(1)$!

An important caveat of our result is that although the protocol makes $\poly(n)$ measurements, each measurement is performed after a sequence of exponentially many queries to the unknown channel. Naturally, one might wonder whether there exist better protocols which are efficient not just in terms of measurement complexity, but also in terms of query complexity. Unfortunately, we prove that for Pauli channel estimation, this is not possible:

\begin{theorem}[Informal, see \Cref{thm:AdaptConKLow}]\label{thm:AdaptConKLow_informal}
	Any (possibly adaptive and possibly concatenating) protocol for estimating the eigenvalues of an unknown Pauli channel to within constant error requires $\Omega(2^{(n-k)/3})$ queries.
\end{theorem}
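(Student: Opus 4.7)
The plan is to prove the lower bound by reducing to a hypothesis-testing problem and then carefully bounding the chi-squared divergence between the learner's transcript distributions under the two hypotheses. The main new ingredient, compared to the non-concatenating analysis of~\cite{chen2022quantum}, is a bound on how channel queries interleaved with adversarial unitaries within a single round can propagate the ``Pauli signal'' into detectable statistics on only $k$ ancilla qubits.

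I would first reduce the estimation task to a two-point distinguishing problem: the null hypothesis is that the channel is the completely depolarizing channel $\mathcal{N}_0$, and the alternative draws $P$ uniformly from the $4^n-1$ non-identity Paulis and takes the channel to be $\mathcal{N}_P$, the channel with $\lambda_I=1$, $\lambda_P=\epsilon$ for some constant $\epsilon$, and all other eigenvalues zero. Any $\epsilon/2$-accurate eigenvalue estimator distinguishes these with high probability, so a lower bound for this test transfers to the estimation problem. Next, I would formalize a $k$-ancilla, adaptive, concatenating protocol as a rooted tree: internal nodes are labeled by quantum instruments on the $(n+k)$-qubit register consisting of $T_i$ channel calls interleaved with arbitrary (history-dependent) unitaries, followed by a POVM, and branches correspond to measurement outcomes. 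The total number of queries along any root-to-leaf path is $N = \sum_i T_i$, and each leaf carries a well-defined transcript probability under each hypothesis.

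The heart of the argument is a per-round bound on the chi-squared contribution. Using the identity $(\mathcal{N}_P \otimes \mathrm{id})(\rho) = \tfrac{1}{2^n} I \otimes \Tr_{\mathrm{sys}}[\rho] + \tfrac{\epsilon}{2^n}\, P \otimes \Tr_{\mathrm{sys}}[(P\otimes I)\rho]$, one expands the effective super-operator for a round of $T$ concatenated queries as a polynomial in $\epsilon$ of degree at most $T$, whose coefficients are sums over placements of $P$ among the $T$ query slots, nested between conjugations by the intermediate unitaries. The leading contribution to the transcript likelihood ratio then factors into Hilbert--Schmidt inner products of the form $\Tr[(P\otimes A)\sigma]$ with operators $A$ acting on the $2^k$-dimensional ancilla. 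Averaging $\chi^2(\mathbb{E}_P p_P \,\|\, p_0)$ over uniform $P$ and invoking Pauli orthogonality collapses most cross terms, leaving a per-round bound of the form $\poly(T_i) \cdot 2^{k-n}$. Aggregating along any root-to-leaf path and taking a worst case over intermediate unitaries yields $\chi^2 \leq \poly(N) \cdot 2^{k-n}$, and Le Cam's two-point method then forces $N = \Omega(2^{(n-k)/3})$ once the exponent of $N$ in the polynomial is pinned down.

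The main obstacle is precisely this per-round bound. For non-concatenating protocols, each round applies the channel once, so $P$ appears at most linearly in the effective super-operator and Pauli averaging is immediate. When $T_i > 1$, the effective super-operator is a high-degree polynomial whose coefficients are nested conjugates $U_1^\dagger P U_1, U_2^\dagger P U_2, \ldots$ on the $(n+k)$-qubit register. The delicate point is to show that these nested conjugates, even under adversarial $U_i$, cannot ``route'' more than $2^k$ independent modes of Pauli amplitude into the measurement outcome statistics: intuitively, the only non-depolarized information leaving the system lives on an ancilla of dimension $2^k$, and the Hilbert--Schmidt pairings after the $P$-average collapse onto this register. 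Making this precise requires careful combinatorial bookkeeping of the placements of $P$ within the $T_i$-query expansion together with the $2^{-n}$ suppression from tracing against a random Pauli, and it is here that the $2^{(n-k)/3}$ exponent emerges from balancing the polynomial blow-up against the $2^{k-n}$ per-round factor.
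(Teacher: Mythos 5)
Your reduction to a spike-detection hypothesis test and the tree/transcript formalism match the paper's setup, but the core analytical step you propose — expanding the round-level super-operator as a degree-$T_i$ polynomial in the Pauli signal, averaging the chi-squared divergence over $P$, and then controlling the combinatorics of $P$-placements across the $T_i$ slots — is precisely the step you leave unfilled, and the hand-wave that a $\poly(T_i)\cdot 2^{k-n}$ per-round bound will emerge does not survive scrutiny. Once $P$ can appear at several of the $T_i$ query slots, separated by adversarial data-processing, the cross terms after the $P$-average do not collapse in any obvious way; nothing you describe controls the potentially exponential (in $T_i$) number of placements, and the $1/3$ exponent does not drop out of this accounting.

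The paper sidesteps the degree-$T$ expansion entirely with a hybrid (telescoping) argument. It defines intermediate distributions $p_{\bm m}$ where exactly the last $m_t$ channel calls in round $t$ are the spiked channel $\Lambda_a$ and the rest are $\Lambdadep$, and bounds
$\mathrm{TV}(p_1,p_2)$ by the sum of $N$ (the total number of channel queries) pairwise TV distances, each pair differing in \emph{exactly one} channel query. This reduces the whole problem to a single-query comparison: the difference between $\Lambdadep$ and $\Lambda_a$ applied at one slot, with everything before it absorbed into the input state and everything after it pushed (via a Stinespring-dilation/adjoint trick) onto the effective POVM element. Writing the pure input and effective-POVM pure state in a basis of the $n$-qubit system tensored with the $k$-qubit ancilla, one finds that the single-query likelihood-ratio deviation is exactly $-\Tr(P_a C_j^\dagger P_a C_j)/\Tr(C_j^\dagger C_j)$ for a matrix $C_j$ of rank at most $2^k$, and the second moment over $a$ is bounded by $2^{n+k}$ via Cauchy--Schwarz plus the rank constraint. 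A Markov-type split of the indices $a$ into those where this ratio exceeds $(2^k/(2^n-1))^{1/3}$ and those where it does not gives a per-hybrid TV bound of $2(2^k/(2^n-1))^{1/3}$, and summing over the $O(N)$ hybrids yields $\mathrm{TV}(p_1,p_2) \lesssim N\cdot(2^k/2^n)^{1/3}$, hence $N=\Omega(2^{(n-k)/3})$. In short, the missing ingredient in your write-up is this hybridization idea: it replaces your intractable high-degree expansion with a linear number of single-query comparisons, and it is where the $2^{(n-k)/3}$ scaling actually comes from.
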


\noindent We remark that this is a strict improvement upon the aforementioned $\Omega(2^{(n-k)/3})$ lower bound from~\cite{chen2022quantum}, as their lower bound applied only to non-concatenating protocols, for which there is no distinction between query complexity and measurement complexity. As a bonus, we also tighten the $2^{n/3}$ lower bound of~\cite{chen2022quantum} for concatenating protocols with $k = 0$ ancillas to get an optimal lower bound that works for \emph{any} $\epsilon$:

\begin{theorem}[Informal, see \Cref{thm:AdaptConLow}]\label{thm:AdaptConLow_informal}
    For any $0\le \epsilon < 1$, any (possibly adaptive and possibly concatenating) protocol for estimating the eigenvalues of an unknown Pauli channel to within error $\epsilon$ using $k = 0$ ancillas requires $\Omega(2^n/\epsilon^2)$ queries.
\end{theorem}

\subsection{Related works} 
Pauli channel estimation has been studied in a variety of settings. For learning the eigenvalues of a Pauli channel, \cite{chen2022quantum} studied the number of measurements needed by protocols under different assumptions, such as whether adaptive strategies, concatenating measurements, or quantum memory are allowed (see Section~\ref{sec:problem} for a formal definition of these notions). We defer a formal comparison between our results and theirs to \Cref{tab:Rev}. 

\begin{table}[t!]
\centering
\begin{tabular}{c|c|c|c|l} 
% \hline
\textbf{Quantum memory}  & \textbf{Adaptive} & \textbf{Concatenating} & \textbf{Accuracy} &  \textbf{Lower bound}\\
\hline\hline
No & No & No & $1/2$ & $T=\Omega(n2^{n})$ \\
\hline
No & Yes & Yes & $1/2$ & $T=\Omega(2^{n/3})$ \\
\hline
No & Yes & Yes & $\epsilon\in(0,1]$ & $T=\Omega(2^{n}/\epsilon^2)$ [\textbf{*}] \\
\hline
$k$-qubit & No & No & $1/2$ & $T=\Omega(n2^{n-k})$ \\
\hline
$k$-qubit & Yes & No & $1/2$ & $T=\Omega(2^{(n-k)/3})$ \\
\hline
$k$-qubit & Yes & Yes & $1/2$ & $N=\Omega(2^{(n-k)/3})$ [\textbf{*}]\\
\hline
$n$-qubit & No & No & $1/2$ & $T=\Omega(n)$
\end{tabular}
\caption{Lower bounds for Pauli channel eigenvalue estimation in~\cite{chen2022quantum} and this work. Our bounds are denoted by asterisks. Here, $T$ is the number of measurements required and $N$ is the total number of queries to the unknown Pauli channel. See Section~\ref{sec:problem} for formal definitions of the terms in the table. The entries in the ``Accuracy'' column refer to whether the corresponding result is a lower bound for the task of estimating the eigenvalues to within constant additive error, or even to within $\epsilon$ additive error for arbitrary $\epsilon$ (note that lower bounds of the latter type are strictly more general).}
\label{tab:Rev}
\end{table}

For learning Pauli error rates $\bm{p}$, there have also been several works providing algorithms~\cite{flammia2021pauli,flammia2020efficient} and lower bounds~\cite{fawzi2023lower} for adaptive and non-adaptive protocols. In this setting, whether quantum memory can yield advantages for this task, either in the number of measurements made or the number of queries to the unknown channel, remains open. We note that the task of Pauli error rate estimation is orthogonal to the thrust of our work.

Finally, we note that our work is part of a larger body of recent results exploring how near-term constraints on quantum algorithms for various quantum learning tasks affect the underlying statistical rates, see e.g.~\cite{bubeck2020entanglement,chen2022toward,chen2022tight,chen2022tight2,chen2022exponential,chen2021hierarchy,aharonov2022quantum,fawzi2023quantum,huang2021information,huang2022quantum,fawzi2023lower,liu2023memory}. A review of this literature is beyond the scope of this work, and we refer the reader to the survey~\cite{anshu2023survey} for a more thorough overview.

\paragraph{Concurrent work.} 
During the preparation of an earlier version of this manuscript~\cite{chen2023futility} which is subsumed by the present work, we were made aware of the independent and concurrent work of~\cite{chen2023tight}, which also studied the complexity of Pauli channel eigenvalue estimation. Their main result is closely related to our Theorem~\ref{thm:AdaptConLow_informal}: they show an $\Omega(2^n/\epsilon^2)$ lower bound for algorithms without quantum memory. This result and our Theorem~\ref{thm:AdaptConLow_informal} are incomparable and offer complementary strengths. Our result holds for the full range of $\epsilon \in (0,1]$ whereas theirs holds for $\epsilon \in (0,1/6]$. On the other hand, their bound has a more favorable constant factor and applies to a more general family of \emph{classical-memory assisted} protocols. Lastly, we remark that our main result, the upper bound in Theorem~\ref{thm:ChanEstCon_informal}, is unique to our work, as is Theorem~\ref{thm:AdaptConKLow_informal}.

\subsection{Outlook}

The theorem established in this work shows that exponential statistical advantage is possible for Pauli channel estimation by using concatenation and only logarithmically many qubits of quantum memory. Looking ahead, several open problems remain to be answered. A direct open question is whether this task can be solved using polynomially many measurements and even fewer ancilla qubits, say, $k=O(1)$. If not, what is the tight quantum memory requirement for arbitrary polynomial measurement protocols? It is also interesting to resolve whether the $\epsilon$-dependence for $k$ in this work is necessary and explore the tradeoff between quantum memory and measurement complexity for concatenated protocols. Another important question is whether we can obtain similar exponential statistical speedups using $o(n)$ quantum memory for other quantum learning tasks. For instance, quantum state learning requires $(1-o(1))n$ ancilla qubits to achieve an exponential reduction in the measurement complexity as we can not utilize the fresh copies of the state in a ``concatenated" fashion. However, if we have some channel that can prepare the unknown state, i.e. a state-preparation oracle as in~\cite{van2023quantum}, we can query these oracles using concatenated protocols. Can we obtain a similar quantum speedup in this setting?

\subsection{Roadmap} 
In \Cref{sec:overview}, we give a high-level overview of the techniques we use to prove the results in this work. In \Cref{sec:Prelim}, we give a formal description of the Pauli channel estimation problem we consider and provide some technical preliminaries. In Section~\ref{sec:ConMeasProtocol}, we provide a concatenated algorithm that solves a simpler version of the Pauli channel estimation problem, which we call \emph{Pauli spike detection}, using only $k=O(1)$ ancilla and a single measurement. In Section~\ref{sec:ChanEstCon}, we significantly extend this analysis to prove Theorem~\ref{thm:ChanEstCon_informal} and provide the corresponding algorithm. In the remaining sections we prove our lower bounds: in Section~\ref{sec:MeasComp}, we prove Theorem~\ref{thm:AdaptConLow_informal} and in Section~\ref{sec:SampComp}, we prove Theorem~\ref{thm:AdaptConKLow_informal}. In Appendix~\ref{app:Shadowlow}, we use the techniques for proving Theorem~\ref{thm:AdaptConLow_informal} to prove a new lower bound for shadow tomography~\cite{aaronson2018shadow,huang2020predicting}.

\section{Overview of techniques}
\label{sec:overview}
Here, we give a high-level discussion of the techniques for our main results. To give the reader intuition for our protocol for Theorem~\ref{thm:ChanEstCon_informal}, in Section~\ref{sec:OverviewLearnSingle} we consider an important special case. In Section~\ref{sec:OverviewReduction} we explain the series of reductions we perform to extend the ideas for this special case to the general case. Finally, in Section~\ref{sec:lowerbounds}, we overview the key ideas in the proofs of our lower bounds.

\subsection{Learning single component Pauli channels}\label{sec:OverviewLearnSingle}
\begin{figure}[ht]
    \centering
    \includegraphics[width=0.93\textwidth]{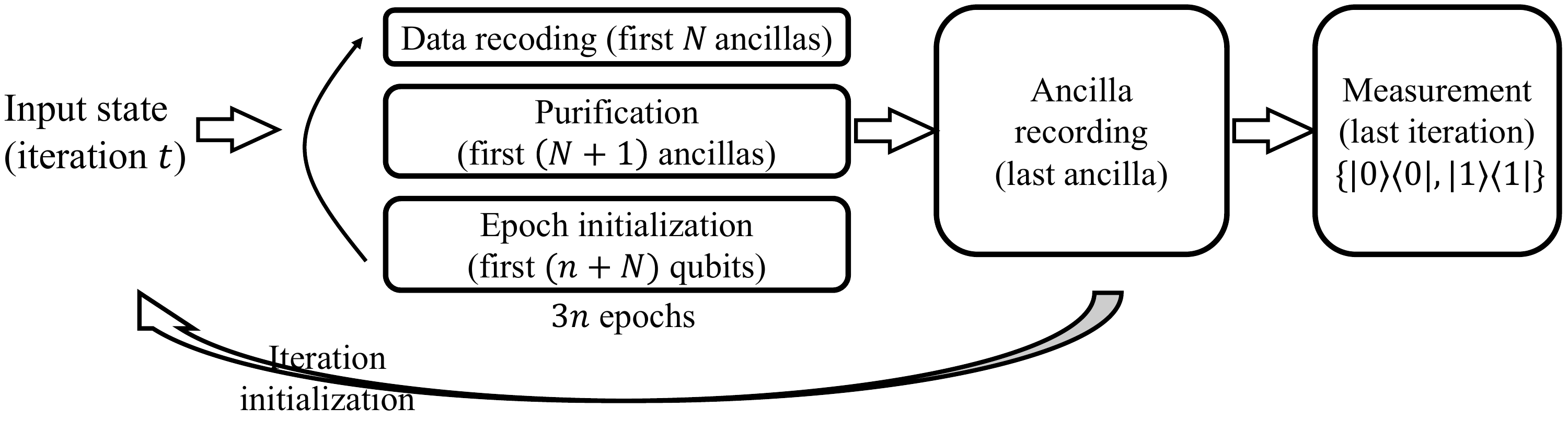}
    \caption{The overview of the algorithm for learning single component Pauli channels: the illustrative flow figure for the whole algorithm during iteration $t$. There are in total $4^n-1$ iterations as we have to enumerate all the possible Pauli strings $P_a$'s. A brief illustration of the algorithm is provided in \Cref{sec:OverviewLearnSingle} and the detailed algorithm is provided in \Cref{algo:LearnSingle}. }
    \label{fig:LearnSingle}
\end{figure}
\begin{table}[ht]
    \centering
    \begin{tabular}{|c|c|c|c|c|}  % repeats {c|} 18 times
    \hline
    \multirow{2}{*}{Operation} & \multirow{2}{*}{Output} & \multirow{2}{*}{Case 1} & \multicolumn{2}{|c|}{Case 2} \\ 
    \cline{4-5}
    & & & iter. $t\neq a$ & iter. $t=a$\\
    \hline
    \hline
    Input state & First $n+N+1$ & \multicolumn{3}{|c|}{$\frac{1}{2^n}(I+P_t)\otimes\ket{0}\bra{0}^{\otimes(N+1)}$}\\
    \hline
    \multirow{2}{*}{\makecell{Data recording \\($m$-th epoch)}} & \multirow{2}{*}{First $N$ ancilla} & \multirow{2}{*}{$I/2$} & \multirow{2}{*}{$I/2$} & \multirow{2}{*}{$(1-\lambda_a)\frac{I}{2}+\lambda_a\ket{0}\bra{0}$}\\
    &&&&\\
    \hline
    \multirow{2}{*}{\makecell{Purification \\($m$-th epoch)}} & \multirow{2}{*}{$(N+1)^\text{th}$ ancilla} & \multirow{2}{*}{$\text{diag}\left(\frac{1}{2^m},1-\frac{1}{2^m}\right)$} & \multirow{2}{*}{$\text{diag}\left(\frac{1}{2^m},1-\frac{1}{2^m}\right)$} & \multirow{2}{*}{$\left(1-\frac{1}{n}\right)^m\ket{0}\bra{0}+\rho'$}\\
    &&&&\\
    \hline
    \multirow{2}{*}{\makecell{Epoch initial \\($m$-th epoch)}} & \multirow{2}{*}{First $n+N$} & \multicolumn{3}{|c|}{\multirow{2}{*}{$\frac{1}{2^n}(I+P_t)\otimes\ket{0}\bra{0}^{\otimes N}$}}\\
    &&\multicolumn{3}{|c|}{}\\
    \hline
    \multirow{2}{*}{\makecell{Purification\\ ($3n$-th epoch)}}& \multirow{2}{*}{\makecell{Prob. $\ket{0}\bra{0}$\\$(N+1)^\text{th}$ ancilla}} & \multirow{2}{*}{$\frac{1}{8^n}$} & \multirow{2}{*}{$
    \frac{1}{8^n}$} & \multirow{2}{*}{$\geq e^{-3}$}\\
    &&&&\\
    \hline
    \multirow{3}{*}{Ancilla record} & \multirow{3}{*}{\makecell{Multi. factor\\ on $\ket{0}\bra{0}$ of\\$(N+2)^{\text{th}}$ ancilla}} & \multirow{3}{*}{$1-\frac{1}{8^n}$} & \multirow{3}{*}{$1-\frac{1}{8^n}$} & \multirow{3}{*}{$\leq\left(1-e^{-3}\right)$}\\
    &&&&\\
    &&&&\\
    \hline
    \multirow{2}{*}{\makecell{Prob. $\ket{1}\bra{1}$ \\($t=4^n-1$)}} & \multirow{2}{*}{$(N+2)^\text{th}$ ancilla} & \multirow{2}{*}{$\leq 2^{-n}$} & \multicolumn{2}{|c|}{\multirow{2}{*}{$\geq e^{-3}$}}\\
    &&&\multicolumn{2}{|c|}{}\\
    \hline
    \end{tabular}
    \caption{The overview of the algorithm for learning single component Pauli channels: the expected outcomes for the two cases after each step (channel) of the algorithm. In the algorithm, we choose $N=O(\log n/\epsilon^2)$ and use $k=N+2$ ancilla qubits. In the purification procedure of the $m$-th epoch, $\text{diag}(a,b)$ represents the density matrix of the state $a\ket{0}\bra{0}+b\ket{1}\bra{1}$ and $\rho'$ is the remaining part of the state which contains a non-negative coefficient for the entry $\ket{0}\bra{0}$. After the purification channel after the $3n$-th epoch, we consider the coefficient for the entry $\ket{0}\bra{0}$ (i.e., Probability of $\ket{0}\bra{0}$) at the $(N+1)$-th ancilla. We then record this information to the $(N+2)$-th ancilla qubit using the ancilla recording channel, which results in a multiplicative factor on the coefficient of $\ket{0}\bra{0}$. Finally, we consider the probability of obtaining result $\ket{1}\bra{1}$ on the last ancilla qubit after the last iteration.}
    \label{tab:LearnSingle}
\end{table}

As a warmup to give the reader a sense for the diverse ingredients that go into our final protocol for Pauli channel estimation, in this subsection we zoom in on a special case: suppose that the unknown channel only has a single nontrivial eigenvalue, that is, the channel is of the form
\begin{equation}
    \Lambda_a \triangleq \frac{1}{2^n}[I\Tr(\cdot)+ \lambda_aP_a\Tr(P_a(\cdot))]\,. \label{eq:LamAdef}
\end{equation}
While this might seem rather restrictive, it turns out many of the ingredients needed to handle the general case will already be present in the protocol we devise for this special case.

In this setting, the main difficulty is to identify $a$, after which estimating $\lambda_a$ to within sufficient accuracy is trivial. With a simple binary search argument one can further reduce the \emph{estimation} task of identifying $a$ to a \emph{distinguishing} task. We are promised that the unknown channel falls under one of two scenarios:
\begin{itemize}
    \item The unknown channel is the depolarization channel $\Lambdadep=\frac{1}{2^n}I\Tr(\cdot)$, or
    \item The unknown channel is sampled uniformly from a set of channels $\{\Lambda_a\}_{a\in\Z_2^{2n}\backslash \{0\}}$, where $\Lambda_a$ is given by Eq.~\eqref{eq:LamAdef} for some $|\lambda_a| > \epsilon$,
\end{itemize}
and the goal is simply to output which scenario we are in (see~\Cref{prob:SingleComp} for a formal definition). As we show in \Cref{lem:LearnSingle}, this task is solvable with $k=O(\log n/\epsilon^2)$ ancilla qubits and channel concatenation using only $O(n)$ measurements. We now overview the proof of~\Cref{lem:LearnSingle}. The flow chart of the algorithm is provided in \Cref{fig:LearnSingle} and the expected outcomes after each ingredient of the algorithm is given in \Cref{tab:LearnSingle}

\paragraph{Sequentially guessing Pauli strings.} One basic idea in our protocol is to sequentially guess Pauli strings $P_t$'s for $t=1,...,4^n-1$ over one long sequence of concatenations before measuring. Note that because the measurement is only performed at the end of this sequence, it isn't clear how to actually ascertain at any iteration whether one has correctly guessed $t = a$. Indeed, the entire challenge will be to somehow record the event that this happens in the state of the ancillas so that 1) when we reach $t = a$, the state of the ancillas changes sharply compared to the state they would be in if the unknown channel were simply the depolarization channel, and 2) this altered state persists for the rest of the concatenations.

At first blush, one might wonder whether the following observation suffices. Note that when we pass the state $(I+P_t)/2^n$ through the unknown channel and measure it along the subspace of all positive eigenstates of $P_t$ and all negative eigenstates of $P_t$, the probabilities of the two measurement outcomes are 
\begin{equation}\label{eq:ProbTA}
    \Bigl(\frac{1}{2},\frac{1}{2}\Bigr) \ \ \text{versus} \ \ \Bigl(\frac{1+\lambda_a\cdot\mathds{1}[a=t]}{2},\frac{1-\lambda_a\cdot\mathds{1}[a=t]}{2}\Bigr)
\end{equation}
when the unknown channel is $\Lambdadep$ and $\Lambda_t$ respectively.

It turns out that when $\lambda_a = 1$, which corresponds to what we call the \emph{Pauli spike detection task} (defined in \Cref{prob:SpikeDetect}), this is already essentially enough to give an efficient protocol, as we now explain.

We consider using the ancilla qubits to keep track of whether one has guessed the ``spike'' $a$ yet. Roughly speaking, because $\lambda_a$ is so large, as soon as one has correctly guessed $a$, it is not hard to create a large difference between the state of the ancillary qubits compared to the state they would have been in if the channel were the depolarization channel, and this difference can then be detected with a measurement at the end (see the proof for \Cref{thm:SpikeDetCon} and \Cref{thm:Charsingle} in \Cref{sec:ConMeasProtocol}). In particular, we use a family of \emph{control channels} (formally defined in \Cref{exp:ControlChan}) to record the information to ancilla qubits. This control channel either keeps the target ancilla unchanged or prepares it in state $\ket{1}\bra{1}$, and the probabilities for these two cases are given by the probabilities in Eq.~\eqref{eq:ProbTA}. Initializing the target ancilla in $\ket{0}\bra{0}$ at the beginning of the iteration, we repeatedly implement the control channel $O(n)$ times \---- we refer to each of these $O(n)$ rounds as an ``epoch." If the unknown channel is the depolarization channel or $t\neq a$, the final state of the target ancilla will have an exponentially small coefficient for $\ket{0}\bra{0}$. At $t=a$, however, the state of the target ancilla will be kept unchanged in $\ket{0}\bra{0}$. This creates the large difference between the state of the ancillary qubits in two cases as desired.

After creating the differences on one target ancilla qubit in the two cases when we reach $t=a$ for $\Lambda_a$ and when $t\neq a$ for $\Lambda_a$ or the unknown channel is $\Lambdadep$, we record this difference when we are enumerating all $t=1,...,4^n-1$. The idea here is to employ another control channel that acts on another ancilla qubit 
%(we denote it by the ``final channel \sitan{is this terminology used later in the proof?})
controlled by the target ancilla. If in some iteration $t$ we have $t=a$, we will record a sharp change of $O(1)$ distance on the final ancilla, otherwise, we will only perform an exponentially small perturbation. Finally, we measure the final ancilla qubit in the computational basis to distinguish the two cases.

\paragraph{Handling smaller $\lambda_a$.}

Unfortunately, when $\lambda_a$ is bounded away from $1$, the simple strategy outlined above for $\lambda_a = 1$ for creating a large difference between ancillary states under the two scenarios breaks down. Indeed, if we simply initialize the target ancilla to $\ket{0}\bra{0}$ at the beginning of the iteration and repeatedly implement the control channel for $O(n)$ epochs, we will always get an exponentially small $\ket{0}\bra{0}$ component in the final state of the target ancilla in both scenarios. Intuitively, without additional processing, the signal from the nontrivial eigenvalue gets exponentially damped.

To circumvent this, we make the following important modification to the above sequential guessing procedure. Instead of repeatedly applying the control channel to a single ancilla, in each epoch we apply the control channel $N=O(\log n/\epsilon^2)$ times and record the distribution in \eqref{eq:ProbTA} on $N$ \emph{different} ancilla qubits such that each ancilla is in
\begin{equation*}
    \frac{I}{2}\qquad\text{versus} \qquad \frac{1+\lambda_a\cdot\mathds{1}[a=t]}{2}\ket{0}\bra{0}+\frac{1-\lambda_a\cdot\mathds{1}[a=t]}{2}\ket{1}\bra{1}\,,
\end{equation*}
depending on whether the unknown channel is equal to $\Lambdadep$ or $\Lambda_a$ for $a\neq t$, or equal to $\Lambda_t$, respectively. 

Our key step is to then apply a special \emph{state purification} on these $N$ copies to amplify the signal introduced by the nontrivial eigenvalue.\footnote{The setting here is different from the standard state purification setting~\cite{cirac1999optimal,childs2023streaming}, where one would need linearly instead of logarithmically many copies. Roughly, the reason is that the rank-1 component that we are trying to purify towards is \emph{known} in our setting. See \Cref{rem:DiffPurification} for a discussion.} This amplification ensures that after $3n$ epochs, we can guarantee that if the unknown channel is the depolarization channel or $t\neq a$, the final state of the target ancilla will have an exponentially small coefficient for $\ket{0}\bra{0}$. However, if $t=a$, the state of the target ancilla will have a constant coefficient in $\ket{0}\bra{0}$. This again produces the large difference between the state of the ancilla qubits in the two cases needed to solve the distinguishing task.

\subsection{From learning a single component to learning arbitrary Pauli channels}\label{sec:OverviewReduction}

\begin{figure}[ht]
    \centering
    \includegraphics[width=0.9\textwidth]{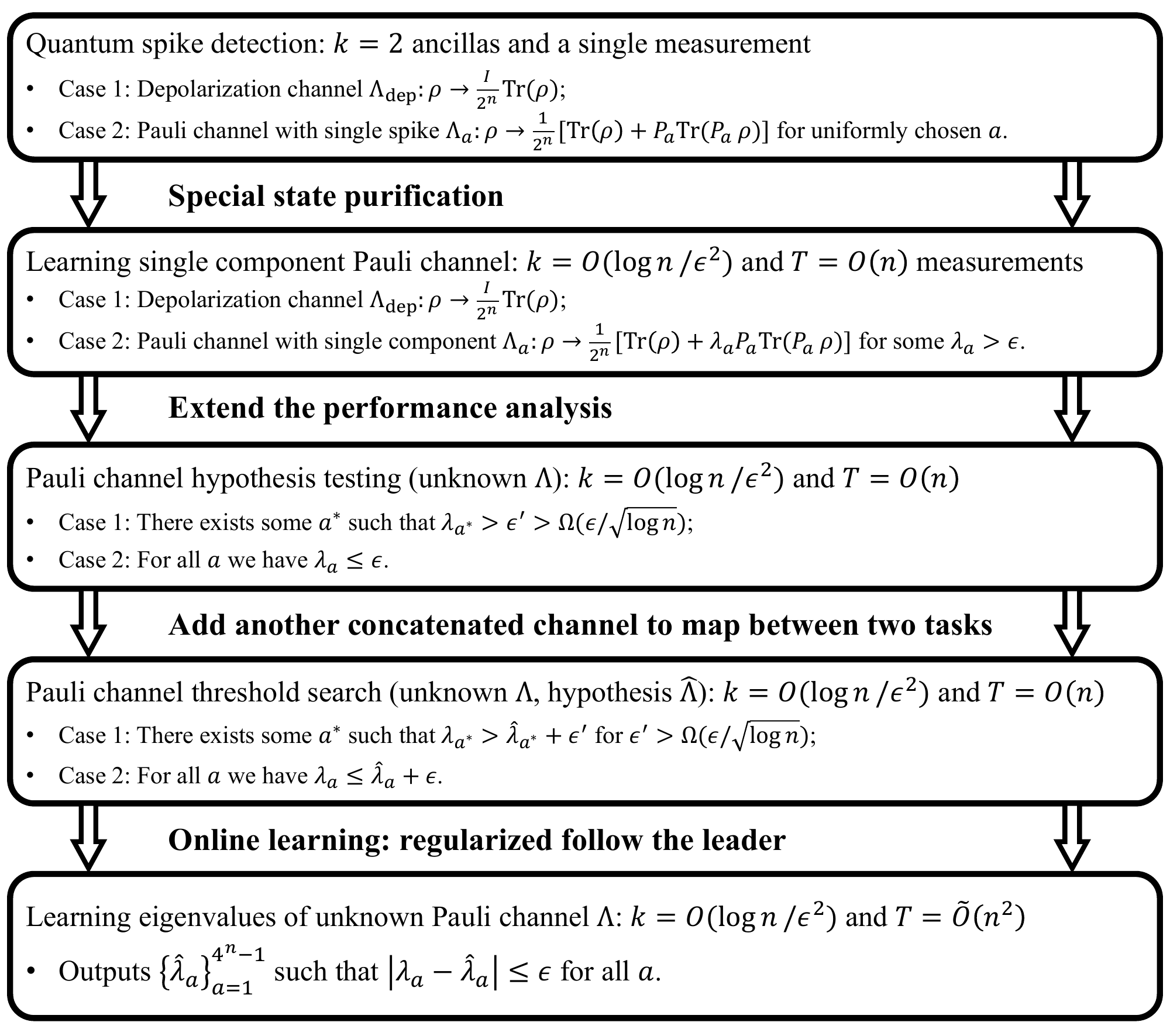}
    \caption{The illustrative figure for the reduction process from the algorithm for Pauli spike detection in \Cref{thm:SpikeDetCon} to the algorithm for estimating eigenvalues for a Pauli channel in \Cref{thm:ChanEstCon}. The technical overview for the first step reduction from Pauli spike detection to learning single component Pauli channel is provided in \Cref{sec:OverviewLearnSingle} and the remaining reductions are introduced in \Cref{sec:OverviewReduction}.}
    \label{fig:reduction}
\end{figure}
We further implement a series of reductions to solve the general Pauli channel eigenvalue estimation task starting from the single component Pauli channel learning protocol outlined in the previous section. We provide an illustration of the reductions in~\Cref{fig:reduction} about the reduction process from the algorithm for Pauli spike detection in \Cref{sec:ConMeasProtocol} (\Cref{thm:SpikeDetCon}) to the algorithm for estimating eigenvalues for a Pauli channel in \Cref{sec:ChanEstCon} (\Cref{thm:ChanEstCon}). 

We first show that the analysis of the protocol mentioned above can solve the more general question of \emph{Pauli channel hypothesis testing} (formally defined in \Cref{prob:PauliHypoTest}). The goal here is to determine whether the unknown channel deviates significantly from $\Lambdadep$ on some eigenvalue, but one is no longer promised that there is exactly one nontrivial eigenvalue as in the previous section. We prove that this task can be solved by adapting the analysis of the protocol for learning a single component Pauli channel (see~\Cref{lem:PauliHypoTest}). On the one hand, we can prove that when there are some large eigenvalues of value $\geq\epsilon$, we can always detect its existence using the protocol from the previous section. Conversely, by using anti-concentration for the binomial distribution, we prove that there are no ``false positives,'' i.e. when the protocol behaves as if it is in the second scenario of the single-component task, we are guaranteed that there is some eigenvalue of value at least $\epsilon'=\Theta(\epsilon/\sqrt{\log n})$.

Next, we go from Pauli channel hypothesis testing to \emph{Pauli channel threshold search} (defined in \Cref{prob:ThresholdSearch}). The goal for this task is as follows: given a description of a hypothesis Pauli channel $\hat{\Lambda}=\frac{1}{2^n}[I\Tr(\cdot)+\sum_{a=1}^{4^n-1}\hat{\lambda}_aP_a\Tr(P_a(\cdot))]$, output either
\begin{itemize}
    \item an $a^*\in\{1,...,4^n-1\}$ such that $\lambda_{a^*}>\hat{\lambda}_{a^*}+\epsilon'$ for some $\epsilon'=\Theta(\epsilon/\sqrt{\log n})$, or
    \item for all $a$'s we have $\lambda_a\leq\hat{\lambda}_a+\epsilon$,
\end{itemize}
with high success probability. When $\hat{\lambda}_a = 0$ for all $a$, this is essentially the ``search'' version of the Pauli channel hypothesis testing question. 

\begin{remark}
    This task is closely related to the quantum threshold search problem~\cite{buadescu2021improved} studied in the context of shadow tomography of quantum states. However, we note that our strategy for solving this problem is significantly different from existing approaches for shadow tomography: we are only allowed to use $O(\log n/\epsilon^2)$ qubits of quantum memory (as shown in \Cref{lem:ThresholdSearch}), while e.g. the protocol in~\cite{buadescu2021improved} for shadow tomography requires joint measurements on multiple copies of the unknown state, which requires $\omega(n)$ qubits of memory.
    % \footnote{More precisely, for learning $M$ observables of an unknown quantum state, the number of copies needed by the best-known algorithms has polylogarithmic dependence on $M$. The Pauli channel threshold search can be regarded as the natural channel analog of Pauli shadow tomography.}. 
    That said, the settings are incomparable: in our case, we can perform channel concatenation to suppress the measurement complexity, whereas there is no analogous notion for learning states.
\end{remark}

\noindent We construct a reduction using one channel concatenated after each query to $\Lambda$ to reduce the Pauli channel threshold search problem to a Pauli channel hypothesis testing problem. Suppose $\hat{\lambda}_t\geq 0$. Specifically, given a quantum state $(I+\sigma_tP_t)/2^n$ for some $\sigma_t\in[-1,1]$, we construct a channel that maps $\sigma_t\to \sigma_t/(1+\hat{\lambda_t})-\hat{\lambda}_t/(1+\hat{\lambda_t})$. This channel is a linear map that maps $\sigma_t$ to $0$ if $\sigma_t=\hat{\lambda_t}$ and $\sigma_t$ to some value at least $\epsilon/(1+\hat{\lambda}_t)\geq\epsilon/2$ if $\sigma_t\geq\hat{\lambda}_t+\epsilon$. This reduces the Pauli channel threshold search problem with parameters $(\epsilon,\epsilon')$ to a Pauli channel hypothesis testing with parameters $(\epsilon,\epsilon'/2)$. We can also construct a channel that maps the two tasks for $\hat{\lambda}_t<0$.

Once we have an algorithm for threshold search, we can naturally feed it into an existing framework based on online learning~\cite{aaronson2018online,buadescu2021improved}. Roughly, this framework envisions the following interaction between a ``student'' who tries to refine her estimate of the unknown channel, and a ``teacher'' who identifies directions for improvement using threshold search. In each round of interaction, the student guesses some hypothesis channel $\hat{\Lambda}_t=\frac{1}{2^n}[I\Tr(\cdot)+\sum_{a=1}^{4^n-1}\hat{\lambda}_{a,t}P_a\Tr(P_a(\cdot))]$. Denote the true unknown channel by $\Lambda=\frac{1}{2^n}[I\Tr(\cdot)+\sum_{a=1}^{4^n-1}\lambda_aP_a\Tr(P_a(\cdot))]$. The student then receives some index $a_t\in\{1,...,4^n\}$. After receiving the hypothesis channel $\hat{\Lambda}_t$ and the index $a_t$, the teacher must either ``pass", or else declare a ``mistake" and supply a value $b_t$ such that $\abs{b_t-\lambda_{a_t}}\leq\epsilon_2$ for some $\epsilon_2\lesssim\epsilon'/2$. By appealing to standard regret minimization algorithms, we give an algorithm in~\Cref{lem:OnlineIter} for the student which leads to at most $\tilde{O}(n/\epsilon^2)$ ``mistakes." We implement the teacher via our threshold search procedure. The overall measurement complexity is thus $\tilde{O}(n^2/\epsilon^2)$.

\subsection{Lower bounds}
\label{sec:lowerbounds}

Our lower bounds are proved using the ``learning tree'' framework introduced in~\cite{bubeck2020entanglement,aharonov1997fault,huang2021information,chen2022exponential} for reasoning about adaptive protocols for quantum learning. As this technique is by now rather standard, we only emphasize here the key technical points of departure from existing approaches.

Previously, \cite{chen2022quantum} used this framework to prove an $\Omega(2^{n/3})$ measurement complexity lower bound for ancilla-free protocols for Pauli channel estimation. Their approach was based on extending the truncation technique of~\cite{huang2021information}, originally used to show a lower bound for shadow tomography, to channel learning. Like the bound in~\cite{huang2021information} however, the bound in~\cite{chen2022quantum} was off from the optimal bound by a factor of $3$ in the exponent. One might wonder whether the subsequent improved lower bound of~\cite{chen2022exponential} for shadow tomography could be similarly adapted to Pauli channel estimation. Unfortunately, that technique crucially exploited a certain convexity argument that breaks down in the presence of channel concatenations. To address this, we instead leverage a probabilistic argument based on likelihood ratio martingales, introduced in~\cite{chen2022complexity}. To further demonstrate the utility of this technique, we also show in~\Cref{app:Shadowlow} how to prove a refined version of the lower bound in~\cite{chen2022exponential} for shadow tomography.

Finally, our proof of~\Cref{thm:AdaptConKLow_informal} is based on carefully adapting another set of techniques from~\cite{chen2022exponential} for showing a lower bound for shadow tomography with bounded memory. As the details for this are rather technical, we defer them to Section~\ref{sec:SampComp}.

\section{Preliminaries}\label{sec:Prelim}

\subsection{Problem description}
\label{sec:problem}

Here we formalize the problems we consider throughout this paper, following the notation of~\cite{chen2022quantum}. We begin with the definition of a Pauli channel. In $n$-qubit Hilbert space, define $\sf{P}^n$ to be the \emph{Pauli group}. This is the abelian group consisting of $n$-fold tensor products of the single-qubit Pauli operators $I, X, Y, Z$. We will often denote $\sf{P}^n$ by $\{I,X,Y,Z\}^{\otimes n}$. Note that $\sf{P}^n$ is isomorphic to $\Z_2^{2n}$, as we can view every $a\in\Z_2^{2n}$ as a $2n$-bit classical string $1=a_{x,1}a_{z,1}\cdots a_{x,n}a_{z,n}$ corresponding to the Pauli operator $P_a=\otimes_{k=1}^ni^{a_{x,k}a_{z,k}}X^{a_{x,k}}Z^{a_{z,k}}$, where the phase is chosen to ensure Hermiticity.  We can further define a symplectic inner product $\langle\cdot,\cdot\rangle$ via $\langle a,b\rangle \coloneqq \sum_{k=1}^n(a_{x,k}b_{z,k}+a_{z,k}b_{x,k}) \mod{2}$.

An $n$-qubit Pauli channel $\Lambda$ is a quantum channel which, given density matrix $\rho$, outputs
\begin{align*}
\Lambda(\rho) \coloneqq \sum_{a\in\Z_a^{2n}}p_a\cdot P_a\rho P_a\,,
\end{align*}
where the parameters $\bm{p}\coloneqq\{p_a\}_a$ are the \emph{Pauli error rates}. Alternatively, $\Lambda$ can be expressed as
\begin{align}\label{eq:PauliEig}
\Lambda(\rho)=\frac{1}{2^n}\sum_{b\in\Z_2^{2n}}\lambda_b\Tr[P_b\rho]P_b\,,
\end{align}
where the parameters $\bm{\lambda}\coloneqq\{\lambda_b\}_b$ are the \emph{Pauli eigenvalues}~\cite{shor1996fault,aharonov1997fault,harper2020efficient,liu2021benchmarking,flammia2020efficient,harper2021fast,flammia2021pauli}. These two sets of parameters are related via the Walsh-Hadamard transformation:
\begin{align*}
    \lambda_b=\sum_{a}p_a(-1)^{\langle a,b\rangle} \qquad\text{and}\qquad p_a=\frac{1}{4^n}\sum_b\lambda_b(-1)^{\langle a,b\rangle}\,.
\end{align*}
The parameters $\bm{p}$ and $\bm{\lambda}$ are of complementary importance in applications like benchmarking and error mitigation~\cite{endo2018practical,endo2021hybrid}, and in this work we will focus on the latter.
% which quantify the preservation of Pauli observable through the channel and is widely concerned in quantum error mitigation. 

We are now ready to formally state the recovery goal of \emph{Pauli Channel Eigenvalue Estimation} that we are studying:
\begin{problem}\label{prob:def}
Given query access to a Pauli channel with eigenvalues $\blambda=\{\lambda_b\}_b$ defined in~\eqref{eq:PauliEig}, our goal is to output a set of estimates $\hat{\bm{\lambda}}=\{\hat{\lambda}_{b}\}_b$ such that $\max_b|\hat{\lambda}_b-\lambda_b|<\epsilon$.
\end{problem}

\begin{figure*}[t]
    \centering
    \includegraphics[width = 0.9\textwidth]{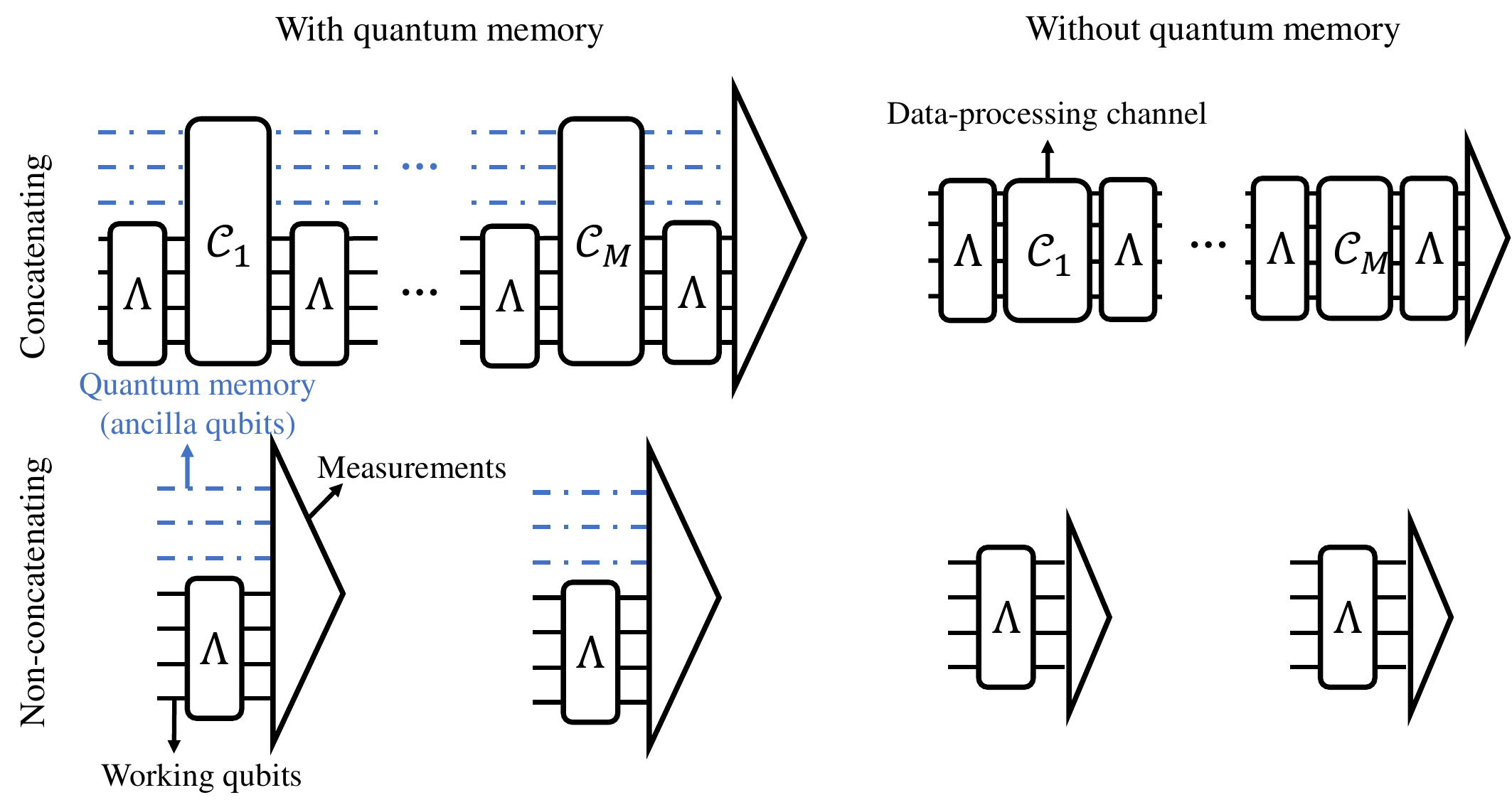}
    \caption{ Different measurement models for estimating an unknown channel $\Lambda$. The triangles denote state preparation and measurement, the boxes denote data-processing channels, black solid lines denote the working qubits, and blue dotted lines denote the quantum memory. }
    \label{fig:model}
\end{figure*}

It remains to specify how a learning protocol makes use of query access to $\Lambda$. we will consider a number of different models for this (see \Cref{fig:model}). 

The first axis along which these protocols can be classified is whether they are \emph{concatenating}. In each round of a concatenating protocol, one initializes the system to some state and then sequentially performs queries to the unknown channel $\Lambda$, interleaved with data-processing channels, before applying a measurement at the end. On the other hand, in each round of a non-concatenating protocol, the channel is queried once on some state, and this is immediately followed by a measurement.

The second axis is whether the protocol has \emph{quantum memory}. For our purposes, this refers to whether the system over which the protocol operates has some number of ancilla qubits in addition to the $n$ qubits over which $\Lambda$ acts.

The final axis is whether the protocol is \emph{adaptive}. For non-concatenating protocols, this means that in each round, the choice of state on which $\Lambda$ is queried and the measurement can depend on the previous history of measurement outcomes. For concatenating protocols, in addition to these one can also choose the intermediate data-processing channels adaptively. All of the lower bounds that we will prove apply even to adaptive protocols.

Finally, we must define the metric(s) one can use to assess the complexity, i.e. the resource demands, of a protocol for Pauli channel estimation. In this work we will discuss two metrics: (1) \emph{measurement complexity}, i.e. the total number of measurements made by the protocol, and (2) \emph{query complexity}, i.e. the total number of queries to the unknown channel $\Lambda$. For non-concatenating protocols, note that these are equivalent.

\subsection{Basic results in quantum information}\label{sec:BasicQ}

Here we record some standard definitions and calculations in quantum information.

\paragraph{Quantum measurements.} An $n$-qubit POVM is given by a set of positive-semidefinite matrices $\{F_s\}_s$, each corresponding to a classical outcome $s$, that satisfy $\sum_s F_s=I$. Here, $F_s$ is known as the POVM element. 

We consider the post-measurement state for POVM $\{F_s\}_s$ where each POVM element has some Cholesky decomposition $F_s = M_s^\dagger M_s$. When one measures a mixed state $\rho$ using this POVM, the probability of observing outcome $s$ is $\Tr(F_s\rho)$. If we consider implementing POVM $\{F_s\}_s$ on a given quantum state $\rho$, the post-measurement quantum state upon measuring $\rho$ with this POVM and observing outcome $s$ is given by
\begin{align*}
\rho\to\frac{M_s\rho M_s^\dagger}{\Tr(M_s^\dagger M_s\rho)}.
\end{align*}

When $F_s$ is rank-1 for all $s$, then we say that $\{F_s\}_s$ is a \emph{rank-1 POVM}. It is a standard fact, e.g. \cite[Lemma 4.8]{chen2022exponential}, that from an information-theoretic perspective, any POVM can be simulated by a rank-$1$ POVM and classical post-processing. In the sequel, we will thus assume without loss of generality, we assume all measurements in the learning protocols we consider to be rank-$1$ POVMs. We note that a useful parametrization for rank-1 POVMs is via
\begin{align*}
    \{w_s2^n\ket{\psi_s}\bra{\psi_s}\}\,,
\end{align*}
for pure states $\{\ket{\psi_s}\}$ and nonnegative weights $w_s$ satisfying $\sum_s w_s = 1$.

\paragraph{Identities on Pauli matrices.} 

We will use the following well-known lemma for the sum of tensor products of pairs of Pauli matrices:
\begin{lemma}[E.g., Lemma 4.10 in~\cite{chen2022exponential}] \label{lem:SumPauli}
    We have
    \begin{align*}
        \sum_{P_a\in\{I,X,Y,Z\}^{\otimes n}}P_a\otimes P_a=2^n\,\mathrm{SWAP}_n\,,    
    \end{align*} 
    where $\mathrm{SWAP}_n$ is the $2n$-qubit SWAP operator.
\end{lemma}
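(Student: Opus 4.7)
The plan is to reduce the identity to the single-qubit case and then leverage the tensor-product structure of both the Pauli group and the SWAP operator.

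First, I would dispatch the base case $n=1$ by direct computation: writing out the $4\times 4$ matrices for $I\otimes I,\ X\otimes X,\ Y\otimes Y,\ Z\otimes Z$ in the computational basis $\{\ket{00},\ket{01},\ket{10},\ket{11}\}$ and summing, the four nonzero entries are all equal to $2$ and sit exactly at the positions corresponding to $\ket{00}\!\mapsto\!\ket{00}$, $\ket{11}\!\mapsto\!\ket{11}$, $\ket{01}\!\leftrightarrow\!\ket{10}$; hence the sum equals $2\,\mathrm{SWAP}_1$.

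For general $n$, I would exploit the fact that every Pauli string factors as $P_a=P_{a_1}\otimes\cdots\otimes P_{a_n}$ with each $P_{a_k}\in\{I,X,Y,Z\}$. Thus
\begin{equation}
\sum_{a\in\Z_2^{2n}} P_a\otimes P_a
= \sum_{a_1,\dots,a_n}\Bigl(\bigotimes_{k=1}^n P_{a_k}\Bigr)\otimes\Bigl(\bigotimes_{k=1}^n P_{a_k}\Bigr).
\end{equation}
The key maneuver is a reordering of the $2n$ tensor factors. Let $\Pi$ denote the qubit-permutation unitary that interleaves the two length-$n$ registers, sending the $k$-th qubit of the first register and the $k$-th qubit of the second register to positions $2k-1$ and $2k$ respectively. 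Conjugating by $\Pi$ turns the displayed sum into
\begin{equation}
\Pi^{\dagger}\Bigl[\bigotimes_{k=1}^{n}\sum_{a_k\in\Z_2^{2}} P_{a_k}\otimes P_{a_k}\Bigr]\Pi.
\end{equation}
Applying the base case to each of the $n$ inner sums, this equals $2^n\,\Pi^{\dagger}(\mathrm{SWAP}_1)^{\otimes n}\Pi$. Finally, since the $2n$-qubit SWAP acting between the two $n$-qubit registers factorises as a product of $n$ commuting pairwise swaps between qubit $k$ of register one and qubit $k$ of register two, one has $\Pi^{\dagger}(\mathrm{SWAP}_1)^{\otimes n}\Pi=\mathrm{SWAP}_n$, yielding the claimed factor $2^n\,\mathrm{SWAP}_n$.

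The only real bookkeeping obstacle is formalizing $\Pi$ and verifying $\Pi^{\dagger}(\mathrm{SWAP}_1)^{\otimes n}\Pi=\mathrm{SWAP}_n$, which is routine once one tracks which qubit gets swapped with which. If one prefers to sidestep the permutation entirely, an equivalent proof is to evaluate matrix elements in the computational basis: show that for all $x,y,x',y'\in\{0,1\}^n$,
\begin{equation}
\bra{y,y'}\sum_{a} P_a\otimes P_a\ket{x,x'}=\sum_{a}(P_a)_{y,x}(P_a)_{y',x'}=2^n\,\delta_{y,x'}\,\delta_{y',x},
\end{equation}
which is precisely the $(y,y')$-entry of $2^n\,\mathrm{SWAP}_n$ acting on $\ket{x,x'}$. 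The second equality above is the standard Pauli orthogonality relation, which itself follows from the fact that $\{P_a/\sqrt{2^n}\}_a$ is an orthonormal basis of the $2^n$-dimensional operator space under the Hilbert--Schmidt inner product. Either route gives the identity.
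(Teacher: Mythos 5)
Your proof is correct. The paper does not supply its own argument for this lemma---it simply cites it as Lemma~4.10 of~\cite{chen2022exponential}---so there is no in-paper proof to compare against. Both routes you sketch are standard and sound: the single-qubit base case together with the interleaving permutation gives the tensor-product proof cleanly, and the matrix-element computation via the completeness relation $\sum_a (P_a)_{y,x}(P_a)_{y',x'}=2^n\delta_{y,x'}\delta_{y',x}$ (using Hermiticity of the $P_a$ and orthonormality of $\{P_a/\sqrt{2^n}\}$ under the Hilbert--Schmidt inner product) is the argument one most often finds in references such as the one the paper cites. Either route suffices; nothing is missing.
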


\noindent Using \Cref{lem:SumPauli}, we can obtain the following equality:

\begin{lemma}[E.g., Lemma 5.8 in~\cite{chen2022exponential}]\label{lem:SumPauliExp}
We have
\begin{align*}
    \sup_{\ket{\phi}}\E_a[\bra{\phi}P_a\ket{\phi}^2]=\frac{1}{2^n+1}\,,    
\end{align*}
where the expectation is uniform over $a\in \{I,X,Y,Z\}^{\otimes n} \,\backslash \, I^{\otimes n}$ and $\ket{\phi}$ ranges over all pure states.
\end{lemma}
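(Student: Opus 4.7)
The plan is to reduce the problem to an application of the SWAP identity from \Cref{lem:SumPauli}, and notice that the quantity inside the supremum is in fact constant in $\ket{\phi}$, so the supremum is trivially achieved.

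First, I would rewrite the squared expectation as an expectation on the duplicated system:
\begin{equation}
\bra{\phi}P_a\ket{\phi}^2 = (\bra{\phi}\otimes\bra{\phi})(P_a\otimes P_a)(\ket{\phi}\otimes\ket{\phi}).
\end{equation}
Summing over all $P_a \in \{I,X,Y,Z\}^{\otimes n}$ (including the identity) and applying \Cref{lem:SumPauli}, this sum collapses to
\begin{equation}
2^n\, (\bra{\phi}\otimes\bra{\phi})\,\mathrm{SWAP}_n\,(\ket{\phi}\otimes\ket{\phi}) = 2^n\,\langle\phi|\phi\rangle^2 = 2^n,
\end{equation}
since $\mathrm{SWAP}_n(\ket{\phi}\otimes\ket{\phi}) = \ket{\phi}\otimes\ket{\phi}$ for any pure state.

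Next, I would peel off the $P_a = I^{\otimes n}$ term, which contributes $\bra{\phi}I\ket{\phi}^2 = 1$. This leaves
\begin{equation}
\sum_{P_a \in \{I,X,Y,Z\}^{\otimes n}\setminus I^{\otimes n}} \bra{\phi}P_a\ket{\phi}^2 = 2^n - 1.
\end{equation}
Dividing by the number of nontrivial Paulis, $4^n - 1 = (2^n-1)(2^n+1)$, yields $\E_a[\bra{\phi}P_a\ket{\phi}^2] = \tfrac{1}{2^n+1}$ exactly, for every pure state $\ket{\phi}$. In particular, the supremum over $\ket{\phi}$ is attained (and equals this common value), giving the claim.

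There is no real obstacle here: the only subtlety is remembering to exclude the identity Pauli from the average in the statement, which is exactly what produces the cancellation $(2^n - 1)/(4^n - 1) = 1/(2^n+1)$. The fact that the supremum is in fact a constant (so every pure state is a maximizer) is a nice side observation that will likely be useful in the main argument later on.
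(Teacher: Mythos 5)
Your proof is correct and uses the standard argument: rewrite $\bra{\phi}P_a\ket{\phi}^2$ on the doubled system, apply the SWAP identity (\Cref{lem:SumPauli}), subtract the identity contribution, and divide by $4^n-1$. The paper simply cites this fact from prior work rather than re-proving it, so there is no paper proof to compare against, but your derivation is the expected one and your observation that the average is actually constant in $\ket{\phi}$ (hence the supremum is an equality for every pure state) is exactly right.
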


\paragraph{Quantum channels. }A quantum channel $\mathcal{C}$ from $n$ qubits to $m$ qubits is a linear and completely-positive trace-preserving map $\mathcal{C}:\mathbb{H}^{2^n}\to\mathbb{H}^{2^m}$, where $\mathbb{H}^d$ is the $d$-dimensional Hilbert space. Moreover, it can be written in the following form~\cite{nielsen2010quantum}:
\begin{align*}
\mathcal{C}(\rho)=\sum_iK_i\rho K_i^\dagger,
\end{align*}
where $K_i\in\mathbb{H}^{2^m\times 2^n}$ satisfying $\sum_iK_i^\dagger K_i=\mathbbm{I}$ are called Kraus operators. In this paper, we focus on the case $n=m$. We also denote $I(\cdot)$ as the identity channel in this paper; the number of qubits $n$ involved will be clear from the context. According to the Stinespring dilation theorem~\cite{stinespring1955positive}, any channel $\mathcal{C}:\mathbb{H}^{2^n}\to\mathbb{H}^{2^n}$ can be converted into a unitary evolution acting on an extended space and tracing out the ancilla qubits afterward. The dimension of the extended Hilbert space is bounded above by $2^{2n}$.

A special class of quantum channels that is used in this paper is the control channel (see \Cref{sec:ConMeasProtocol} and \Cref{sec:ChanEstCon}).
\begin{example}[Control channels]\label{exp:ControlChan}
Suppose we have a $n$-qubit quantum system that can be divided into two subsystems each of $n_1$ and $n_2$ qubits with $n=n_1+n_2$. We consider a two-outcome POVM $\{E,I-E\}$ on the first subsystem and two quantum channels $\mathcal{C}_1^{n_2}$ and $\mathcal{C}_2^{n_2}$ on the second subsystem. Given any $\rho=\rho^{n_1}\otimes\rho^{n_2}$ that can be represented as a direct product of some state $\rho^{n_1}$ and $\rho^{n_2}$ in the first and the second subsystem, a control channel $\mathcal{C}:\mathbb{H}^{2^n}\to\mathbb{H}^{2^n}$ on the $n$-qubit quantum system acts by:
\begin{align*}
\mathcal{C}(\rho)=\Tr(E\rho^{n_1})\frac{I}{2^{n_1}}\otimes\mathcal{C}_1^{n_2}(\rho^{n_2})+\Tr((I-E)\rho^{n_1})\frac{I}{2^{n_1}}\otimes\mathcal{C}_2^{n_2}(\rho^{n_2}).
\end{align*} 
\end{example}

We can verify that such a control channel in \Cref{exp:ControlChan} always exists because we can verify that it is linear, completely positive, and trace-preserving. In particular, the linear property follows from the fact that $\mathcal{C}_1^{n_2}$ and $\mathcal{C}_2^{n_2}$ are linear channels. The positive property follows from the observation that $\Tr(E\rho^{n_1})$ and $\Tr((I-E)\rho^{n_1})$ are non-negative for arbitrary $\rho^{n_1}$. Finally, the trace-preserving property comes from the equality that $\Tr(E\rho^{n_1})+\Tr((I-E)\rho^{n_1})=1$ for any POVM $\{E,I-E\}$.

\paragraph{Choi–Jamio{\l}kowski isomorphism.} An alternative way to characterize a given channel $\mathcal{C}:\mathbb{H}^{2^n}\to\mathbb{H}^{2^n}$ is the Choi–Jamio{\l}kowski isomorphism~\cite{choi1975completely,jamiolkowski1972linear}, which we will use in constructing the regularization function of our protocol for learning Pauli channels in~\Cref{sec:OnlinePauli}. Intuitively, the Choi–Jamio{\l}kowski isomorphism maps the set of Kronecker basis states $\{\ket{i}\bra{j}\}$ to a set of canonical basis states $\ket{i,j}$ in an extended space. More formally, given a channel $\mathcal{C}:\mathbb{H}^{2^n}\to\mathbb{H}^{2^n}$ acting on Hilbert space $S$, consider an auxiliary system $S'$ with the same dimension. Define the maximally entangled state in the extended space $S\otimes S'$:
\begin{align*}
\ket{\Phi^+}=\frac{1}{\sqrt{2^n}}\sum_{i=0}^{2^n-1}\ket{i}\otimes\ket{i}.
\end{align*}
The state associated to $\mathcal{C}$ under the Choi-Jamio{\l}kowski isomorphism is given by passing $\ket{\Phi^+}$ through the channel $\mathcal{C}\otimes I_{S'}$, that is,
\begin{align}\label{eq:Choistate}
    \rho_{\text{CJ}}(\mathcal{C})=\mathcal{C}\otimes I_{S'}(\ket{\Phi^+}\bra{\Phi^+})    
\end{align}
Note that this is a linear map.

\subsection{Probability theory}\label{sec:ProbThm}
We will need the concentration and anti-concentration inequality for binomial distribution in \Cref{sec:SingleComp} and \Cref{sec:ThresholdSearch} when we are designing algorithms for learning a Pauli channel with a single component and solving the Pauli channel hypothesis testing problem.

We will use $X\sim B(N,p)$ to denote a sample from the binomial distribution, i.e.
\begin{align*}
\Pr[X=k]=\binom{N}{k}p^k(1-p)^{N-k}
\end{align*}
for $0\leq k\leq N$. In the sequel, the parameter $p$ for this binomial distribution will be given by $p=\frac{1-\epsilon}{2}$ for some $\epsilon>0$. 

Without loss of generality, we assume $N$ is even and we consider the upper tail larger than $k=N/2$ with $k$ being an integer. By using Hoeffding's inequality, we can obtain the concentration inequality on the upper tail of the cumulative distribution function at $X\geq N/2$:
\begin{align}\label{eq:Concentration}
\Pr[X\geq N/2]\leq\exp(-N\epsilon^2/2).
\end{align}

Also, we can also obtain the anti-concentration inequality on the upper tail of the cumulative distribution function at $X\geq N/2$ as follows:
\begin{align}\label{eq:AntiConcentration}
\Pr[X\geq N/2]\geq\frac{1}{2}\left(1-\sqrt{1-\exp\left(-\frac{N\epsilon^2}{1-\epsilon^2}\right)}\right).
\end{align}

An exemplary proof for Eq.~\eqref{eq:AntiConcentration} can be given as follows. According to Slud's inequality~\cite{slud1977distribution}, If $k$ is an integer, $p<1/2$, and $Np\leq k\leq N(1-p)$, then we have
\begin{align*}
\Pr[X\geq k]\geq\Pr[Z\geq\frac{k-Np}{\sqrt{Np(1-p)}}].
\end{align*}
where $Z\sim\mathcal{N}(0,1)$ is a random variable with standard Gaussian distribution. Thus, we have 
\begin{align*}
\Pr[X\geq k]\geq\Pr[Z\geq\frac{k-Np}{\sqrt{Np(1-p)}}]\geq\frac{1}{2}\left(1-\sqrt{1-\exp\left(-\frac{N\epsilon^2}{1-\epsilon^2}\right)}\right).
\end{align*}

\subsection{Online learning, regret, and mistake bounds}\label{sec:OnlineLearning}
In this part, we introduce the tool of online learning we will require for the Pauli channel eigenvalue estimation algorithm for \Cref{thm:ChanEstCon_informal} (and formally for \Cref{thm:ChanEstCon} in \Cref{sec:OnlinePauli}). Online learning has been developed as an important tool in learning quantum tasks. Initialized by Ref.~\cite{aaronson2018online}, online learning has been exploited in learning properties of unknown quantum states (also known as shadow tomography). It is also employed in the following works with improved bounds of shadow tomography~\cite{aaronson2018shadow,aaronson2019gentle,buadescu2021improved,gong2023learning} and different learning settings~\cite{chen2020more,yang2020revisiting,lumbreras2022multi,chen2022adaptive,zimmert2022pushing}.

In this paper, we consider an online learning scheme for estimating the eigenvalues of a given Pauli channel in \Cref{sec:OnlinePauli}. As defined in \Cref{prob:def}, given a Pauli channel 
\begin{align*}
\Lambda(\cdot)=\frac{I\Tr(\cdot)+\sum_{a}\lambda_aP_a\Tr(P_a(\cdot))}{2^n},    
\end{align*}
the goal of the learner is to learn all the eigenvalues $\lambda_a$ within accuracy demand $\epsilon$. In iteration $t$ of the online learning scheme, the learner is given an index $a_t\in\{1,...,4^n-1\}$ and is challenged to output a hypothesis Pauli channel 
\begin{align*}
\hat{\Lambda}_t=\frac{I\Tr(\cdot)+\sum_{a=1}^{4^n-1}\hat{\lambda}_{a,t}P_a\Tr(P_a(\cdot))}{2^n}.    
\end{align*} 
The hypothesis channel is also known as the \emph{prediction} in iteration $t$. The learner then obtains feedback. The feedback considered in our algorithm is a ``good" approximation of $\lambda_{a_t}$ with $b_t\in[0,1]$ with $\abs{b_t-\lambda_{a_t}}\leq O(\epsilon/n)$. The learner then suffers from a $L_1$ loss equal to
\begin{align*}
\ell_t(\hat{\Lambda}_t)\coloneqq\abs{b_t-\hat{\lambda}_{a_t,t}}.
\end{align*}
We define the regret for the first $T$ iterations to be the amount by which the actual loss of the learner exceeds the loss of the best single prediction $\hat{\Lambda}$:
\begin{align}\label{eq:DefRegret}
R_T=\sum_{t=1}^T\ell_t(\hat{\Lambda}_t)-\min_{\hat{\Lambda}}\sum_{t=1}^T\ell_t(\hat{\Lambda}).
\end{align}

After receiving the feedback, the learner computes the loss function $\ell_t(\hat{\Lambda}_t)$. If the loss function is larger than a certain threshold, the learner performs an update on the hypothesis channel. In \Cref{sec:ChanEstCon}, we show that there exists a strategy such that the regret in \eqref{eq:DefRegret} is bounded by $R_T=O(\sqrt{Tn})$, yielding the fact that the number of ``bad" iterations in which $\abs{\lambda_{a_t}-\hat{\lambda}_{a_t,t}}>O(\epsilon/\sqrt{n})$ is bounded by $O(n^2/\epsilon^2)$. Therefore, the number of updates (resp. iterations) in the online learning procedure is also bounded by $O(n^2/\epsilon^2)$.

\subsection{Tree representation and Le Cam's method}
\label{sec:tree}
We use the tool of tree representation and Le cam's method when we prove the lower bounds for any ancilla-free protocols (\Cref{thm:AdaptConLow}) and $k$-ancilla protocols (\Cref{thm:AdaptConKLow}) in \Cref{sec:MeasComp} and \Cref{sec:SampComp}, respectively. We adapt the learning tree formalism of~\cite{bubeck2020entanglement,aharonov1997fault,huang2021information,chen2022exponential} to the setting of Pauli channel eigenvalue estimation. We first consider protocols without quantum memory. Given an unknown Pauli channel, a (possibly adaptive and concatenating) protocol prepares an input state, constructs a process composed of the unknown quantum channel and carefully crafted data-processing channels, and performs a POVM measurement on the output state. Since the adaptive POVM measurement performed at each time step depends on the previous outcomes, it is natural to consider a tree representation. Each node on the tree represents the current state of the classical memory. By convexity, we can assume that the system is initialized to a pure state in each round. Because we consider $T$ measurements, all the leaf nodes are at depth $T$.
\begin{definition}[Tree representation for learning Pauli channels with concatenating measurements]\label{def:TreeMemless}
Given an unknown $n$-qubit quantum channel $\Lambda$, a learning protocol without quantum memory with query access to $\Lambda$ can be expressed as a rooted tree $\cT$ of depth $T$. Each node on the tree encodes the transcript measurement outcomes the protocol has seen so far. The tree satisfies the following:
\begin{itemize}
    \item Each node $u$ is associated with a probability $p^\Lambda(u)$.
    \item For the root $r$ of the tree, $p^\Lambda(r)=1$.
    \item At each non-leaf node $u$, we measure an adaptive POVM $\cM_s^u=\{\omega_s^u 2^n\ket{\psi_s^u}\bra{\psi_s^u}\}_s$ on an adaptive input state $\ket{\phi^u}\bra{\phi^u}$ going through $M^u\geq 1$ concatenating $\Lambda$ to obtain a classical outcome $s$. Each child node $v$ of the node $u$ is connected through the edge $e_{u,s}$.
    \item If $v$ is the child node of $u$ connected through the edge $e_{u,s}$, then
    \begin{align*}
    p^\Lambda(v)=p^\Lambda(u)\omega_s^u2^n\bra{\psi_s^u}\rho_{\text{out}}\ket{\psi_s^u},
    \end{align*}
    where
    \begin{align*}
    \rho_{\text{out}}=\Lambda(\cC^{u}_{M^u-1}(\Lambda(\cdots\Lambda(\cC^{u}_{1}(\Lambda(\ket{\phi^u}\bra{\phi^u})))\cdots))).
    \end{align*}
    Here, $\cC_{1},\ldots,\cC_{M^u-1}$ are data-processing channels.
    \item Every root-to-leaf path is of length $T$. Note that for a leaf node $\ell$, $p^\Lambda(\ell)$ is the probability that the classical memory is in state $\ell$ after the learning procedure.
\end{itemize}
\end{definition}

\noindent We consider a reduction from the Pauli channel estimation task we care about to a two-hypothesis distinguishing problem.
\begin{problem}[Many-versus-one distinguishing problem]\label{prob:Distinguish}
We reduce the estimation problem to a distinguishing problem in order to prove the lower bounds. We consider the following two cases happening with equal probability.
    \begin{itemize}
    \item The unknown channel is the depolarization channel $\Lambdadep=\frac{1}{2^n}I\Tr(\cdot)$.
    \item The unknown channel is sampled uniformly from a set of channels $\{\Lambda_i\}_{i=1}^M$.
\end{itemize}
The goal of the learning protocol is to predict which event has happened.
\end{problem}

\noindent A specific task in the context of Pauli channel eigenvalue estimation is the following problem we refer to as \emph{Pauli Spike Detection: }
\begin{problem}[Pauli spike detection (search)]\label{prob:SpikeDetect}
We consider the distinguishing task between the following cases happening with equal probability.
\begin{itemize}
    \item The unknown channel is the depolarization channel $\Lambdadep=\frac{1}{2^n}I\Tr(\cdot)$.
    \item The unknown channel is sampled uniformly from a set of channels $\{\Lambda_a\}_{a\in\Z_2^{2n}\backslash \{0\}}$ with $\Lambda_a=\frac{1}{2^n}[I\Tr(\cdot)+ P_a\Tr(P_a(\cdot))]$.
\end{itemize}
In the case of the Pauli spike search problem, we are asked to further figure out the value of $a$ in the second case.
\end{problem}

\noindent The central idea for proving complexity lower bounds based on \Cref{prob:Distinguish} and \Cref{prob:SpikeDetect} is the two-point method. In this tree representation, in order to distinguish between the two events, the distribution over the classical memory on the leaves for the two events must be sufficiently distinct. Formally, we have the following:
\begin{lemma}[Le Cam's two-point method, see e.g. Lemma 1 in~\cite{yu1997assouad}]\label{lem:LeCam}
The probability that the learning protocol represented by the tree $\cT$ correctly solves the many-versus-one distinguishing task in Problem~\ref{prob:Distinguish} is upper bounded by
\begin{align*}
\frac12\sum_{\ell\in\mathrm{leaf}(\cT)}\abs{\E_i \,p^{\Lambda_i}(\ell)-p^{\Lambdadep}(\ell)}\,.
\end{align*}
\end{lemma}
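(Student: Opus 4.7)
The plan is to prove the lemma via the standard Le Cam two-point reduction, recasting the learning protocol as a binary hypothesis test over the leaves of the tree $\cT$. First I would define two distributions on $\mathrm{leaf}(\cT)$ induced by the two hypotheses: $P_0(\ell) := p^{\Lambdadep}(\ell)$ for the depolarizing case, and $P_1(\ell) := \E_i\,p^{\Lambda_i}(\ell)$ for the uniformly chosen alternative from $\{\Lambda_i\}$. A short preliminary step is to verify that both $P_0$ and $P_1$ are genuine probability distributions; this follows by induction on the depth of $\cT$, using POVM completeness $\sum_s \omega_s^u 2^n\ket{\psi_s^u}\bra{\psi_s^u}=I$ at each internal node $u$ together with trace-preservation of the Pauli channel $\Lambda$ and each intermediate data-processing channel $\cC^u_j$, which together ensure $\sum_{s} p^\Lambda(v_s) = p^\Lambda(u)$ where $v_s$ ranges over the children of $u$.

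Any decision procedure for \Cref{prob:Distinguish} factors through the observed transcript and is thus described by some function $\phi:\mathrm{leaf}(\cT)\to\{0,1\}$, with $\phi(\ell)=0$ (resp.\ $1$) meaning the algorithm declares the depolarizing (resp.\ non-depolarizing) case. Under the uniform prior over the two hypotheses, the success probability equals
\begin{equation*}
\Pr[\text{correct}] \;=\; \tfrac{1}{2}P_0\bigl(\phi^{-1}(0)\bigr) + \tfrac{1}{2}P_1\bigl(\phi^{-1}(1)\bigr) \;=\; \tfrac{1}{2} + \tfrac{1}{2}\bigl(P_1(\phi^{-1}(1)) - P_0(\phi^{-1}(1))\bigr),
\end{equation*}
so the distinguishing margin $2\Pr[\text{correct}]-1$ equals $P_1(\phi^{-1}(1)) - P_0(\phi^{-1}(1))$. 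By the variational characterization of total variation distance, this margin is upper bounded by $\sup_S|P_1(S)-P_0(S)|=\tfrac12\sum_\ell|P_1(\ell)-P_0(\ell)|$ for any choice of $\phi$, which is precisely the right-hand side claimed in the statement.

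The main obstacle is essentially bookkeeping: verifying the probability-distribution property of $P_0$ and $P_1$ requires unpacking the recursive tree definition through the possibly many concatenated uses of $\Lambda$ interleaved with arbitrary data-processing channels, so the inductive hypothesis must track the entire branching transcript rather than individual measurements. Once this is in hand, the rest is a direct consequence of the triangle inequality and the operational characterization of total variation distance, and no problem-specific input about Pauli channels is needed.
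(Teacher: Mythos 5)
The paper does not give a proof of \Cref{lem:LeCam}; it is cited to Lemma~1 of Yu (1997). Your argument is the standard Le Cam derivation and is correct: defining $P_0(\ell)=p^{\Lambdadep}(\ell)$ and $P_1(\ell)=\E_i\,p^{\Lambda_i}(\ell)$, verifying these are probability distributions via POVM completeness and trace-preservation through the concatenated channels, writing the success probability of any leaf-to-decision rule $\phi$ under the uniform prior as $\frac{1}{2}+\frac{1}{2}\bigl(P_1(\phi^{-1}(1))-P_0(\phi^{-1}(1))\bigr)$, and invoking the variational characterization of total variation.

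One point worth flagging: your derivation yields the bound
\begin{equation}
\Pr[\text{correct}]\;\le\;\frac{1}{2}+\frac{1}{2}\,\mathrm{TV}(P_0,P_1)\;=\;\frac{1}{2}+\frac{1}{4}\sum_{\ell\in\mathrm{leaf}(\cT)}\abs{\E_i\,p^{\Lambda_i}(\ell)-p^{\Lambdadep}(\ell)}\,,
\end{equation}
whereas the lemma as printed omits the additive $\frac{1}{2}$ and the extra factor of $\frac{1}{2}$. The printed statement cannot be right as written (it would force the success probability to $0$ when $P_0=P_1$, contradicting random guessing). This is evidently a typo in the paper: the authors explicitly state near the end of Section~4 that the maximal distinguishing probability equals $\frac{1}{2}\bigl(1+\mathrm{TV}(p_1,p_2)\bigr)$, which is precisely the version you proved. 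So your proof is the correct one and matches how the lemma is actually used.
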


\section{Efficient Pauli spike detection using two ancillas and channel concatenation}\label{sec:ConMeasProtocol}
One of the main contributions of our work is to exhibit the power for the simultaneous usage of channel concatenation and (even a constant or logarithmic size) quantum memory. Within this vein, we start with the easier task of Pauli spike detection. In \Cref{thm:AdaptConKLow_informal} (formally in \Cref{thm:AdaptConKLow}), we prove that any algorithms with $k<n$ ancilla qubit require an exponential number of queries to the unknown Pauli channel for the eigenvalue estimation problem, even if arbitrary adaptive control and channel concatenation are allowed. We prove the hardness of this task using the Pauli spike detection problem defined in \Cref{prob:SpikeDetect}. Specifically, we consider distinguishing between the following scenarios:
\begin{itemize}
    \item The unknown channel is the depolarization channel $\Lambdadep=\frac{1}{2^n}I\Tr(\cdot)$.
    \item The unknown channel is sampled uniformly from a set of channels $\{\Lambda_a\}_{a\in\Z_2^{2n}\backslash \{0\}}$ with $\Lambda_a=\frac{1}{2^n}[I\Tr(\cdot)+ P_a\Tr(P_a(\cdot))]$.
\end{itemize}
We strictly prove that any algorithm with bounded ancillas (i.e., $k\leq(1-o(1))n$) has to access the unknown channel for exponential times to distinguish between the two cases (see \Cref{thm:AdaptConKLow_informal} in the introduction and formally \Cref{thm:AdaptConKLow} in \Cref{sec:SampComp}). Ref.~\cite{chen2022quantum} also proved an exponential lower bound on the query (measurement) complexity for arbitrary ``non-concatenating" protocols. However, concatenating protocols seem much more powerful than non-concatenating ones in the sense that they can query the channel multiple times (even up to exponentially many times) with quantum post-processing before a single measurement. We remark that \Cref{thm:AdaptConKLow_informal} does not rule out the possibility of solving the Pauli estimation problem efficiently concerning the number of measurements (i.e., polynomial number of measurements) using concatenation. In the following, we provide a rigorous proof that concatenating protocols are more powerful in solving distinguishing tasks.
\begin{theorem}\label{thm:SpikeDetCon}
There exists a non-adaptive, concatenating strategy with a \textbf{two-qubit (i.e., $k=2$)} quantum memory that can solve the Pauli spike detection in \Cref{prob:SpikeDetect} with a high probability using a \textbf{single} measurement but concatenating $\tilde{O}(4^{n})$ queries to the unknown Pauli channel $\Lambda$. Here, $\tilde{O}(\cdot)$ omits the terms that have polynomial dependence on the system size $n$. 
\end{theorem}

\noindent\Cref{thm:SpikeDetCon} also indicates that the Pauli spike detection task exploited to prove the query lower bound in our work and the lower bounds in the previous related works~\cite{chen2022quantum} are solvable with polynomial measurement complexity, or more specifically, even a single measurement, using concatenated learning protocols with $O(1)$ ancilla qubits. This yields an exponential quantum speedup in measurement complexity compared to either non-concatenating protocols or ancilla-free protocols. In the rest of this section, we give the proof for \Cref{thm:SpikeDetCon}.

\begin{proof}
For simplicity, we encode all the Pauli strings except $I^{\otimes n}$ using $a\in\{1,\ldots,4^n-1\}$. We denote $I^{\otimes n}$ as $I$ in the following analysis for simplicity. We also assign indices to the working qubits from $1$ to $n$, and the two ancilla qubits are denoted as the $(n+1)$-th and the $(n+2)$-th qubit. 

Given an arbitrary $a$, we now consider two quantum states
\begin{align*}
\rho_{\text{dep}}=\frac{I}{2^n}\quad \text{v.s.}\quad\rho_a=\frac{I+\lambda_a P_a}{2^n}.
\end{align*}
If we measure these two states using the POVM along the eigenvectors $\left\{\ket{\varphi_a^j}\bra{\varphi_a^j}\right\}_{j=1}^{2^n}$ corresponding to eigenvalue $\pm 1$ of the Pauli string $P_a$, we will obtain the distributions 
\begin{align}\label{eq:POVMDistr}
\left(\frac{1}{2^n},...,\frac{1}{2^n},\frac{1}{2^n},...,\frac{1}{2^n}\right)\quad\text{v.s.}\quad\left(\frac{1+\lambda_a}{2^n},...,\frac{1+\lambda_a}{2^n},\frac{1-\lambda_a}{2^n},...,\frac{1-\lambda_a}{2^n}\right),
\end{align}
where we have half of entries being $\frac{1+\lambda_a}{2^n}$ and the other half being $\frac{1-\lambda_a}{2^n}$ in the second case. The total variation distance between two distributions is $O(\lambda_a)$. In the case of Pauli spike detection, we fix $\lambda_a=1$, the distributions after the POVM will be
the distributions 
\begin{equation*}
\left(\frac{1}{2^n},...,\frac{1}{2^n},\frac{1}{2^n},...,\frac{1}{2^n}\right)\quad\text{v.s.}\quad\left(\frac{1}{2^{n-1}},...,\frac{1}{2^{n-1}},0,...,0\right),
\end{equation*}
where we have half of entries being $\frac{1}{2^{n-1}}$ and the other half being $0$ in the second case. The total variation distance between two distributions is $\frac12=O(1)$. In the following, we denote $\left\{\ket{\varphi_a^{j,+}}\bra{\varphi_a^{j,+}}\right\}_{j=1}^{2^{n-1}}$ as the set of eigenstates corresponding to positive eigenvalue $+1$ of $P_a$ and $\left\{\ket{\varphi_a^{j,-}}\bra{\varphi_a^{j,-}}\right\}_{j=2^{n-1}+1}^{2^{n}}$ corresponding to negative eigenvalue $-1$ of $P_a$. We consider the two-outcome POVM given by \begin{align}\label{eq:POVMBasis}
\{E_{\text{data},a},I-E_{\text{data},a}\}=\left\{\sum_{j=1}^{2^{n-1}}\ket{\varphi_a^{j,+}}\bra{\varphi_a^{j,+}},\sum_{j=2^{n-1}+1}^{2^n}\ket{\varphi_a^{j,-}}\bra{\varphi_a^{j,-}}\right\},
\end{align}
the first case will give two outcomes with probability $\left(\frac12,\frac12\right)$ while the second case will always give the first outcome. 

To record this information to the ancilla qubit, we consider a control channel exhibited in \Cref{exp:ControlChan} using the POVM $\{E_{\text{data},a},I-E_{\text{data},a}\}$. Suppose we are given a $(n+1)$-qubit state $\rho_n\otimes \rho_1$ with $\rho_n$ an $n$-qubit state and $\rho_1$ an ancilla qubit. The control channel acts as
\begin{align*}
\mathcal{C}(\rho)=\Tr(E_{\text{data},a}\rho_n)\cdot\frac{I}{2^n}\otimes \rho_1+\Tr((I-E_{\text{data},a})\rho_n)\cdot\frac{I}{2^n}\otimes \ket{1}\bra{1}.
\end{align*}
Suppose we are given the pre-knowledge that $\rho_n$ is either $\frac{I}{2^n}$ and $\frac{I+P_a}{2^n}$, we prepare the ancilla qubit as $\rho_1=\ket{0}\bra{0}$. After inputting these $n+1$ qubits into $\mathcal{C}$, we obtain $\frac{I}{2}$ on the ancilla qubit in the first case and obtain $\ket{0}\bra{0}$ in the second case. Intuitively, this process records the information of the $n$ working qubits to the ancilla qubit. Following this intuition, we formally consider the following protocol in \Cref{algo:AncillaConDis}:

\begin{algorithm}[htbp]
\caption{The protocol for Pauli spike detection}
\label{algo:AncillaConDis}
\begin{algorithmic}[1]
\REQUIRE An unknown $n$-qubit Pauli channel $\Lambda$.
\ENSURE Distinguish whether the unknown channel is the depolarization channel $\Lambdadep$ or is sampled uniformly from $\{\Lambda_a\}_{a\in\Z_2^{2n}\backslash \{0\}}$ with $\Lambda_a=\frac{1}{2^n}[I\Tr(\cdot)+ P_a\Tr(P_a(\cdot))]$.
\STATE We prepare an initial state $\frac{I+P_1}{2^n}$ on the $n$ working qubits and $\ket{0}\bra{0}$ on the first and the second ancilla qubit (quantum memory). The overall initial state is $\frac{I+P_1}{2^n}\otimes\ket{0}\bra{0}\otimes\ket{0}\bra{0}$.
\FOR{Iteration $t=1,...,4^n-1$}
\STATE We start with the initial state $$\frac{I+P_t}{2^n}\otimes\ket{0}\bra{0}\otimes\ket{0}\bra{0}.$$ 
\FOR{Epoch $k=1,...,3n$}
\STATE We input the current state into $\Lambda_n\otimes I$, which is the unknown channel on the $n$ working qubits and the identity channel on the two ancilla qubits.
\STATE We then perform the \textbf{$(n+1)$-qubit iteration-$t$ data record channel $\mathcal{E}_{\text{data},t}$} on the first $(n+1)$ qubits based on the two-outcome POVM (defined in \eqref{eq:POVMBasis}). It is a control channel (defined in \Cref{exp:ControlChan}) given by $\{E_{\text{data},t},I-E_{\text{data,t}}\}=\left\{\sum_{j=1}^{2^{n-1}}\ket{\varphi_t^{j,+}}\bra{\varphi_t^{j,+}},\sum_{j=2^{n-1}+1}^{2^n}\ket{\varphi_t^{j,-}}\bra{\varphi_t^{j,-}}\right\}$. Given a $(n+1)$-qubit state $\rho_n\otimes \rho_1$ with $\rho_n$ an $n$-qubit state on the working qubits and $\rho_1$ the state on ancilla qubit, the channel acts as
\begin{align}\label{chan:TDataRec}
\mathcal{E}_{\text{data},t}:\rho\to \frac{I}{2^n}\otimes \rho_1\Tr\left(E_{\text{data},t}\cdot\rho_n\right)+\frac{I}{2^n}\otimes\ket{1}\bra{1}\Tr\left((I-E_{\text{data},t})\cdot\rho_n\right).
\end{align}
\STATE We perform the working qubit initialization channel 
\begin{align}\label{chan:WorkInitial}
\mathcal{E}_{\text{work ini},t}:\rho_n\to\frac{I+P_t}{2^n}\Tr(\rho_n) 
\end{align}
on the first $n$ qubits with $\rho_n$ the reduced density matrix on these qubits and enter the next epoch.
\ENDFOR
\STATE We perform the \textbf{two-qubit ancilla recording channel $\mathcal{E}_\text{anc}$}, which is also a control channel, on the two ancilla qubits. For product state $\rho=\rho_1\otimes\rho_2$ where $\rho_1$ and $\rho_2$ denote the density matrix on the first one and the second ancilla qubit, the channel acts as
\begin{align}\label{chan:AncRec}
\mathcal{E}_\text{anc}:\rho\to \frac{I}{2}\otimes\rho_2\Tr(\ket{1}\bra{1}\rho_1)+\frac{I}{2}\otimes\ket{1}\bra{1}\Tr(\ket{0}\bra{0}\rho_1)
\end{align}
\STATE We perform a channel $\mathcal{E}_{\text{prep},t}$ preparing the input state for the next iteration $t+1$: 
\begin{align*}
\mathcal{E}_{\text{prep},t}: \rho\to\frac{I+P_{t+1}}{2^n}\otimes\ket{0}\bra{0}\Tr(\rho)
\end{align*}
to the $n$ working qubits and the first ancilla qubit.
\ENDFOR
\STATE We measure the second ancilla qubit using the $\{\ket{0}\bra{0},\ket{1}\bra{1}\}$ basis. The outcome $0$ and $1$ corresponds to the first and second case of the distinguishing problem.
\end{algorithmic}
\end{algorithm}

We now analyze the mechanism of this protocol and prove \Cref{thm:SpikeDetCon}. 

\paragraph{Case 1: }In the first case of the distinguishing task, the unknown quantum channel is $\Lambdadep$. We observe that the input state on the $n$ working qubits in each epoch of each iteration $t=1,\ldots,4^n-1$ is $\frac{I+P_t}{2^n}$. Therefore, the output state after the unknown channel $\Lambdadep$ on the working qubits is
\begin{align*}
\Lambdadep\left(\frac{I+P_t}{2^n}\right)=\frac{I}{2^n}.
\end{align*}
Before entering $\mathcal{E}_{\text{data},t}$, the state on the first $(n+1)$ qubits is $\frac{I}{2^n}\otimes \ket{0}\bra{0}$. For simplicity, we denote the quantum process in each epoch of the iteration $t$ on the first $(n+1)$ qubits as $\mathcal{E}_{\text{epo},t}(\cdot)$. Notice that for repeatedly implementing the epochs, we have
\begin{align*}
&\mathcal{E}_{\text{data},t}\left(\frac{I}{2^n}\otimes \ket{0}\bra{0}\right)=\frac{I}{2^n}\otimes \left(\frac{1}{2}\ket{0}\bra{0}+\frac{1}{2}\ket{1}\bra{1}\right),\\
&\mathcal{E}_{\text{epo},t}^k\left(\frac{I+P_t}{2^n}\otimes \ket{0}\bra{0}\right)=\frac{I}{2^n}\otimes \left(\frac{1}{2^k}\ket{0}\bra{0}+\left(1-\frac{1}{2^k}\right)\ket{1}\bra{1}\right),
\end{align*}
for $k=1,...,3n$. After $3n$ repetitions, we have the following state
\begin{align*}
\mathcal{E}_{\text{epo},t}^{3n}\left(\frac{I+P_t}{2^n}\otimes \ket{0}\bra{0}\right)=\frac{I}{2^n}\otimes \left(\frac{1}{8^n}\ket{0}\bra{0}+\left(1-\frac{1}{8^n}\right)\ket{1}\bra{1}\right)
\end{align*}
on the first $n+1$ qubits. Next, we consider the state of the two ancilla qubits after we perform the $\mathcal{E}_\text{anc}$ channel. It reads:
\begin{align*}
\frac{I}{2}\otimes\left(\left(1-\frac{1}{8^n}\right)^{t}\ket{0}\bra{0}+\left(1-\left(1-\frac{1}{8^n}\right)^t\right)\ket{1}\bra{1}\right).
\end{align*}
After we implement the channel $\mathcal{E}_{\text{prep},t}$, we enter the next iteration $(t+1)$ with (we allow $t=0$ at the beginning of the first iteration) 
\begin{align*}
\frac{I+P_{t+1}}{2^n}\otimes\ket{0}\bra{0}\otimes\left(\left(1-\frac{1}{8^n}\right)^{t}\ket{0}\bra{0}+\left(1-\left(1-\frac{1}{8^n}\right)^t\right)\ket{1}\bra{1}\right).
\end{align*}

After we enumerate all $a$'s, the final state on the second ancilla qubit should be
\begin{align}\label{eq:SpikeCase1}
\left(\left(1-\frac{1}{8^n}\right)^{4^n-1}\ket{0}\bra{0}+\left(1-\left(1-\frac{1}{8^n}\right)^{4^n-1}\right)\ket{1}\bra{1}\right),
\end{align}
For large enough $n$, we can ensure that this state has a $1-\exp\left(-\Omega(n)\right)$ coefficient concerning the entry $\ket{0}\bra{0}$ in the density matrix. That is to say, we obtain the measurement outcome $\ket{1}\bra{1}$ with an exponentially small probability when we measure it using the $\{\ket{0}\bra{0},\ket{1}\bra{1}\}$ basis.

\paragraph{Case 2: }In the second case of the distinguishing task, the unknown channel $\Lambda_a$ is sampled uniformly from all $a$'s. Notice that $\Lambda_a$ acts the same with $\Lambdadep$ on the quantum state $\frac{I+P_{a'}}{2^n}$ for $a'\neq a$. Thus in the $t=1,\ldots,a-1$ iterations, the input state is 
\begin{align*}
\frac{I+P_t}{2^n}\otimes\ket{0}\bra{0}\otimes\left(\left(1-\frac{1}{8^n}\right)^{t-1}\ket{0}\bra{0}+\left(1-\left(1-\frac{1}{8^n}\right)^{t-1}\right)\ket{1}\bra{1}\right)
\end{align*}
for each iteration $t$, the first $n$ qubits of the output state after the unknown channel $\Lambda_a$ is
\begin{align*}
\Lambda_a\left(\frac{I+P_t}{2^n}\right)=\frac{I}{2^n}.
\end{align*}
Similar to the first case, our protocol will end in the quantum state
\begin{align*}
\frac{I+P_{t+1}}{2^n}\otimes\ket{0}\bra{0}\otimes\left(\left(1-\frac{1}{8^n}\right)^{t}\ket{0}\bra{0}+\left(1-\left(1-\frac{1}{8^n}\right)^t\right)\ket{1}\bra{1}\right)
\end{align*} 
and enter the next ($(t+1)$-th) iteration. 

However, in the $a$-th iteration, the first $n$ working qubits of the output state after the channel $\Lambda_a$ is
\begin{align*}
\Lambda_a\left(\frac{I+P_a}{2^n}\right)=\frac{I+P_a}{2^n}.
\end{align*}
We then consider the effect of the data processing channel $\mathcal{E}_{\text{data},a}$ and the channel in each epoch $\mathcal{E}_{\text{epo},a}$ under this setting. In particular, we have
\begin{align*}
&\mathcal{E}_{\text{data},a}\left(\frac{I+P_a}{2^n}\otimes \ket{0}\bra{0}\right)=\frac{I}{2^n}\otimes \ket{0}\bra{0},\\
&\mathcal{E}_{\text{epo},a}^k\left(\frac{I+P_a}{2^n}\otimes \ket{0}\bra{0}\right)=\frac{I}{2^n}\otimes \ket{0}\bra{0},
\end{align*}
for $k=1,...,3n$. Thus, we will obtain the following quantum state on the first $(n+1)$ qubits after the $3n$ epochs:
\begin{align*}
\mathcal{E}_{\text{epo},a}^{3n}\left(\frac{I+P_a}{2^n}\otimes \ket{0}\bra{0}\right)=\frac{I}{2^n}\otimes \ket{0}\bra{0}.
\end{align*}
We then apply the channel $\mathcal{E}_\text{anc}$ on the two ancilla qubits. The output state reads:
\begin{align*}
\mathcal{E}_\text{anc}\left(\ket{0}\bra{0}\otimes\left(\left(1-\frac{1}{8^n}\right)^{a-1}\ket{0}\bra{0}+\left(1-\left(1-\frac{1}{8^n}\right)^{a-1}\right)\ket{1}\bra{1}\right)\right)=\frac{I}{2}\otimes\ket{1}\bra{1}.
\end{align*}
After we implement the channel $\mathcal{E}_{\text{prep},a}$, we enter the next iteration $(a+1)$ with 
\begin{align*}
\frac{I+P_{a+1}}{2^n}\otimes\ket{0}\bra{0}\otimes\ket{1}\bra{1}.
\end{align*}

In the $t=a+1,\ldots,4^n-1$-th turn, the input state is then $\frac{I+P_t}{2^n}\otimes\ket{0}\bra{0}\otimes\ket{1}\bra{1}$ at the beginning of each iteration. Therefore, the first $n$ qubits of the output state after the unknown channel $\Lambda_a$ is
\begin{align*}
\Lambda_a\left(\frac{I+P_t}{2^n}\right)=\frac{I}{2^n}
\end{align*}
as $\Tr(P_aP_{a'})=0$ for $a\neq a'$. Again, we obtain
\begin{align*}
\mathcal{E}_{\text{epo},t}^{3n}\left(\frac{I+P_t}{2^n}\otimes \ket{0}\bra{0}\right)=\frac{I}{2^n}\otimes\left(\frac{1}{8^n}\ket{0}\bra{0}+\left(1-\frac{1}{8^n}\right)\ket{1}\bra{1}\right)
\end{align*}
on the first $(n+1)$ qubits. However, the second ancilla qubit will maintain $\ket{1}\bra{1}$ after we implement $\mathcal{E}_\text{anc}$ on the two ancilla qubits. Finally, we will always enter the next iteration with the state
\begin{align}\label{eq:SpikeCase2}
\frac{I+P_{t+1}}{2^n}\otimes\ket{0}\bra{0}\otimes\ket{1}\bra{1}.
\end{align}

After we enumerate all $a$'s, the final state of the second ancilla qubit (as shown in \eqref{eq:SpikeCase1} and \eqref{eq:SpikeCase2}) after the $(4^n-1)$-th iteration will keep in $\ket{1}\bra{1}$.

Therefore, we can distinguish between the two cases with probability $1-\exp(-\Omega(n))$ after we measure the ancilla qubit using the $\{\ket{0}\bra{0},\ket{1}\bra{1}\}$ basis. The number of the unknown Pauli channel $\Lambda$ used is $3n(4^n-1)=\tilde{O}(4^n)$. This finishes the proof for \Cref{thm:SpikeDetCon}.
\end{proof}

\paragraph{Learning the index $a$ in the distinguishing task. }Starting from the protocol above for distinguishing between the two cases of depolarization channel $\Lambdadep$ and Pauli channel with one uniformly picked spike $\Lambda_a=\frac{I+P_a\Tr(P_a\cdot)}{2^n}$, we now consider a slightly harder distinguishing problem, which we refer to as the \textbf{Pauli Spike Search} problem (as defined in \Cref{prob:SpikeDetect}). Given an unknown channel $\Lambda$, we not only want to distinguish between the two cases
\begin{itemize}
    \item The unknown channel is the depolarization channel $\Lambdadep=\frac{1}{2^n}I\Tr(\cdot)$
    \item The unknown channel is of the form $\Lambda_a=\frac{1}{2^n}[I\Tr(\cdot)+ P_a\Tr(P_a(\cdot))]$ for some $a\in\mathbb{Z}^{2n}_2 \backslash \{0\}$
\end{itemize}
but we also want to determine exactly what $a$ is in the latter case.

For this problem, we can obtain the following upper bound. We first use a single measurement to distinguish between the two cases using \Cref{thm:SpikeDetCon}. For the second case, we use $2n$ measurements to decide the $i=1,\ldots,n$ bit of the Pauli string $P_a$ using binary search. In particular, we enumerate all Pauli strings with the $i$-th qubit being $I,X,Y,Z$ using a protocol similar to \Cref{thm:SpikeDetCon} and \Cref{algo:AncillaConDis} except that we enumerate a subset of $a$'s instead of all $a$'s. We formalize the result as:
\begin{theorem}\label{thm:Charsingle}
Given an unknown channel $\Lambda$ in one of the following cases:
\begin{itemize}
    \item The unknown channel is the depolarization channel $\Lambdadep=\frac{1}{2^n}I\Tr(\cdot)$.
    \item The unknown channel is chosen from a set of channels $\{\Lambda_a\}_{a\in\Z_2^{2n}\backslash \{0\}}$ with $\Lambda_a=\frac{1}{2^n}[I\Tr(\cdot)+ P_a\Tr(P_a(\cdot))]$.
\end{itemize}
There exists a non-adaptive, concatenating strategy with a two-qubit quantum memory that can decide which case it is and further obtain $a$ for the second case with high probability using $O(n)$ measurements and $\tilde{O}(4^{n})$ queries to $\Lambda$.
\end{theorem}

\section{Efficient Pauli channel estimation using logarithmic ancillas and channel concatenation}\label{sec:ChanEstCon}
In the previous section, we have shown that the easier task of Pauli spike detection in \Cref{prob:SpikeDetect}, which is shown to require an exponential number of measurements for any protocols with only $(1-o(1))n$ ancilla qubits or only channel concatenations, can be solved using a single measurement if we have only $O(1)$ ancillas and channel concatenations. Through this simple but artificial example, we have already obtained evidence that one can obtain exponential speedups for learning properties of quantum channels using $k<(1-o(1))n$ ancilla qubits due to the usage of concatenation, which is in sharp contrast to the case of learning quantum states. In this section, we go back to the original Pauli channel eigenvalue estimation problem in \Cref{prob:def}. We prove that even for learning Pauli channels within a constant error, the exponential reduction in measurement complexity can be obtained using $k=O(\log n)$ ancilla qubits. For convenience, we restate \Cref{prob:def} here. Given an unknown Pauli channel 
\begin{align}
\Lambda(\cdot)=\frac{I\Tr(\cdot)+\sum_{a=1}^{4^n-1}\lambda_aP_a\Tr(P_a(\cdot))}{2^n},
\end{align}
our goal is to learn $\{\hat{\lambda}_a\}$ in the infinity norm such that $\abs{\lambda_a-\hat{\lambda}_a}\leq\epsilon$ for any $a\in\{1,...,4^n-1\}$. 

\begin{theorem}\label{thm:ChanEstCon}
Given accuracy demand $\epsilon$, there exists a non-adaptive, concatenating strategy with $k=O(\log n/\epsilon^2)$ ancilla qubits (logarithmic quantum memory) that solves the Pauli channel eigenvalue estimation with a high probability using $\tilde{O}(n^2/\epsilon^2)$ measurements. The protocol uses up to $\tilde{O}(4^n/\epsilon^2)$ levels of channel concatenation in each measurement to the unknown Pauli channel $\Lambda$.
\end{theorem}

\noindent In the regime of $\epsilon=\Theta(1)$, our protocol in \Cref{thm:ChanEstCon}, using only a logarithmic number of ancilla qubits, provides an exponential speedup on the measurement complexity compared to any ancilla-free strategies or concatenation-free strategies with $(1-o(1))n$-qubit quantum memory. 

The remainder of this section is organized as follows. We first provide some tools and intermediate results for the proof. In \Cref{sec:SingleComp}, we move a step forward from \Cref{sec:ConMeasProtocol} and propose an algorithm that can learn any Pauli channel with a single nonzero $\lambda_a$. In \Cref{sec:ThresholdSearch}, we provide the formal definition of Pauli channel threshold search and propose a protocol to solve this task based on \Cref{sec:SingleComp}. In \Cref{sec:OnlinePauli}, we provide the performance analysis for online learning Pauli channels. Finally, we wrap up all the intermediate results and provide the proof for \Cref{thm:ChanEstCon} in \Cref{sec:Wrap}.

\subsection{Learning single component (positive) Pauli channels}\label{sec:SingleComp}
In this part, we consider a special case of the Pauli channel learning problem in \Cref{prob:def}, which is also an extension of the spike detection (resp. search) problem in \Cref{prob:SpikeDetect}. Specifically, we consider the following problem:
\begin{problem}[Distinguishing (Learning) single component (positive) Pauli channels]\label{prob:SingleComp}
We consider the distinguishing task between the following cases happening with equal probability
\begin{itemize}
    \item The unknown channel is the depolarization channel $\Lambdadep=\frac{I}{2^n}\Tr(\cdot)$.
    \item The unknown channel is sampled uniformly from the set of channels $\{\Lambda_a\}_{a\in\mathbb{Z}_2^{2n}\backslash\{0\}}$ with $\Lambda_a=\frac{I}{2^n}[\Tr(\cdot)+\lambda_aP_a\Tr(P_a(\cdot))]$ for some $\lambda_a\in(\epsilon,1]$.
\end{itemize}
For the case of learning single component Pauli channels, we are asked to further provide the value of $a$ in the second case.
\end{problem}

\noindent When $\lambda_a=1$, \Cref{prob:SingleComp} reduces to \Cref{prob:SpikeDetect}. A straightforward intuition is to employ \Cref{algo:AncillaConDis}, which can solve \Cref{prob:SpikeDetect} using a single measurement, to solve \Cref{prob:SingleComp}. Unfortunately, this intuition does not work. When $\lambda_a\lesssim 1-\Omega(1/n)$, we can observe that repeating $\mathcal{E}_{\text{epo},a}$ for $3n$ times will also result in an exponentially small coefficient for $\ket{0}\bra{0}$ for the second case. Although we can make some alternation on the value of epoch number $m$, no matter what $m$ we choose we can not separate the two cases when $\lambda_a\leq 1-\Omega(1/n)$. To address this issue, we propose a new algorithm with the following performance guarantee:

\begin{lemma}\label{lem:LearnSingle}
There exists a non-adaptive, concatenating strategy (\Cref{algo:LearnSingle}) with $O(\log n/\epsilon^2)$ ancilla qubits that can distinguish (resp. learn) the single component Pauli channel in \Cref{prob:SingleComp} with a high probability using $O(1)$ (resp. $O(n)$) measurements. The protocol is explicit with $\tilde{O}(4^n/\epsilon^2)$ queries to the unknown Pauli channel $\Lambda$ before each measurement.
\end{lemma}

\noindent The brief idea to address the issue in \Cref{algo:AncillaConDis} in solving \Cref{prob:SingleComp} is to perform a special quantum state purification to boost the purity of the ancilla qubit. Before we provide the proof for \Cref{lem:LearnSingle}, we first introduce our special quantum state purification process.

\subsubsection{The purification process}\label{sec:SpecialPurification}
We now consider the purification task of a noisy single-qubit state in the depolarization noise assumption
\begin{align*}
\rho=(1-\lambda)\frac{I}{2}+\lambda\ket{0}\bra{0}
\end{align*}
at $\lambda>0$ into some state closer to $\ket{0}\bra{0}$ using $N$ copies of $\rho$. 
\begin{remark}\label{rem:DiffPurification}
We remark that the purification task considered here is significantly different from the standard purification task~\cite{cirac1999optimal,keyl2001rate,childs2023streaming}. In these works, the noisy states are $\rho=(1-\lambda)\frac{I}{2}+\lambda\ket{0}_{\hat{n}}\bra{0}_{\hat{n}}$ along some \textbf{unknown} direction $\hat{n}$ on the Block sphere. It is proved that any purification protocol that purifies the state within $\epsilon$ distance from $\ket{0}_{\hat{n}}\bra{0}_{\hat{n}}$ requires $\Omega(1/\epsilon)$~\cite{cirac1999optimal}. However, in our setting, we know the target pure state is $\ket{0}\bra{0}$. Thus we are able to surpass this limitation.
\end{remark}

\noindent Suppose we are given $N$ copies of noisy state $\rho=(1-\lambda)\frac{I}{2}+\lambda\ket{0}\bra{0}=\frac{1+\lambda}{2}\ket{0}\bra{0}+\frac{1-\lambda}{2}\ket{1}\bra{1}$. We can extend $\rho^{\otimes N}$ along the computational basis. It is straightforward to observe that the density matrix under the computational basis for this state is diagonal. In particular, we consider a bit string $x$ with $\abs{x}$ $1$'s, the coefficient for $\ket{x}\bra{x}$ follows the binomial distribution. The state can be written as
\begin{align*}
\rho^{\otimes N}=\sum_{x\in\{0,1\}^N}\binom{N}{\abs{x}}p^{\abs{x}}(1-p)^{N-\abs{x}}\ket{x}\bra{x},
\end{align*}
where $p=(1-\lambda)/2$. Now, we consider the following two-outcome POVM $\{E_\text{pure}^N,I-E_\text{pure}^N\}$ with
\begin{align}\label{eq:PurificationPOVM}
E_\text{pure}^N=\sum_{x\in\{0,1\}^N:\abs{x}\leq N/2}\ket{x}\bra{x},\quad I-E_\text{pure}^N=\sum_{x\in\{0,1\}^N:\abs{x}>N/2}\ket{x}\bra{x}.
\end{align}

When $\lambda>0$,the probability corresponding to the element $E_\text{pure}^N$ is lower-bounded by
\begin{align}\label{eq:PurificationPosLower}
\Tr(E_\text{pure}^N\rho^{\otimes N})\geq1-\exp(-N\lambda^2/2)
\end{align}
according to the concentration inequality in \eqref{eq:Concentration}. It is also upper-bounded by
\begin{align}\label{eq:PurificationPosUpper}
\Tr(E_\text{pure}^N\rho^{\otimes N})\leq\frac{1}{2}\left(1+\sqrt{1-\exp\left(-\frac{N\lambda^2}{1-\lambda^2}\right)}\right)
\end{align}
according to the anti-concentration inequality in \eqref{eq:AntiConcentration}.

Although we define $\lambda>0$, we can also consider the case when $\lambda\leq 0$. At $\lambda=0$, we have
\begin{align}\label{eq:PurificationZero}
\Tr(E_\text{pure}^N\rho^{\otimes N})=\Tr((I-E_\text{pure}^N)\rho^{\otimes N})=\frac12.
\end{align}
And at $\lambda<0$, we always have
\begin{align}\label{eq:PurificationNeg}
\Tr(E_\text{pure}^N\rho^{\otimes N})\leq\frac{1+\lambda}{2}\leq\frac 12.
\end{align}

In the following, we consider a $(N+1)$-qubit special state purification channel acting on $N$ copies of $\rho$ with an ancilla qubit to record the information in \Cref{algo:SpecialPurification}.

\begin{algorithm}[htbp]
\caption{Special quantum state purification ({\sf StatePurification($N$)})}
\label{algo:SpecialPurification}
\begin{algorithmic}[1]
\REQUIRE $N$ copies of noisy quantum state $\rho^{\otimes N}$ where $\rho=(1-\lambda)\frac{I}{2}+\lambda\ket{0}\bra{0}$ with an ancilla qubit $\rho_\text{anc}$. 
\ENSURE Output $\Tr(E_\text{pure}^N\rho^{\otimes N})\ket{0}\bra{0}+\left(1-\Tr(E_\text{pure}^N\rho^{\otimes N})\right)\ket{1}\bra{1}$ when the ancilla qubit is initialized with $\rho_\text{anc}=\ket{0}\bra{0}$, where $E_\text{pure}^N$ is defined in \eqref{eq:PurificationPOVM}.
\STATE We implement a control channel (defined in \Cref{exp:ControlChan}) $\mathcal{E}_\text{pure}(\cdot)$ on the $N$ noisy copies and the ancilla qubit as:
\begin{align*}
\mathcal{E}_\text{pure}(\cdot):\rho^{\otimes N}\otimes\rho_\text{anc}\to\Tr(E_\text{pure}^N\rho^{\otimes N})\cdot\frac{I}{2^N}\otimes\rho_\text{anc}+\Tr((I-E_\text{pure}^N)\rho^{\otimes N})\cdot\frac{I}{2^N}\otimes\ket{1}\bra{1}.
\end{align*}
\end{algorithmic}
\end{algorithm}

\noindent When we initialize the ancilla qubit with $\rho_\text{anc}=\ket{0}\bra{0}$, we can directly verify that this channel outputs $\Tr(E_\text{pure}^N\rho^{\otimes N})\ket{0}\bra{0}+\left(1-\Tr(E_\text{pure}^N\rho^{\otimes N})\right)\ket{1}\bra{1}$ with 
\begin{align*}
\Tr(E_\text{pure}^N\rho^{\otimes N})\geq1-\exp(-N\lambda^2/2).
\end{align*}

\subsubsection{The algorithm}
We now provide the algorithm and the proof for \Cref{lem:LearnSingle}. Similar to \Cref{sec:ConMeasProtocol}, we encode all the Pauli strings except $I^{\otimes n}$ using $a\in\{1,\ldots,4^n-1\}$. We denote $I^{\otimes n}$ as $I$ in the following analysis. We fix the number of ancilla qubits as $k=\lceil2\log n/\epsilon^2\rceil+2$. We denote $N=2\lceil\log n/\epsilon^2\rceil=k-2$. We also assign indices to the working qubits from $1$ to $n$, and the $k$ ancilla qubits are denoted as the $(n+1)$-th to the $(n+N+2)$-th qubit. 

Given an arbitrary $a$, we still consider two quantum states $\rho_{\text{dep}}=\frac{I}{2^n}$ and $\rho_a=\frac{I+\lambda_a P_a}{2^n}$. As illustrated in the proof for \Cref{thm:SpikeDetCon} (i.e.,~\eqref{eq:POVMDistr}), if we measure these two states using the POVM along the eigenvectors $\left\{\ket{\varphi_a^j}\bra{\varphi_a^j}\right\}_{j=1}^{2^n}$ corresponding to eigenvalue $\pm 1$ of the Pauli string $P_a$, we will obtain the distributions 
$\left(\frac{1}{2^n},...,\frac{1}{2^n},\frac{1}{2^n},...,\frac{1}{2^n}\right)$ and $\left(\frac{1+\lambda_a}{2^n},...,\frac{1+\lambda_a}{2^n},\frac{1-\lambda_a}{2^n},...,\frac{1-\lambda_a}{2^n}\right)$ for the two states, where we have half of entries being $\frac{1+\lambda_a}{2^n}$ and the other half being $\frac{1-\lambda_a}{2^n}$ in the second case. The total variation distance between two distributions is $O(\lambda_a)$. We inherit the notations $\left\{\ket{\varphi_a^{j,+}}\bra{\varphi_a^{j,+}}\right\}_{j=1}^{2^{n-1}}$ as the set of POVM elements corresponding to positive eigenvalue $+1$ of $P_a$ and $\left\{\ket{\varphi_a^{j,-}}\bra{\varphi_a^{j,-}}\right\}_{j=2^{n-1}+1}^{2^{n}}$ corresponding to negative eigenvalue $-1$ of $P_a$. We consider the two-outcome POVM given by $\{E_{\text{data},a},I-E_{\text{data},a}\}=\left\{\sum_{j=1}^{2^{n-1}}\ket{\varphi_a^{j,+}}\bra{\varphi_a^{j,+}},\sum_{j=2^{n-1}+1}^{2^n}\ket{\varphi_a^{j,-}}\bra{\varphi_a^{j,-}}\right\}$ as in \eqref{eq:POVMBasis}, the first case will give two outcomes with probability $\left(\frac12,\frac12\right)$ while the second case will always give the probability distribution $\left(\frac{1+\lambda_a}{2},\frac{1-\lambda_a}{2}\right)$. Based on this observation, we consider the following protocol in the following \Cref{algo:LearnSingle}:

\begin{algorithm}[htbp]
\caption{The protocol for distinguishing (learning) single component Pauli channels using $k=O(\log n/\epsilon^2)$ ancilla qubits ({\sf LearnSingle($n$,$\epsilon$)})}
\label{algo:LearnSingle}
\begin{algorithmic}[1]
\REQUIRE An unknown $n$-qubit Pauli channel $\Lambda$ and an accuracy demand $\epsilon$.
\ENSURE Distinguish whether the unknown channel is the depolarization channel $\Lambdadep$ or is sampled uniformly from $\{\Lambda_a\}_{a\in\Z_2^{2n}\backslash \{0\}}$ with $\Lambda_a=\frac{1}{2^n}[I\Tr(\cdot)+ \lambda_a P_a\Tr(P_a(\cdot))]$ for $\lambda_a\in(\epsilon,1]$.
\STATE We initialize the working qubits with $\frac{I+P_1}{2^n}$ and the $k$ ancilla qubits with $\ket{0}\bra{0}^{\otimes k}$. We set $k=\lceil 2\log n/\epsilon^2\rceil+2$ and $N=k-2=\lceil 2\log n/\epsilon^2\rceil$.
\FOR{Iteration $t=1,...,4^n-1$}
\STATE We start with the initial state $\frac{I+P_t}{2^n}\otimes \ket{0}\bra{0}^{\otimes N+1}$ in the first $N+1$ qubits 
\FOR{Epoch $k=1,...,3n$}
\STATE We input the current state into $\Lambda_n\otimes I$, which is the unknown channel on the $n$ working qubits and the identity channel on the $k$ ancilla qubits.
\STATE We apply a sequences of $(n+1)$-qubit quantum channels $\mathcal{E}_{\text{data},t}$ (defined in \eqref{chan:TDataRec}) acting on the working qubits and the $j$-th ancilla qubit for $j=1,...,N$. After each $\mathcal{E}_{\text{data},t}$ at $j=1,...,N-1$, we re-initialize the working qubits using the working qubit initialization channel $\mathcal{E}_{\text{work ini},t}$ defined in \eqref{chan:WorkInitial} and input the current state into $\Lambda_n\otimes I$ (i.e., query the unknown channel $\Lambda$ again).
\STATE Next, we perform {\sf StatePurification($N$)} (\Cref{algo:SpecialPurification}) on the first $N$ ancilla qubits and the $(N+1)$-th ancilla qubit. The $(N+1)$-th ancilla qubit is chosen as the ancilla qubit in {\sf StatePurification($N$)}.
\STATE We re-initialize the first $n+N$ qubits into $\frac{I+P_t}{2^n}\otimes \ket{0}\bra{0}^{\otimes N}$ using the following epoch initialization channel $\mathcal{E}_{\text{epo ini},t}$:
\begin{align}\label{chan:EpoT}
\mathcal{E}_{\text{epo ini},t}:\rho\to\frac{I+P_t}{2^n}\otimes \ket{0}\bra{0}^{\otimes N}\Tr(\rho).
\end{align}
\ENDFOR
\STATE We perform the channel $\mathcal{E}_\text{anc}$ (defined in \eqref{chan:AncRec}) on the last two ancilla qubits.
\STATE We perform a channel $\mathcal{E}_{\text{prep},t}$ preparing the input state for the next iteration $t+1$: 
\begin{align}\label{chan:PrepT}
\mathcal{E}_{\text{prep},t}: \rho\to\frac{I+P_{t+1}}{2^n}\otimes\ket{0}\bra{0}^{\otimes N+1}\Tr(\rho)
\end{align}
on all qubits except the last ancilla qubit.
\ENDFOR
\STATE We measure the last ancilla qubit using the $\{\ket{0}\bra{0},\ket{1}\bra{1}\}$ basis. Output the first case if the probability for detecting $\ket{1}\bra{1}$ $\leq e^{-3}$ and output the second case otherwise.
\end{algorithmic}
\end{algorithm}

We now analyze the mechanism of this protocol and prove \Cref{lem:LearnSingle}. We start with the distinguishing problem of \Cref{prob:SingleComp}. Similar to the proof in the previous section, we consider the performance in the two cases.

\paragraph{Case 1: }In this case, the unknown channel is a depolarization channel $\Lambdadep$. In each epoch for any $t$, the output state on the first $n+N$ qubits after $\Lambdadep\otimes I$ is
\begin{align*}
\Lambdadep\otimes I\left(\frac{I+P_t}{2^n}\otimes\ket{0}\bra{0}^{\otimes N}\right)=\frac{I}{2^n}\otimes\ket{0}\bra{0}^{\otimes N}.
\end{align*}
Thus, the states on each of the first $N$ ancilla qubits after we sequentially perform $\mathcal{E}_{\text{data},t}$ on the working qubits and the $j$-th ancilla qubit for $j=1,...,N$ will be $\frac{I}{2}$, which is exactly the single-qubit maximal mixed state. Therefore, after the special state purification channel {\sf StatePurification($N$)} defined in \Cref{algo:SpecialPurification}, we obtain 
\begin{align*}
\frac{1}{2^m}\ket{0}\bra{0}+\left(1-\frac{1}{2^m}\right)\ket{1}\bra{1}
\end{align*}
after the $m$-th epoch on the $(N+1)$-th qubit (or the $(n+N+1)$-th ancilla) according \eqref{eq:PurificationZero}. Thus the final state on the $(n+N+1)$-th qubit after $m=3n$ epochs is again
\begin{align*}
\frac{1}{8^n}\ket{0}\bra{0}+\left(1-\frac{1}{8^n}\right)\ket{1}\bra{1}.
\end{align*}
By induction, we can observe that the state on the last ancilla qubit (the $(n+k)$-th qubit) after the $t$-th iteration is
\begin{align}\label{eq:SingleCase1}
\left(1-\frac{1}{8^n}\right)^t\ket{0}\bra{0}+\left(1-\left(1-\frac{1}{8^n}\right)^t\right)\ket{1}\bra{1}.
\end{align}
After $4^n-1$ iterations, the coefficient of the support $\ket{1}\bra{1}$ is still exponentially small of $\sim 2^{-n}$.

\paragraph{Case 2: }Now, we consider the case when $\lambda_a>\epsilon$. In each epoch $t<a$, the process remains the same with the first case as 
\begin{align*}
\Lambda_a\otimes I\left(\frac{I+P_t}{2^n}\otimes\ket{0}\bra{0}^{\otimes N}\right)=\frac{I}{2^n}\otimes\ket{0}\bra{0}^{\otimes N},\ \forall t\neq a.
\end{align*}
By induction the same as the first case, we can observe that the state on the last ancilla qubit (the $(n+k)$-th qubit) after the $t$-th iteration is
\begin{align*}
\left(1-\frac{1}{8^n}\right)^t\ket{0}\bra{0}+\left(1-\left(1-\frac{1}{8^n}\right)^t\right)\ket{1}\bra{1}.
\end{align*}

At the $a$-th iteration, however, the output state on the first $n+N$ qubits after $\Lambda_a\otimes I$ is 
\begin{align*}
\Lambda_a\otimes I\left(\frac{I+P_a}{2^n}\otimes\ket{0}\bra{0}^{\otimes N}\right)=\frac{I+\lambda_aP_a}{2^n}\otimes\ket{0}\bra{0}^{\otimes N}.
\end{align*}
Thus, the states on each of the first $N$ ancilla qubits after we sequentially perform $\mathcal{E}_{\text{data},t}$ on the working qubits and the $j$-th ancilla qubit for $j=1,...,N$ will be
\begin{align*}
\frac{1+\lambda_a}{2}\ket{0}\bra{0}+\frac{1-\lambda_a}{2}\ket{1}\bra{1}=\lambda_a\ket{0}\bra{0}+(1-\lambda_a)\frac{I}{2}.
\end{align*}

Now, we consider the effect of step 7 of \Cref{algo:LearnSingle}. In this step, we perform the improved state purification {\sf StatePurification($N$)} on the first $N$ ancilla qubits and the $(N+1)$-th ancilla qubit. According to the lower bound in \Cref{eq:PurificationPosLower} and $N=\lceil2\log n/\epsilon^2\rceil$, we have
\begin{align*}
\Tr(E_\text{pure}^N\rho^{\otimes N})\geq1-\exp(-N\lambda^2/2)\geq1-\exp(-N\epsilon^2/2)\geq 1-\frac{1}{n}.
\end{align*}
At $m=1$, the $(N+1)$-th ancilla qubit is initialized in $\ket{0}\bra{0}$. Thus, the output state on the $(N+1)$-th ancilla qubit has at least $1-\frac{1}{n}$ coefficient for the support on $\ket{0}\bra{0}$. By induction, after $m$ epochs, the output state on the $(N+1)$-th ancilla qubit can be written as
\begin{align*}
\Tr(E_\text{pure}^N\rho^{\otimes N})^m\ket{0}\bra{0}+\left(1-\Tr(E_\text{pure}^N\rho^{\otimes N})^m\right)\ket{1}\bra{1}.
\end{align*}

At $m=3n$, without the loss of generality, we write the state on the $(N+1)$-th ancilla qubit as:
\begin{align*}
e^{-3}\ket{0}\bra{0}+(1-e^{-3})\rho'
\end{align*}
for some single-qubit state $\rho'$. Here, we use the fact that at large $n$, we have $(1-1/n)^{3n}\approx e^{-3}$. In the worst case when $\rho'=\ket{1}\bra{1}$, we will obtain the state
\begin{align*}
e^{-3}\ket{0}\bra{0}+(1-e^{-3})\ket{1}\bra{1}.
\end{align*}
after the $3n$-th epoch on the $(N+1)$-th ancilla qubit (or the $(n+N+1)$-th qubit). In any other case for $\rho'$, the coefficient of the support $\ket{0}\bra{0}$ will be even larger. After we perform $\mathcal{E}_\text{anc}$ on the last two ancilla qubits (i.e., the $(N+1)$- and the $(N+2)$-th ancilla qubits), we obtain
\begin{align*}
\left(1-\frac{1}{8^n}\right)^{a-1}\left(1-e^{-3}\right)\ket{0}\bra{0}+\left(1-\left(1-\frac{1}{8^n}\right)^{a-1}\left(1-e^{-3}\right)\right)\ket{1}\bra{1}
\end{align*}
on the last ancilla qubit.

The remaining iterations $t=a+1,...,4^n-1$ are similar with the iterations $t=1,...,a-1$. The $(N+2)$-th ancilla qubit is 
\begin{align}\label{eq:SingleCase2}
\left(1-\frac{1}{8^n}\right)^{4^n-2}\left(1-e^{-3}\right)\ket{0}\bra{0}+\left(1-\left(1-\frac{1}{8^n}\right)^{4^n-2}\left(1-e^{-3}\right)\right)\ket{1}\bra{1}
\end{align}
after the last iteration.

\paragraph{The proof for \Cref{lem:LearnSingle}: }Based on the analysis of \Cref{algo:LearnSingle} in the two cases, we can observe that when we perform the measurement on the last ancilla qubit in \eqref{eq:SingleCase1} and \eqref{eq:SingleCase2}, the probability of getting result $\ket{1}\bra{1}$ is exponentially small as $\sim2^{-n}$ for the first case and is 
\begin{align*}
\left(1-\left(1-\frac{1}{8^n}\right)^{4^n-2}\left(1-e^{-3}\right)\right)\sim e^{-3}
\end{align*}
for the second case. To distinguish between these two cases, we only need $O(1)$ measurements. This finishes the proof for the measurement complexity of the distinguishing problem in \Cref{lem:LearnSingle}. 

For the learning problem, we employ the same technique for proving \Cref{thm:Charsingle}. We use a $O(n)$ overhead on the measurement complexity to decide the $i=1,\ldots,n$ bit of the Pauli string $P_a$ using binary search. Thus, the measurement complexity for learning a single component Pauli channel is given by $O(n)$. The number of the unknown Pauli channel $\Lambda$ used is $3nN(4^n-1)=\tilde{O}(4^n/\epsilon^2)$, which finishes the proof for \Cref{lem:LearnSingle}.

\subsection{Threshold search for Pauli channels}\label{sec:ThresholdSearch}
In this section, we consider the Pauli channel threshold search problem, which is an analog of the quantum threshold search problem in state learning~\cite{aaronson2018shadow,aaronson2019gentle,buadescu2021improved}. We extend and formulate this problem as below:
\begin{problem}[Pauli channel threshold channel with parameters $(\epsilon,\epsilon')$]\label{prob:ThresholdSearch}
Suppose we are given 
\begin{itemize}
    \item Parameters $0\leq\epsilon'\leq\epsilon\leq1$.
    \item An unknown $n$-qubit Pauli channel $\Lambda(\cdot)=\frac{1}{2^n}[I\Tr(\cdot)+\sum_{a=1}^{4^n-1}\lambda_aP_a\Tr(P_a(\cdot))]$ that we can query.
    \item An $n$-qubit hypothesis Pauli channel $\hat{\Lambda}(\cdot)=\frac{1}{2^n}[I\Tr(\cdot)+\sum_{a=1}^{4^n-1}\hat{\lambda}_aP_a\Tr(P_a(\cdot))]$.
\end{itemize}
The goal is to output either
\begin{itemize}
    \item an $a^*\in\{1,...,4^n-1\}$ such that $\lambda_{a^*}>\hat{\lambda}_{a^*}+\epsilon'$, or
    \item for all $a$'s we have $\lambda_a\leq\hat{\lambda}_a+\epsilon$,
\end{itemize}
with a high success probability.
\end{problem}

In the case of state learning~\cite{buadescu2021improved}, the learner is given unentangled copies of unknown quantum threshold $\rho$, a list of $M$ two-outcome POVMs $E_1,...,E_M$, and a list of threshold $\theta_1,...,\theta_M\in[0,1]$. The goal of threshold search is to output either an $i^*$ such that $\Tr(E_{i^*}\rho)>\theta_{i^*}-\epsilon$, or $\Tr(E_i\rho)\leq\theta_i$ for all $i$. This problem was originally proposed in Ref.~\cite{aaronson2018shadow} named ``gentle search problem" and was shown to be solved using $\tilde{O}(\log^4(M)/\epsilon^2)$ copies of $\rho$. Later, this problem is shown to be solved using $\tilde{O}(\log^2(M)/\epsilon^2)$ samples~\cite{buadescu2021improved}. However, both these algorithms require joint and gentle\footnote{Gentle measurements are the measurements after which the post-measurement states are within bounded distance from the original states, see~\cite{aaronson2019gentle} for the formal definition.} measurements on sample batches of size $\poly\log(M)$. The size of quantum memory in this sense is at least $O(\poly\log(M)\cdot n)$, which is much larger than our expectation of logarithmic qubits in the setting of channel learning. To the best of our knowledge, there is no evidence showing that the quantum threshold search can be implemented sample-efficiently in the setting of state learning. In the process learning scenario, however, due to the channel concatenation, we are able to derive the following theorem showing that there exists a strategy that can solve \Cref{prob:ThresholdSearch} using $k=O(\log n/\epsilon^2)$ ancilla qubits and $O(n)$ measurements at $\epsilon'=\Theta(\epsilon/\sqrt{\log n})$.
\begin{lemma}\label{lem:ThresholdSearch}
There exists an algorithm that can solve \Cref{prob:ThresholdSearch} with parameters $(\epsilon',\epsilon)$ where
\begin{align*}
\epsilon'=\Theta\left(\frac{\epsilon}{\sqrt{\log n}}\right),
\end{align*}
with a $k=O(\log n/\epsilon^2)$ quantum memory and $O(n)$ measurements with a high probability.
\end{lemma}

\noindent At first glance, it is not straightforward to observe the connection between the Pauli channel threshold search in \Cref{prob:ThresholdSearch} and the learning single component Pauli channel problem in \Cref{prob:SingleComp}. To reveal the connection and introduce the algorithm for \Cref{prob:ThresholdSearch} in \Cref{lem:ThresholdSearch}, we first introduce an intermediate problem of Pauli channel hypothesis testing and show that the algorithm {\sf LearnSingle($n$,$\epsilon$)} in \Cref{algo:LearnSingle} can effectively solve it in the following part.

\subsubsection{Pauli channel hypothesis testing}
We now consider the following special case of the Pauli channel threshold search called the Pauli channel hypothesis testing problem as follows:
\begin{problem}[Pauli channel hypothesis testing with parameters $(\epsilon',\epsilon)$]\label{prob:PauliHypoTest}
Suppose we are given 
\begin{itemize}
    \item Parameters $0\leq\epsilon'\leq\epsilon\leq1$.
    \item An unknown $n$-qubit Pauli channel $\Lambda(\cdot)=\frac{1}{2^n}[I\Tr(\cdot)+\sum_{a=1}^{4^n-1}\lambda_aP_a\Tr(P_a(\cdot))]$ that we can query.
\end{itemize}
The goal is to output either
\begin{itemize}
    \item an $a^*\in\{1,...,4^n-1\}$ such that $\lambda_{a^*}>\epsilon'$, or
    \item for all $a$'s we have $\lambda_a\leq\epsilon$,
\end{itemize}
with a high success probability.
\end{problem}

For the above problem, we now prove the following lemma regarding the performance of the algorithm {\sf LearnSingle($n$,$\epsilon$)} in \Cref{algo:LearnSingle}.
\begin{lemma}\label{lem:PauliHypoTest}
{\sf LearnSingle($n$,$\epsilon$)} in \Cref{algo:LearnSingle} can solve \Cref{prob:PauliHypoTest} with parameters $(\epsilon',\epsilon)$ where
\begin{align*}
\epsilon'=\Theta\left(\frac{\epsilon}{\sqrt{\log n}}\right),
\end{align*}
with a $k=O\left(\log n/\epsilon^2\right)$ ancilla qubits and $O(n)$ measurement with a high probability.
\end{lemma}
\begin{proof}
We consider the algorithm {\sf LearnSingle($n$,$\epsilon$)} in \Cref{algo:LearnSingle} when we input the unknown $n$-qubit Pauli channel 
\begin{align*}
\Lambda(\cdot)=\frac{1}{2^n}\left[I\Tr(\cdot)+\sum_{a=1}^{4^n-1}\lambda_aP_a\Tr(P_a(\cdot))\right].
\end{align*}

In the iteration $t=1,...,4^n-1$, we assume that the input state is 
\begin{align*}
\frac{I+P_t}{2^n}\otimes\ket{0}\bra{0}^{\otimes (N+1)}\otimes\left((1-a)\ket{0}\bra{0}+a\ket{1}\bra{1}\right)
\end{align*}
for some value $a\in[0,1]$. In each epoch, the output state after we query $\Lambda\otimes I$ on the first $n+N$ qubits is
\begin{align*}
\Lambda\otimes I\left(\frac{I+P_t}{2^n}\otimes\ket{0}\bra{0}^{\otimes N}\right)=\frac{I+\lambda_t P_t}{2^n}\otimes\ket{0}\bra{0}^{\otimes N}.
\end{align*}
Therefore, the state on the working qubits and the $j$-th ancilla qubit after implementing $\mathcal{E}_{\text{data},t}$ at $j=1,...,N$ is
\begin{align*}
\mathcal{E}_{\text{data},t}\left(\frac{I+\lambda_t P_t}{2^n}\otimes\ket{0}\bra{0}\right)=\frac{I}{2^n}\otimes\left(\frac{1+\lambda_t}{2}\ket{0}\bra{0}+\frac{1-\lambda_t}{2}\ket{1}\bra{1}\right).
\end{align*}

We then consider the purification channel. Notice that different from \Cref{prob:SingleComp}, we allow $\lambda_t<0$ here in \Cref{prob:PauliHypoTest} and \Cref{prob:ThresholdSearch}. However, we can prove that, after $m$ epochs, the output state after we implement {\sf StatePurification($N$)} on the $(N+1)$-th ancilla qubit represented as
\begin{align*}
\Tr(E_\text{pure}^N\rho^{\otimes N})^m\ket{0}\bra{0}+\left(1-\Tr(E_\text{pure}^N\rho^{\otimes N})^m\right)\ket{1}\bra{1}.
\end{align*}
After $3n$ epochs, the state of the $(N+1)$-th ancilla qubit represented as
\begin{align*}
\Tr(E_\text{pure}^N\rho^{\otimes N})^{3n}\ket{0}\bra{0}+\left(1-\Tr(E_\text{pure}^N\rho^{\otimes N})^{3n}\right)\ket{1}\bra{1}.
\end{align*}

For simplicity, we use $f_t$ to denote $\Tr(E_\text{pure}^N\rho^{\otimes N})$. If $\lambda_t>0$, we have $f_t>1/2$, and the upper and the lower bound for $f_t$ are given in \eqref{eq:PurificationPosUpper} and \eqref{eq:PurificationPosLower}. The state on the $(N+1)$-th ancilla qubit is
\begin{align*}
f_t^{3n}\ket{0}\bra{0}+\left(1-f_t^{3n}\right)\ket{1}\bra{1}
\end{align*}
after $m=3n$ epochs.

If $\lambda_t=0$, this state is exactly
\begin{align*}
\frac{1}{8^n}\ket{0}\bra{0}+\left(1-\frac{1}{8^n}\right)\ket{1}\bra{1}
\end{align*}
according to \eqref{eq:PurificationZero} as $f_t=1/2$.

In the case when $\lambda_t<0$, the state of the $(N+1)$-th ancilla qubit is
\begin{align*}
f_t^{3n}\ket{0}\bra{0}+\left(1-f_t^{3n}\right)\ket{1}\bra{1}
\end{align*}
for some $f_t<1/2$ according to \eqref{eq:PurificationNeg}.

After we implement $\mathcal{E}_{\text{anc}}$ on the last two ancilla qubits and the preparation channel $\mathcal{E}_{\text{prep},t}$, we enter the next iteration with
\begin{align*}
\frac{I+P_{t+1}}{2^n}\otimes\ket{0}\bra{0}^{\otimes N+1}\otimes\left((1-a)\left(1-f_t^{3n}\right)\ket{0}\bra{0}+\left(1-(1-a)\left(1-f_t^{3n}\right)\right)\ket{1}\bra{1}\right).
\end{align*}

For $\lambda_t=0$, we enter the next iteration with
\begin{align*}
\frac{I+P_{t+1}}{2^n}\otimes\ket{0}\bra{0}^{\otimes N+1}\otimes\left((1-a)\left(1-\frac{1}{8^n}\right)\ket{0}\bra{0}+\left(1-(1-a)\left(1-\frac{1}{8^n}\right)\right)\ket{1}\bra{1}\right).
\end{align*}

By induction on all $t$ and the above analysis for the iteration $t$, we can observe that the final state after all $4^n-1$ iteration on the last ancilla qubit (i.e., $(n+N+2)$-th qubit) is
\begin{align*}
\rho_\text{final}=\Lambda_0\ket{0}\bra{0}+(1-\Lambda_0)\ket{1}\bra{1},
\end{align*}
where
\begin{align*}
\Lambda_0=\prod_{t}\left(1-f_t^{3n}\right)=\prod_{t:\lambda_t>0}\left(1-f_t^{3n}\right)\prod_{t:\lambda_t=0}\left(1-\frac{1}{8^n}\right)\prod_{t:\lambda_t<0}\left(1-f_t^{3n}\right).
\end{align*}

We consider measuring it in the basis $\{\ket{0}\bra{0},\ket{1}\bra{1}\}$. 

Firstly, suppose that there exists some $\lambda_{t^*}>\epsilon$. Then we have
\begin{align}
f_{t^*}>1-\frac 1n
\end{align}
according to \eqref{eq:PurificationPosLower}. Therefore, we have
\begin{align*}
\Lambda_0=\prod_{t}\left(1-f_t^{3n}\right)< (1-f_{t^*}^{3n})=(1-e^{-3}),
\end{align*}
and the probability for detecting $\ket{1}\bra{1}$ is
\begin{align}
\Tr(\ket{1}\bra{1}\rho_{\text{final}})=1-\Lambda_0>e^{-3}.
\end{align}
Therefore, when the probability of obtaining the result $\ket{1}\bra{1}$ is bounded above by $e^{-3}$, we conclude that $\lambda_t\leq\epsilon$ for all $t=1,...,4^n-1$, which is the second case.

We then consider the case when the probability of obtaining the result $\ket{1}\bra{1}$ is larger than $e^{-3}$. In this case, we conclude that there exists some $t^*$ such that $\lambda_{t^*}>\epsilon'=\Theta(\epsilon/\sqrt{\log n})$. To prove this, we consider the adversarial case when some adversary can construct the unknown Pauli channel $\Lambda$. The optimal strategy for the adversary is to construct identical $\lambda_t>0$ for all $t$. We assume that $\lambda_t\leq\epsilon'$, in order that we detect inversely $\Tr(\ket{1}\bra{1}\rho_\text{final})\leq e^{-3}$, we have
\begin{align*}
\Tr(\ket{1}\bra{1}\rho_{\text{final}})&=1-\Lambda_0\\
&=1-\prod_{t=1}^{4^n-1}\left(1-f_t^{3n}\right).\\
&=1-\left(1-f_t^{3n}\right)^{4^n-1}.
\end{align*}
Using the upper bound for $f_t$ given by \eqref{eq:PurificationPosUpper}, we can always find some $\epsilon'=\Theta(1/\sqrt{N})=\Theta(\epsilon/\sqrt{\log n})$ such that
\begin{align*}
\Tr(\ket{1}\bra{1}\rho_{\text{final}})&=1-\Lambda_0\\
&\leq 1-\left(1-\left(\frac{1}{2}\left(1+\sqrt{1-\exp\left(-\frac{N\epsilon^{'2}}{1-\epsilon^{'2}}\right)}\right)\right)\right)^{4^n-1}\\
&\leq e^{-3}
\end{align*}

Therefore, if we detect the probability of obtaining the result $\ket{1}\bra{1}$ is larger than $e^{-3}$, we output the first case of \Cref{prob:PauliHypoTest}. To identify the exact value of $t^*$, we perform a binary search similar to \Cref{thm:Charsingle} and pay an additional $O(n)$ overhead in the measurement complexity. This finishes the proof for \Cref{lem:PauliHypoTest}.
\end{proof}

We remark that the analysis above for \Cref{algo:LearnSingle} in solving the Pauli channel hypothesis testing problem also indicates that we can use \Cref{algo:LearnSingle} to learn all the eigenvalues of a Pauli channel using a polynomial number of measurements if we assume that it has a polynomial number of nonzero $\lambda_a$'s. This is because we can perform the binary search procedure similar to \Cref{thm:Charsingle} to find all the nonzero eigenvalues. As there are at most a polynomial number of nonzero eigenvalues, the number of branches during any time of the search is polynomial. Yet, this result is strictly weaker than \Cref{thm:ChanEstCon} as we show that we can solve the Pauli channel estimation task for \textbf{any} Pauli channel using polynomially many measurements.

\subsubsection{Proof for \Cref{lem:ThresholdSearch}}
We are now ready to prove \Cref{lem:ThresholdSearch}\footnote{As pointed out by the following up work from Wang et al.~\cite{wang2025weakly}, there is an issue in the proof below for \Cref{lem:ThresholdSearch} as the data processing channel in Eq.~\eqref{eq:ELambdaP} and Eq.~\eqref{eq:ELambdaM} are not completely positive. One solution to fix this issue can be referred to Appendix B of Ref.~\cite{wang2025weakly}}.

Given that {\sf LearnSingle($n$,$\epsilon$)} in \Cref{algo:LearnSingle} can solve \Cref{prob:PauliHypoTest} using $k=O(\log n/\epsilon^2)$ ancilla qubits and $O(n)$ measurements, we now consider how to generalize this result to the Pauli channel threshold search in \Cref{prob:ThresholdSearch}. In particular, we are to develop a method to reduce \Cref{prob:ThresholdSearch} to \Cref{prob:PauliHypoTest} by changing the hypothesis channel 
\begin{align*}
\hat{\Lambda}(\cdot)=\frac{1}{2^n}[I\Tr(\cdot)+\sum_{a=1}^{4^n-1}\hat{\lambda}_aP_a\Tr(P_a(\cdot))]
\end{align*}
to the depolarization channel. To achieve this goal, we consider the following procedure for a sub-routine. Suppose we input the state $\frac{I+P_t}{2^n}$ into the unknown channel $\Lambda$. Our goal is to output either $\lambda_t>\hat{\lambda}_t+\epsilon'$ or $\lambda_t\leq\hat{\lambda}_t+\epsilon$ in \Cref{prob:ThresholdSearch}. Notice that the output state is
\begin{align*}
\Lambda\left(\frac{I+P_t}{2^n}\right)=\frac{I+\lambda_tP_t}{2^n}.
\end{align*}
Suppose $\hat{\lambda_t}\geq0$, we consider concatenating the following channel after the unknown channel:
\begin{align}\label{eq:ELambdaP}
\mathcal{E}_{\hat{\lambda}_t,+}:\rho=\frac{I+\sum_{a=1}^{4^n-1}\sigma_aP_a}{2^n}\to\frac{1}{2^n}\left[I+\sum_{a\neq t}\sigma_aP_a+\left(\frac{\sigma_t}{1+\hat{\lambda}_t}-\frac{\hat{\lambda}_t}{1+\hat{\lambda}_t}\right)P_t\right],
\end{align}
for $\sigma_a\in[-1,1]$ and $\rho$ any $n$-qubit density matrix. This channel always exists as we can verify that it is a linear completely-positive trace-preserving map. We then consider the state
\begin{align*}
\mathcal{E}_{\hat{\lambda}_t,+}\circ\Lambda\left(\frac{I+P_t}{2^n}\right)=\frac{I+\lambda_t'P_t}{2^n}.
\end{align*}
If $\lambda_t=\hat{\lambda}_t$, we have $\lambda_t'=0$. If $\lambda_t>\hat{\lambda}_t+\epsilon'$, we have $\lambda_t'>\epsilon'/(1+\hat{\lambda}_t)\geq\epsilon'/2$. On the contrary, if $\lambda_t\leq\hat{\lambda}_t+\epsilon$, we have $\lambda_t'\leq\epsilon/(1+\hat{\lambda}_t)\leq\epsilon$. Therefore, we have transfer a Pauli threshold search task in \Cref{prob:ThresholdSearch} with parameters $(\epsilon,\epsilon')$ to a Pauli channel hypothesis testing in \Cref{prob:PauliHypoTest} with parameters $(\epsilon,\epsilon'/2)$ at $\lambda_t>0$.

Suppose $\hat{\lambda}_t<0$, we consider alternatively concatenating the following channel after the unknown channel:
\begin{align}\label{eq:ELambdaM}
\mathcal{E}_{\hat{\lambda}_t,-}:\rho=\frac{I+\sum_{a=1}^{4^n-1}\sigma_aP_a}{2^n}\to\frac{1}{2^n}\left[I+\sum_{a\neq t}\sigma_aP_a+\left(\frac{\sigma_t}{1-\hat{\lambda}_t}-\frac{\hat{\lambda}_t}{1-\hat{\lambda}_t}\right)P_t\right],
\end{align}
for $\sigma_a\in[-1,1]$ and $\rho$ any $n$-qubit density matrix. We can also prove that this channel always exists as we can verify that it is a linear completely-positive trace-preserving map. We then consider the state
\begin{align*}
\mathcal{E}_{\hat{\lambda}_t,-}\circ\Lambda\left(\frac{I+P_t}{2^n}\right)=\frac{I+\lambda_t'P_t}{2^n}.
\end{align*}
If $\lambda_t=\hat{\lambda}_t$, we have $\lambda_t'=0$. If $\lambda_t>\hat{\lambda}_t+\epsilon'$, we have $\lambda_t'>\epsilon'/(1-\hat{\lambda}_t)\geq\epsilon'/2$. On the contrary, if $\lambda_t\leq\hat{\lambda}_t+\epsilon$, we have $\lambda_t'\leq\epsilon/(1-\hat{\lambda}_t)\leq\epsilon$. Therefore, we have transfer a Pauli threshold search task in \Cref{prob:ThresholdSearch} with parameters $(\epsilon,\epsilon')$ to a Pauli channel hypothesis testing in \Cref{prob:PauliHypoTest} with parameters $(\epsilon,\epsilon'/2)$ for any $\lambda_t$. 

Based on the above construction, we propose the algorithm for solving the Pauli channel threshold search in \Cref{algo:ThresholdSearch}, which immediately proves \Cref{lem:ThresholdSearch} following the same procedure of \Cref{lem:PauliHypoTest}.

\begin{algorithm}[htbp]
\caption{The protocol for distinguishing (learning) single component Pauli channels using $3$ ancilla qubits ({\sf ThresholdSearch($n$,$\epsilon$,$\{\hat{\lambda}_a\}_{a=1}^{4^n-1}$)})}
\label{algo:ThresholdSearch}
\begin{algorithmic}[1]
\REQUIRE An unknown $n$-qubit Pauli channel $\Lambda$, a hypothesis channel $\hat{\Lambda}$ with parameters $\{\hat{\lambda}_a\}_{a=1}^{4^n-1}$, and accuracy demand $\epsilon$.
\ENSURE Distinguish whether the unknown channel is the depolarization channel $\Lambdadep$ or is sampled uniformly from $\{\Lambda_a\}_{a\in\Z_2^{2n}\backslash \{0\}}$ with $\Lambda_a=\frac{1}{2^n}[I\Tr(\cdot)+ \lambda_a P_a\Tr(P_a(\cdot))]$ for $\lambda_a\in(\epsilon,1]$.
\STATE We initialize the working qubits with $\frac{I+P_1}{2^n}$ and the $k$ ancilla qubits with $\ket{0}\bra{0}^{\otimes k}$. We set $k=\lceil 2\log n/\epsilon^2\rceil+2$ and $N=k-2=\lceil 2\log n/\epsilon^2\rceil$.
\FOR{Iteration $t=1,...,4^n-1$}
\STATE We start with the initial state $\frac{I+P_t}{2^n}\otimes \ket{0}\bra{0}^{\otimes N+1}$ in the first $N+1$ qubits 
\FOR{Epoch $k=1,...,3n$}
\STATE We input the current state into $\mathcal{E}_{\hat{\lambda}_t,\pm}\circ\Lambda_n\otimes I$ according to $\hat{\lambda}_t$ (defined in \eqref{eq:ELambdaP} and \eqref{eq:ELambdaM}), where $I(\cdot)$ acts on all ancilla qubits.
\STATE We apply a sequences of $(n+1)$-qubit quantum channels $\mathcal{E}_{\text{data},t}$ (defined in \eqref{chan:TDataRec}) acting on the working qubits and the $j$-th ancilla qubit for $j=1,...,N$. After each $\mathcal{E}_{\text{data},t}$ at $j=1,...,N-1$, we re-initialize the working qubits using the working qubit initialization channel $\mathcal{E}_{\text{work ini},t}$ defined in \eqref{chan:WorkInitial} and input the current state into $\mathcal{E}_{\hat{\lambda}_t,\pm}\circ\Lambda_n\otimes I$ (i.e., query the unknown channel $\Lambda$ again).
\STATE Next, we perform {\sf StatePurification($N$)} (\Cref{algo:SpecialPurification}) on the first $N$ ancilla qubits and the $(N+1)$-th ancilla qubit. The $(N+1)$-th ancilla qubit is chosen as the ancilla qubit in {\sf StatePurification($N$)}.
\STATE We re-initialize the first $n+N$ qubits into $\frac{I+P_t}{2^n}\otimes \ket{0}\bra{0}^{\otimes N}$ using the epoch initialization channel $\mathcal{E}_{\text{epo ini},t}$ defined in \eqref{chan:EpoT}.
\ENDFOR
\STATE We perform the channel $\mathcal{E}_\text{anc}$ (defined in \eqref{chan:AncRec}) on the last two ancilla qubits.
\STATE We perform a channel $\mathcal{E}_{\text{prep},t}$ (defined in \eqref{chan:PrepT}) on all qubits except the last ancilla qubit preparing the input state for the next iteration $t+1$.
\ENDFOR
\STATE We measure the last ancilla qubit using the $\{\ket{0}\bra{0},\ket{1}\bra{1}\}$ basis. Output the first case if the probability for detecting $\ket{1}\bra{1}$ $\leq e^{-3}$ and output the second case otherwise.
\end{algorithmic}
\end{algorithm}

\subsection{Online learning Pauli channels}\label{sec:OnlinePauli}
We now consider learning Pauli channels in the online learning setting. The basic concepts of online learning, mistake bounds, and regrets are introduced in \Cref{sec:OnlineLearning}. As defined in \Cref{prob:def}, given a Pauli channel $\Lambda(\cdot)=\frac{I\Tr(\cdot)+\sum_{a}\lambda_aP_a\Tr(P_a(\cdot))}{2^n}$, the goal of the learner is to learn all the eigenvalues $\lambda_a$ within accuracy demand $\epsilon$. 

In iteration $t$ of the online learning scheme, the learner is given an index $a_t\in\{1,...,4^n-1\}$ and is required to output a hypothesis Pauli channel (prediction)
\begin{align*}
\hat{\Lambda}_t=\frac{I\Tr(\cdot)+\sum_{a=1}^{4^n-1}\hat{\lambda}_{a,t}P_a\Tr(P_a(\cdot))}{2^n}.    
\end{align*} 
The learner then obtains feedback $b_t\in[0,1]$ with $\abs{b_t-\lambda_{a_t}}\leq\epsilon_2$ for some $0\leq\epsilon_2\leq\epsilon\leq 1$. The learner then suffers from a $L_1$ loss equal to
\begin{align*}
\ell_t(\hat{\lambda}_{a_t,t})\coloneqq\abs{b_t-\hat{\lambda}_{a_t,t}}.
\end{align*}
It is straightforward to observe that this $L_1$ loss function is $L$ Lipschitz. The regret for the first $T$ iterations is computed by:
\begin{align*}
R_T=\sum_{t=1}^T\ell_t(\hat{\lambda}_{a_t,t})-\min_{\hat{\Lambda}}\sum_{t=1}^T\ell_t(\hat{\lambda}_{a_t})
\end{align*}
for some optimal $\hat{\Lambda}=\frac{I\Tr(\cdot)+\sum_{a=1}^{4^n-1}\hat{\lambda}_{a}P_a\Tr(P_a(\cdot))}{2^n}$.

After receiving the feedback, the learner computes the loss function $\ell_t(\hat{\Lambda}_t)$. The learner then performs an update procedure on the hypothesis channel $\hat{\Lambda}_t\to\hat{\Lambda}_{t+1}$ if the loss function $\ell_t(\hat{\Lambda}_t)$ is larger than a certain threshold. As the learner, we use the techniques from online convex optimization to minimize regret. The number of ``bad iterations" $t$ in which the hypothesis channel is tested to be far away from the true channel can also be bounded via minimizing the regret. For state learning tasks, Ref.~\cite{aaronson2018online} proposed several online learning update methods in the setting of state learning including the Regularized Follow-the-Leader algorithm (RFTL)~\cite{tsuda2005matrix,hazan2016introduction}, the Matrix Multiplicative Weights algorithm~\cite{tsuda2005matrix,arora2012multiplicative}, and the sequential fat-shattering dimension of
quantum states~\cite{rakhlin2015online}. Some offline algorithms~~\cite{aaronson2018shadow,aaronson2007learnability} are also proposed for a similar update procedure. Here, we follow the template of the RFTL using the regularization function of (negative) von Neumann entropy for the Choi–Jamio{\l}kowski isomorphism $\rho_\text{CJ}(\Lambda)$ state of the Pauli channel $\Lambda$ in the extended Hilbert space defined in \eqref{eq:Choistate}. That is to say. given a Pauli channel $\Lambda$, the regularized function is
\begin{align}\label{eq:RFTLRFunc}
R(\Lambda)=\sum_{i=1}^{4^n}\lambda_i\left(\rho_\text{CJ}(\Lambda)\right)\log\lambda_i\left(\rho_\text{CJ}(\Lambda)\right),
\end{align}
where $\lambda_i$ denotes the $i$-th eigenvalue of the matrix.

The full algorithm is provided below in \Cref{algo:RFTLPauli}:
\begin{algorithm}[htbp] 
\caption{RFTL for online Pauli channel estimation}
\begin{algorithmic}[1] \label{algo:RFTLPauli}
\STATE Iteration number $T$, some tuned value $\eta<\frac{1}{2}$, unknown Pauli channel $\Lambda$ with feedbacks $b_t$'s in each iteration, and $\mathcal{K}$ the set of all Pauli channels.
\STATE Set the initial prediction as $\hat{\Lambda}_1(\cdot)\coloneqq\frac{I\Tr(\cdot)}{2^n}$.
\FOR{$t=1,...,T$}
\STATE Predict the hypothesis Pauli channel $\hat{\Lambda}_t$ with Choi–Jamio{\l}kowski isomorphism $\rho_\text{CJ}(\hat{\Lambda}_t)$ in \eqref{eq:Choistate}. Consider the $L_1$ loss
function $\ell_t$ defined in \Cref{sec:OnlinePauli} and $\ell'(x)$ be the derivative of $\ell_t$ with respect to $x$. Define
\begin{align}\label{eq:RFTLLoss}
\nabla_t\hat{\Lambda}_t\coloneqq\rho_\text{CJ}\left(\ell'(\hat{\lambda}_{a_t,t})\cdot\frac{P_{a_t}\Tr(P_{a_t}(\cdot))}{2^n}\right).
\end{align}

\STATE Update decision according to the RFTL rule with regularized function in \eqref{eq:RFTLRFunc} by $\hat{\Lambda}_{t+1}\coloneqq$:
\begin{align*}
\arg\min_{\varphi\in\mathcal{K}}\left\{\eta\sum_{s=1}^t\Tr(\nabla_s\varphi)+R(\varphi) \right\}.
\end{align*}
\ENDFOR
\end{algorithmic}
\end{algorithm}

Before analyzing the performance, we first consider the correctness of this RFTL algorithm. We can observe that the mathematical program is convex. The basic idea for this online Pauli channel estimation is to employ the Choi–Jamio{\l}kowski isomorphism to map the Pauli channel into a $2n$-qubit state. According to Ref.~\cite{aaronson2018online}, the minimizer in the last step is always positive semidefinite in the state learning setting, which also guarantees the correctness of our algorithm.

Now, we consider the regret bound of \Cref{algo:RFTLPauli}. We suppose there are in total $T$ iterations where the learner performs an update procedure. We have the following lemma:
\begin{lemma}\label{lem:OnlineRegret}
By setting $\eta=O(\sqrt{n/T})$ for some suitable constant coefficient, we can bound the regret $R_T$ of \Cref{algo:RFTLPauli} by
\begin{align*}
R_T\leq O(\sqrt{Tn}).
\end{align*}
\end{lemma}
\begin{proof}
By theorem 5.2 of Ref.~\cite{hazan2016introduction}, we can bound the regret of \Cref{algo:RFTLPauli} by
\begin{align*}
R_T\leq O(D_RG_R\sqrt{T})
\end{align*}
by choosing some $\eta=O(D_R/(G_R\sqrt{T}))$, where
\begin{align*}
D_R&=\sqrt{\max_{\phi_1,\phi_2\in\mathcal{K}}\abs{R(\phi_1)-R(\phi_2)}},\\
G_R&=\norm{\nabla_t}_{\text{op}}.
\end{align*}
Notice that $R(\cdot)$ is the von Neumann entropy for a $2n$-qubit quantum state, it is bounded by $2n\log 2=O(n)$. As the loss function $\ell_t(\cdot)$ is $L=1$-Lipschitz, we can bound $\norm{\nabla_t}_{\text{op}}\leq O(1)$ according to its definition in \eqref{eq:RFTLLoss}. Therefore, we have 
\begin{align*}
R_T\leq O(\sqrt{Tn})
\end{align*}
when we choose some $\eta=O(\sqrt{n/T})$. This finishes the proof for \Cref{lem:OnlineRegret}.
\end{proof}

Finally, we bound the number of iterations where we invode the update algorithm. We consider running the update algorithm in \Cref{algo:RFTLPauli} when the loss function $\ell_t(\hat{\lambda}_{a_t,t})=\abs{b_t-\hat{\lambda}_{a_t,t}}>\epsilon_1-\epsilon_2$ for some $0\leq\epsilon_2\leq\epsilon_1\leq\epsilon\leq 1$. By the definition of $b_t$, $\abs{b_t-\lambda_{a_t,t}}\leq\epsilon_2$. Given the regret bound in \Cref{lem:OnlineRegret}, we consider the true Pauli channel $\Lambda$ as the hypothesis. As $\abs{b_t-\lambda_{a_t,t}}\leq\epsilon_2$, the sum of loss function over all iterations $t=1,...,T$ in this case is bounded above by $\epsilon_2T$. Then, given an arbitrary online learning procedure with $T$ iterations, the regret is bounded below by $\Omega((\epsilon_1-\epsilon_2)T)$. According to the regret bound, we have
\begin{align*}
(\epsilon_1-\epsilon_2)T\leq\epsilon_2 T+O(\sqrt{Tn}),
\end{align*}
which directly indicates that
\begin{align*}
T\leq O\left(\frac{n}{(\epsilon_1-2\epsilon_2)^2}\right).
\end{align*}
Thus, we formally have the below lemma:
\begin{lemma}\label{lem:OnlineIter}
Suppose the learner is given feedback $b_t$ such that $\abs{b_t-\lambda_{a_t,t}}\leq\epsilon_2$ in each iteration and invoke the RFTL update procedure if $\ell_t(\hat{\lambda}_{a_t,t})=\abs{b_t-\hat{\lambda}_{a_t,t}}>\epsilon_1-\epsilon_2$. We choose the parameters
\begin{align*}
\epsilon_1=\Theta(\epsilon/\sqrt{\log n}),\quad\epsilon_1\lesssim\epsilon',\quad\epsilon_2=\epsilon_1/2.
\end{align*}
Then the number of iterations $T$ where the update the procedure is used is bounded above by
\begin{align*}
T\leq \tilde{O}\left(\frac{n}{\epsilon^2}\right).
\end{align*}
\end{lemma}

\subsection{Online Pauli channel estimation with logarithmic ancillas and channel concatenation}\label{sec:Wrap}
We now wrap up all the lemmas and provide the proof for \Cref{thm:ChanEstCon}. In particular, we consider an online learning procedure using the RFTL update procedure in \Cref{algo:RFTLPauli} and the Pauli channel threshold search in \Cref{algo:ThresholdSearch}. Assume that the online learning procedure lasts for $T$ iterations. We give the detailed procedure as follows:

At the beginning, we initialize our hypothesis Pauli channel with 
\begin{align*}
\hat{\Lambda}_1(\cdot)=\frac{I\Tr(\cdot)}{2^n}.
\end{align*}
In the $t=1,...,T$-th iteration, we assume that the prediction by the learner is
\begin{align*}
\hat{\Lambda}_t=\frac{I\Tr(\cdot)+\sum_{a=1}^{4^n-1}\hat{\lambda}_{a,t}P_a\Tr(P_a(\cdot))}{2^n}.    
\end{align*} 

We first implement a two-side version of the Pauli channel threshold search in \Cref{prob:ThresholdSearch} with parameter $(\epsilon,\epsilon')$ at some $\epsilon'\gtrsim O(\epsilon/\sqrt{\log n})$. In the Pauli channel threshold search, we are to output either:
\begin{itemize}
    \item an $a^*\in\{1,...,4^n-1\}$ such that $\lambda_{a^*}>\hat{\lambda}_{a^*}+\epsilon'$, or
    \item for all $a$'s we have $\lambda_a\leq\hat{\lambda}_a+\epsilon$.
\end{itemize}
Now, we consider the inverse side, our goal is to output either
\begin{itemize}
    \item an $a^*\in\{1,...,4^n-1\}$ such that $\lambda_{a^*}<\hat{\lambda}_{a^*}-\epsilon'$, or
    \item for all $a$'s we have $\lambda_a\geq\hat{\lambda}_a-\epsilon$.
\end{itemize}
By a similar reduction from the Pauli channel threshold search problem in \Cref{prob:ThresholdSearch} to the Pauli channel hypothesis testing in \Cref{prob:PauliHypoTest}, we only need to solve an inverse side version of the hypothesis testing. In particular, we are to output either
\begin{itemize}
    \item an $a^*\in\{1,...,4^n-1\}$ such that $\lambda_{a^*}<-\epsilon'$, or
    \item for all $a$'s we have $\lambda_a\geq-\epsilon$.
\end{itemize}
As proved by \Cref{lem:PauliHypoTest}, we can solve the original version of Pauli channel hypothesis testing using {\sf LearnSingle($n$,$\epsilon$)} in \Cref{algo:LearnSingle}. To solve this inverse side version of Pauli channel hypothesis testing, we only need to exchange the sequence of the two elements in two-outcome POVM given by $\{E_{\text{data},a},I-E_{\text{data},a}\}=\left\{\sum_{j=1}^{2^{n-1}}\ket{\varphi_a^{j,+}}\bra{\varphi_a^{j,+}},\sum_{j=2^{n-1}+1}^{2^n}\ket{\varphi_a^{j,-}}\bra{\varphi_a^{j,-}}\right\}$ in \Cref{lem:LearnSingle}. Therefore, according to \Cref{lem:ThresholdSearch}, we only need 
\begin{align*}
T_1=O(n)
\end{align*}
measurements and $k=O(\log n/\epsilon^2)$ ancilla qubits to output either:
\begin{itemize}
    \item an $a^*\in\{1,...,4^n-1\}$ such that $\abs{\lambda_{a^*}-\hat{\lambda}_{a^*}}>\epsilon'$, or
    \item for all $a$'s we have $\abs{\lambda_a-\hat{\lambda}_a}\leq\epsilon$.
\end{itemize}

If the two-side version threshold search outputs the second case, we can conclude that the hypothesis channel $\hat{\Lambda}_t$ is $\epsilon$-close to the actual unknown Pauli channel $\Lambda$ in the $\ell_\infty$ norm over all Pauli eigenvalues. However, if the two-side version threshold search outputs the first case for some $a^*$, we can fit $a_t=a^*$ and invoke the update procedure. To compute the feedback $b_t$, we input the state $\frac{I+P_{a_t}}{2^n}$ into the unknown channel for $O(1/\epsilon_2^2)=\tilde{O}(n/\epsilon^2)$ to estimate $\lambda_{a_t}$ with accuracy $\epsilon_2$ with a high probability. According to \Cref{lem:OnlineIter}, there can be at most
\begin{align*}
T_2=\tilde{O}\left(\frac{n}{\epsilon^2}\right)
\end{align*}
such iterations. Therefore, the total measurement complexity is bounded by:
\begin{align*}
T_{\text{measurement}}=T_1T_2=\tilde{O}\left(\frac{n^2}{\epsilon^2}\right),
\end{align*}
which finishes the proof for \Cref{thm:ChanEstCon}.

\section{Lower bound on measurement complexity without quantum memory}\label{sec:MeasComp}

In this section, we prove Theorem~\ref{thm:AdaptConLow_informal}, which gives an optimal lower bound on the number of measurements needed by any learning protocol without quantum memory: 

\begin{theorem}\label{thm:AdaptConLow}
The following holds for any $0<\epsilon\leq 1$. For any (possibly adaptive and concatenating) protocol without quantum memory that solves Pauli eigenvalue estimation (Problem~\ref{prob:def}) and even easier Pauli spike detection (Problem~\ref{prob:SpikeDetect}) to error $\epsilon$ with at least large constant probability, the number of measurements the protocol makes must be at least $\Omega(2^n/\epsilon^2)$.
\end{theorem}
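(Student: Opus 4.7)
My plan is to prove the lower bound via Le Cam's two-point method applied to a suitably rescaled hypothesis testing problem, combined with a chi-squared analysis that exploits the fact that protocols without quantum memory cannot amplify the signal from a single Pauli spike beyond its eigenvalue magnitude, no matter how many queries are concatenated. The first step is to reduce $\epsilon$-accurate Pauli eigenvalue estimation to distinguishing $\Lambdadep$ from a uniformly random $\epsilon$-scaled spike $\Lambda_a^\epsilon(\rho) = \frac{1}{2^n}\left[I\Tr(\rho) + \epsilon\, P_a \Tr(P_a\rho)\right]$ for $a \in \Z_2^{2n} \setminus \{0\}$. One checks via the Walsh-Hadamard identities that $\Lambda_a^\epsilon$ is a valid Pauli channel for all $\epsilon \in (0,1]$, with eigenvalues $\lambda_I = 1$, $\lambda_a = \epsilon$, and $\lambda_b = 0$ otherwise, so any algorithm that estimates every eigenvalue to additive accuracy better than $\epsilon/2$ can solve this distinguishing task (with a minor rescaling to handle $\epsilon$ close to $1$).

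Using the concatenating tree formalism in \defn{TreeMemless}, I would next show that the likelihood ratio along any root-to-leaf path $\ell$ factors as
\begin{equation}
L_a(\ell) \;=\; \prod_{t=1}^T \bigl(1 + \eta_a^{(t)}(\ell)\bigr), \qquad \eta_a^{(t)}(\ell) \;=\; \xi_a^{(t,\ell)}\bra{\psi^{(t,\ell)}_{y_t}}P_a\ket{\psi^{(t,\ell)}_{y_t}},
\end{equation}
where $\xi_a^{(t,\ell)} = \epsilon\Tr(P_a \sigma_a^{(t,\ell)})$ and $\sigma_a^{(t,\ell)}$ is the $n$-qubit state just before the final spike-channel query inside round $t$ along path $\ell$. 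The crucial quantitative consequence of the ancilla-free assumption is that $\sigma_a^{(t,\ell)}$ is an honest $n$-qubit density matrix, so $|\xi_a^{(t,\ell)}| \leq \epsilon$ holds uniformly in $a$, in $t$, and in all concatenation and adaptivity choices made in round $t$. Moreover $\{\eta_a^{(t)}\}_t$ is a martingale difference sequence under $\Lambdadep$ since $\sum_s \omega_s^{(t)}\bra{\psi_s^{(t)}}P_a\ket{\psi_s^{(t)}} = \Tr(P_a)/2^n = 0$ for $a \neq 0$.

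By $\mathrm{TV} \leq \tfrac{1}{2}\sqrt{\chi^2}$ and iterated expectation, $\chi^2 + 1 = \E_{a,a'}\E_{\mathrm{path}\sim\Lambdadep}\prod_t\bigl(1 + \gamma^{(t)}_{a,a'}(\ell_{<t})\bigr)$, where the pairwise round-$t$ covariance is
\begin{equation}
\gamma^{(t)}_{a,a'} \;=\; \xi_a^{(t)}\,\xi_{a'}^{(t)}\sum_s \omega_s^{(t)}\,\bra{\psi_s}P_a\ket{\psi_s}\bra{\psi_s}P_{a'}\ket{\psi_s}.
\end{equation}
Cauchy-Schwarz on the sum over $s$ gives $|\gamma^{(t)}_{a,a'}| \leq \epsilon^2 R_a^{(t)} R_{a'}^{(t)}$ with $R_a^{(t)} = \bigl(\sum_s \omega_s^{(t)}\bra{\psi_s}P_a\ket{\psi_s}^2\bigr)^{1/2}$, and \lem{SumPauliExp} yields $\E_a (R_a^{(t)})^2 \leq 1/(2^n+1)$ uniformly over the (adaptive, history-dependent) POVM that the protocol might use in round $t$.

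The hardest step is to convert these per-round bounds into $\chi^2 = O(1)$ whenever $T \lesssim 2^n/\epsilon^2$. The plan is to first apply $\prod_t(1 + \gamma^{(t)}) \leq \exp\!\bigl(\epsilon^2\sum_t R_a^{(t)}R_{a'}^{(t)}\bigr)$, then the AM-GM inequality $R_a R_{a'} \leq (R_a^2 + R_{a'}^2)/2$ and Cauchy-Schwarz over $a$, which reduces the task (on each fixed path) to bounding $\E_a\exp(\epsilon^2 \|R_a\|^2)$ where $\|R_a\|^2 = \sum_t (R_a^{(t)})^2$. This random variable has mean at most $T/2^n$ under uniform $a$ but worst-case value $T$, so a naive exponential-moment estimate is too loose by a factor of $2^n$. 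The main obstacle is therefore to establish sub-exponential concentration of $\|R_a\|^2$ under uniform $a$; I would attempt this either by computing the higher Pauli tensor moments $\sum_a P_a^{\otimes 2k}$ (which generalize the $\sum_a P_a^{\otimes 2} = 2^n\mathrm{SWAP}$ used in \lem{SumPauliExp}), or by truncating the likelihood ratio on the event that $\|R_a\|^2$ is atypically large and separately absorbing the tail using the uniform bound $|\gamma^{(t)}_{a,a'}| \leq \epsilon^2$. Once $\chi^2 = O(1)$ is secured, Le Cam's inequality (\lem{LeCam}) yields the claimed $T = \Omega(2^n/\epsilon^2)$ lower bound.
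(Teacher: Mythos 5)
Your reduction, the factorization of the likelihood ratio along a path, and the uniform bound $|\xi_a^{(t,\ell)}|\le\epsilon$ (via positivity and trace-preservation of the intermediate data-processing channels) all match the paper's setup exactly, so the opening moves are sound. Where you diverge — and where the argument does not close — is in passing to $\chi^2$ and then trying to control an exponential moment.

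First, the identity $\chi^2+1=\E_{a,a'}\E_{\mathrm{path}\sim\Lambdadep}\prod_t\bigl(1+\gamma^{(t)}_{a,a'}(\ell_{<t})\bigr)$ is not valid for \emph{adaptive} protocols. What iterated expectation actually yields is a backward recursion $W_{t-1}(\ell_{<t})=\E_{s_t\mid\ell_{<t}}\bigl[(1+\eta_a^{(t)})(1+\eta_{a'}^{(t)})\,W_t(\ell_{\le t})\bigr]$, and because $W_t$ itself depends on $\ell_{\le t}$ you cannot pull out the conditional expectation $1+\gamma^{(t)}$ and obtain a product of $(1+\gamma^{(t)})$ factors; the history-dependent terms are correlated. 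The product formula holds only in the non-adaptive case. Bounding instead by $\prod_t\bigl(1+\sup_{\ell}\gamma^{(t)}\bigr)$ would restore validity but loses the crucial averaging over $a$ and gives a bound that is worse by roughly a factor $2^n$, which destroys the result. Second, even granting the identity, you explicitly flag the remaining step — showing $\E_a\exp(\epsilon^2\|R_a\|^2)=O(1)$ when $T\lesssim 2^n/\epsilon^2$, given that $\|R_a\|^2$ has mean $T/2^n$ but worst-case $T$ — as unresolved, and propose either higher Pauli tensor moments or a truncation. Neither is carried out; this is the hard part, and it is precisely the part the proof lives or dies on.

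The paper does not go through $\chi^2$ at all. Following the sub-martingale technique of \cite{chen2022complexity}, it works directly with the log-likelihood ratio $\sum_t\log(1+Y_a(u_t,A_t,s_t))$ under $p_0$. It introduces \emph{good paths} (no edge has $Y_a<-\alpha$) and \emph{balanced paths} ($\sum_t\E_{s_t}[Y_a^2]\le\nu$), shows via Chebyshev and Markov (Lemmas~\ref{lem:chanprobgood} and~\ref{lem:chanprobbalance}) that all but an $O(T\epsilon^2/2^n)$ fraction of $(\z,a)$ pairs are good and balanced, and then applies a Freedman/Bernstein-type martingale concentration (Lemma~\ref{lem:chanBernsteinbound}) to the truncated log-likelihood to conclude $L_a(\z)\ge 0.9$ with high probability. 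This pathwise truncation-plus-concentration is exactly the disciplined version of your second proposed fix, and it sidesteps both the chi-squared factorization obstruction and the exponential-moment bound. If you want to salvage your plan, replacing the chi-squared step with a direct analysis of $\log L_a$ along the lines above is the way to go.
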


\noindent As discussed in the introduction, this lower bound is tight in both the $n$ and $\epsilon$ dependence, because of the algorithmic upper bound in~\cite{flammia2020efficient}. Furthermore, our lower bound applies for the full range of possible values of $\epsilon \in (0,1]$.

For the proof of Theorem~\ref{thm:AdaptConLow}, we follow the formalism of the many-versus-one distinguishing problem introduced in Section~\ref{sec:tree} and consider distinguishing between the following two scenarios:
\begin{itemize}
    \item The unknown channel is the depolarization channel $\Lambdadep=\frac{1}{2^n}I\Tr(\cdot)$.
    \item The unknown channel is sampled uniformly from a set of channels $\{\Lambda_a\}_{a=1}^{4^n-1}$ with $\Lambda_a=\frac{1}{2^n}[I\Tr(\cdot)+\epsilon P_a\Tr(P_a(\cdot))]$.
\end{itemize}
Here, $P_1,\ldots,P_a,\ldots,P_{4^n-1}$ are Pauli traceless observables. Recall that in the Pauli channel eigenvalue estimation problem, our goal is to learn each eigenvalue $\lambda_a$ within $\epsilon$ in \Cref{prob:def}. Therefore, such a protocol can solve the above distinguishing problem. 

We now consider the tree representation $\cT$ of the learning algorithm without memory in Definition~\ref{def:TreeMemless}. In the first scenario where the unknown channel is $\Lambdadep$, we compute the probability distribution on each leaf $\ell$ as:
\begin{align*}
p^{\Lambdadep}(\ell)=\prod_{t=1}^T\omega_{s_t}^{u_t}.
\end{align*}
In the second scenario where the unknown channel is $\Lambda_a$, we define some notations as
\begin{align*}
\xi_a^{u_t}(m)\coloneqq\begin{cases}
\bra{\phi^{u_t}}P_a\ket{\phi^{u_t}},\quad &m=1\,,\\
2^{-n}\Tr(P_a\cC_{m-1}^{u_t}(I+\epsilon P_a\xi_a^{u_t}(m-1))),\quad &2\leq m\leq M^{u_t}\,.
\end{cases}
\end{align*}
By definition, $\abs{\xi_i^{u_t}(1)}\leq 1$. If we assume inductively that $\abs{\xi_i^{u_t}(m-1)}\leq 1$, we can deduce that
\begin{align*}
\abs{\xi_a^{u_t}(m)}&=\Bigl|\Tr\Bigl(P_a\cC_{m-1}^{u_t}\Bigl(\frac{I+\epsilon P_a\xi_a^{u_t}(m-1)}{2^{n}}\Bigr)\Bigr)\Bigr|\\
&\leq\norm{P_a}_\infty \cdot\Bigl\|\cC_{m-1}^{u_t}\Bigl(\frac{I+\epsilon P_a\xi_a^{u_t}(m-1)}{2^{n}}\Bigr)\Bigr\|_{\Tr}\\
&=\Bigl|\Tr\Bigl(\frac{I+\epsilon P_a\xi_a^{u_t}(m-1)}{2^{n}}\Bigr)\Bigr|\\
&=1\,.
\end{align*}
Here, the first line holds by the recursive expression, the second line holds by the tracial matrix H\"{o}lder inequality, and the third line uses the fact that $\cC_{m-1}^{u_t}$ is a positive trace-preserving map and that $I+\epsilon P_a\xi_i^{u_t}(m-1)$ is positive semidefinite. We then have:
\begin{align*}
\E_a[p^{\Lambda_a}]=\E_a\prod_{t=1}^T\omega_{s_t}^{u_t}\bigl(1+\epsilon\xi_{a}^{u_t}(M^{u_t})\bra{\psi_{s_t}^{u_t}}P_a\ket{\psi_{s_t}^{u_t}}\bigr)\,.
\end{align*}

In the following, we adapt a technique from \cite{chen2022complexity}. Furthermore, in \Cref{app:Shadowlow} we also use this technique to provide an alternative proof of the exponential lower bound of \cite{chen2022exponential} for shadow tomography. 

The final goal is to bound the variation distance $\text{TV}(p_0,\E_a[p_a])$, where we slightly abuse notation and denote $p_0,p_a$ as the probability distributions over leaves under the two scenarios in the many-versus-one distinguishing problem. In each iteration $t$, we denote the concatenating procedure equipped with $M^{u_t}$ unknown channel and data processing channels as $A_t$. Given a path $\z=((u_1,A_1,s_1),\ldots,(u_T,A_T,s_T))$, the likelihood ratio $L_a(\z)$ between traversing the path when the underlying channel is $\Lambda_a$ versus when it is $\Lambdadep$ is given by
\begin{align*}
L_a(\z)&=\prod_{t=1}^T\frac{\omega_{s_t}^{u_t}\bigl(1+\epsilon\xi_a^{u_t}(M^{u_t})\bra{\psi_{s_t}^{u_t}}P_a\ket{\psi_{s_t}^{u_t}}\bigr)}{\omega_{s_t}^{u_t}}\\
&=\prod_{t=1}^T[1+\epsilon\xi_a^{u_t}(M^{u_t})\bra{\psi_{s_t}^{u_t}}P_a\ket{\psi_{s_t}^{u_t}}]\\
&=\prod_{t=1}^T[1+Y_a(u_t,A_t,s_t)],
\end{align*}
where we have defined $Y_a(u_t,A_t,s_t) \triangleq \epsilon\xi_a^{u_t}(M^{u_t})\bra{\psi_{s_t}^{u_t}}P_a\ket{\psi_{s_t}^{u_t}}$. 

We have the following lemma on the moments of this random variable:
\begin{lemma}\label{lem:PropertyYChan}
For any edge $(u,A,s)$ in the tree, we have that
\begin{itemize}
    \item $\E_{s}[Y_a(u,A,s)]=0$;
    \item $\E_{s,a}[Y_a(u,A,s)^2]\leq\frac{\epsilon^2}{2^n+1}$.
\end{itemize}
\end{lemma}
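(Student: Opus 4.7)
The plan is to unpack both quantities by first pinning down what distribution the expectation $\E_s$ refers to. Since \lem{PropertyYChan} is being invoked to control the moments of the likelihood ratio $L_a(\z) = \prod_t (1 + Y_a(u_t,A_t,s_t))$ in the usual second-moment analysis, the relevant law of the outcome $s$ at edge $(u,A,s)$ is the one induced by the null hypothesis $\Lambdadep$. Under this hypothesis, every application of $\Lambda$ in the concatenation equals $\Lambdadep$, which collapses any input to $I/2^n$. Because the outermost channel at the end of the concatenated procedure is still $\Lambdadep$, the state presented to the POVM at node $u$ is exactly $I/2^n$, and so the probability of observing outcome $s$ is
\begin{equation}
\omega_s^u 2^n \bra{\psi_s^u}\tfrac{I}{2^n}\ket{\psi_s^u} = \omega_s^u.
\end{equation}
In particular $\xi_a^u(M^u)$ is a deterministic function of $(u,A)$ and can be pulled out of $\E_s$.

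For the first claim, we compute
\begin{equation}
\E_s[Y_a(u,A,s)] = \epsilon \, \xi_a^{u}(M^{u}) \sum_s \omega_s^u \bra{\psi_s^u} P_a \ket{\psi_s^u} = \epsilon \, \xi_a^{u}(M^{u}) \cdot \tfrac{1}{2^n}\Tr(P_a),
\end{equation}
where the second equality uses the POVM completeness relation $\sum_s \omega_s^u 2^n \ket{\psi_s^u}\bra{\psi_s^u} = I$. Since $P_a$ is a non-identity Pauli and hence traceless, this vanishes.

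For the second claim, I would combine three inputs. First, the inductive bound $|\xi_a^u(m)| \leq 1$ established immediately above the lemma lets me drop the factor $\xi_a^u(M^u)^2 \leq 1$. Second, using the null law of $s$ again,
\begin{equation}
\E_{s,a}[Y_a(u,A,s)^2] \leq \epsilon^2 \sum_s \omega_s^u \,\E_a\!\bigl[\bra{\psi_s^u} P_a \ket{\psi_s^u}^2\bigr].
\end{equation}
Third, \lem{SumPauliExp} gives $\E_a[\bra{\phi} P_a \ket{\phi}^2] \leq 1/(2^n+1)$ uniformly over pure states $\ket{\phi}$, so applying it to each $\ket{\psi_s^u}$ and using $\sum_s \omega_s^u = 1$ (from taking traces of the completeness relation) yields the claimed bound $\epsilon^2/(2^n+1)$.

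No step looks technically hard; the only thing to be careful about is that the ``tree'' probability over $s$ at a fixed edge is the one from $\Lambdadep$ (so the collapse to $I/2^n$ after concatenation applies) rather than some mixture involving $\Lambda_a$, and that the factor $\xi_a^u(M^u)$ is genuinely independent of $s$ so it can be extracted before averaging. Once those observations are in place, both moments follow by one line of Pauli-averaging plus the already-proven bound $|\xi_a^u(m)| \leq 1$.
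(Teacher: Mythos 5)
Your proposal is correct and follows essentially the same route as the paper: pull out the deterministic factor $\xi_a^{u}(M^{u})$, use POVM completeness under the null law to kill the first moment, and combine $|\xi_a^{u}(M^{u})|\le 1$ with Lemma~\ref{lem:SumPauliExp} to bound the second. Your opening remarks pinning down that the outcome distribution at each edge is the one induced by $\Lambda_{\mathrm{dep}}$ (so the state hitting the POVM is $I/2^n$) make explicit something the paper leaves implicit, but this is the same argument, not a different one.
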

\begin{proof}
The expectation is computed as
\begin{align*}
\E_{s}[Y_a(u,A,s)]&=\sum_s \epsilon\xi_a^{u_t}(M^{u_t})\Tr(\omega_s^u \ket{\psi_2^u}\bra{\psi_s^u}P_a)]\\
&=\epsilon\xi_a^{u_t}(M^{u_t})\sum_{s}\left[\Tr\left(\frac{I}{2^n}P_a\right)\right]\\
&=0\,,
\end{align*}
where the third line follows from the POVM property, i.e., $\sum_s \omega_s^u2^n\ket{\psi_2^u}\bra{\psi_s^u})=I$. Note that this is simply a consequence of the standard fact that likelihood ratios always integrate to 1.

The second moment is bounded by
\begin{align*}
\E_{s,a}[Y_a(u,A,s)^2]&\leq\sum_s \epsilon^2\xi_a^{u_t}(M^{u_t})^2\E_{s,a}[\bra{\psi_{s_t}^{u_t}}P_a\ket{\psi_{s_t}^{u_t}}^2]\\
&\leq\epsilon^2\E_{s,a}[\bra{\psi_{s_t}^{u_t}}P_a\ket{\psi_{s_t}^{u_t}}^2]\\
&=\epsilon^2\sum_s\omega_s^u\E_a[\bra{\psi_{s_t}^{u_t}}P_a\ket{\psi_{s_t}^{u_t}}^2]\\
&\leq\epsilon^2\sum_i\omega_s^u\cdot\frac{1}{2^n+1}\\
&=\frac{\epsilon^2}{2^n+1}\,.
\end{align*}
Here, the second line follows from the fact that $\abs{\xi_a^{u_t}(M^{u_t})}\leq 1$, and the fourth line follows from Lemma~\ref{lem:SumPauliExp}.
\end{proof}
With respect to the distribution over paths $\z$, the distribution on $L_a(\z)$ actually indicates the contribution of this path to the total variation distance between $p_0$ and $\E_{a}[p_a]$. This is because
\begin{align*}
\text{TV}(p_0,\E_{a}[p_a])=\E_{\z\sim p_0}[\max(0,1-\E_a[L_a(\z)])]\leq\E_{\z,a}[\max(0,1-L_a(\z))].
\end{align*}

Next, we introduce the concept of good and balanced paths. Given an edge $(u,A,s)$ and a constant value $\alpha$ to be fixed later, we say the edge is $a$-good if
\begin{align*}
Y_a(u,A,s)\geq-\alpha\,.
\end{align*}
We say that a path $\z$ is $a$-good if all of its constituent edges are $a$-good.

We say a path $\z$ is $(a,\nu)$-balanced if its constituent edges $((u_1,A_1,s_1),\ldots,(u_T,A_T,s_T))$ satisfy
\begin{align*}
\sum_{t=1}^T \E_{s_t}[Y_a(u_t,A_t,s_t)^2]\leq\nu\,.
\end{align*}

Intuitively, the goodness of a path indicates that we never experience any significant multiplicative decreases in the likelihood ratio as we go down a path. The balancedness of a path indicates that the variance of the multiplicative changes in the likelihood ratio are never too large. Now, we can bound the proportion of bad  or imbalanced paths among all possible choices of the index $a$ defining the underlying channel:
\begin{align*}
\Pr_{\z\sim p_0}[\text{proportion of bad paths}]&=\E_a\Pr_{\z}[\z\text{ is not }a\text{-good}]\\
&\leq T\E_a\Pr[(u_t,A_t,s_t)\text{ is not }a\text{-good}]\\
&=\frac{T}{\alpha^2(2^n+1)}.
\end{align*}
Here, the last line follows from Chebyshev's inequality. Thus we obtain the following lemma via Markov's inequality:
\begin{lemma}\label{lem:chanprobgood}
$\Pr_{\z\sim p_0}[{\rm proportion \ of \ good \ paths}\geq 1-\frac{100T}{\alpha^2(2^n+1)}]\leq 0.99\,.$
\end{lemma}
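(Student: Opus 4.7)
The plan is to apply Markov's inequality to the non-negative random variable that counts the proportion of $a$-labels making the sampled path bad. For each leaf $\z$ of $\cT$, define
\[
B(\z) \coloneqq \Pr_{a}\!\bigl[\,\z \text{ is not } a\text{-good}\,\bigr],
\]
where the probability is over a uniform choice of $a \in \Z_2^{2n} \setminus \{0\}$. The proportion of good paths referenced in the lemma is, by definition, $1 - B(\z)$, so the event $\{\text{proportion of good paths at } \z \geq 1 - 100T/(\alpha^2(2^n+1))\}$ coincides with $\{B(\z) \leq 100T/(\alpha^2(2^n+1))\}$. The claim can therefore be re-expressed as a tail bound on $B(\z)$ under $\z \sim p_0$.

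The three displayed lines immediately preceding the lemma already carry out the bulk of the work. Swapping the order of the expectations via Fubini gives $\E_{\z \sim p_0}[B(\z)] = \E_a \Pr_{\z \sim p_0}[\z \text{ is not } a\text{-good}]$. A union bound over the $T$ edges on a root-to-leaf path reduces this to bounding, for each edge $(u,A,s)$, the quantity $\E_a \Pr_s[Y_a(u,A,s) \leq -\alpha]$. Using Chebyshev's inequality together with the variance bound $\E_{s,a}[Y_a(u,A,s)^2] \leq 1/(2^n+1)$ supplied by \Cref{lem:PropertyYChan} (absorbing the $\epsilon$ dependence into $\alpha$ as in the preceding display), each term is at most $1/(\alpha^2(2^n+1))$. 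Summing over $T$ edges gives $\E_{\z \sim p_0}[B(\z)] \leq T/(\alpha^2(2^n+1))$.

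With this expectation bound in hand, Markov's inequality applied to the non-negative variable $B(\z)$ at threshold $100\cdot \E[B(\z)]$ yields
\[
\Pr_{\z \sim p_0}\!\left[B(\z) \geq \frac{100T}{\alpha^2(2^n+1)}\right] \;\leq\; \frac{1}{100}.
\]
Rewriting in terms of the complementary event, and substituting the identity $\text{proportion of good paths at }\z = 1 - B(\z)$, directly gives the tail bound stated in the lemma on the event $\{\text{proportion of good paths}\geq 1 - 100T/(\alpha^2(2^n+1))\}$.

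There is essentially no technical obstacle here: the entire content of the lemma is a one-line application of Markov to a random variable whose expectation was already controlled in the preceding computation. The only point to be careful about is the bookkeeping between the two layers of randomness (over $\z \sim p_0$ and over the uniform index $a$), which is handled cleanly by Fubini, and the verification that the union bound plus Chebyshev step indeed matches the exponential-in-$n$ denominator coming from \Cref{lem:PropertyYChan}.
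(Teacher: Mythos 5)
Your proposal is correct and follows essentially the same route as the paper: bound $\E_{\z\sim p_0}$ of the proportion of bad indices $a$ by Fubini, a union bound over the $T$ edges, and Chebyshev via the second-moment bound of \Cref{lem:PropertyYChan}, then apply Markov at threshold $100$ times the expectation. The only cosmetic difference is that the paper drops the $\epsilon^2\le 1$ factor silently where you note it explicitly, and note that the inequality in the lemma as printed should read $\ge 0.99$ (as your argument, and the paper's subsequent use of the lemma, both indicate).
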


\noindent Similarly, we obtain the following lemma:

\begin{lemma}\label{lem:chanprobbalance}
$\Pr_{\z\sim p_0}[\mathrm{proportion \ of \ balanced \ paths}\geq 1-\frac{100T\epsilon^2}{\nu(2^n+1)}]\leq 0.99\,.$
\end{lemma}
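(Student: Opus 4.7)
The stated inequality ``$\leq 0.99$'' should almost certainly read ``$\geq 0.99$'', mirroring the companion statement in Lemma~\ref{lem:chanprobgood} (both of which flow from one-sided Markov inequalities that upper-bound the probability of a ``bad'' event). Treating the statement as the intended high-probability bound
\begin{equation}
\Pr_{\z \sim p_0}\!\left[\text{proportion of balanced paths} \geq 1 - \frac{100T\epsilon^2}{\nu(2^n+1)}\right] \geq 0.99,
\end{equation}
the plan is to mirror the two-step Markov argument of Lemma~\ref{lem:chanprobgood} but now applied to the second-moment quantity $V_a(\z) \coloneqq \sum_{t=1}^T \E_{s_t}[Y_a(u_t,A_t,s_t)^2]$ that defines $(a,\nu)$-balancedness. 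By definition, $\z$ fails to be $(a,\nu)$-balanced iff $V_a(\z) > \nu$.

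First I would invoke the tower property. Under $p_0 = p^{\Lambdadep}$, the channel output at every step is $I/2^n$, so the conditional law of $s_t$ given the history $(u_t, A_t)$ is exactly the POVM-weight distribution $\{\omega^{u_t}_{s_t}\}_{s_t}$ used in the inner expectation defining $V_a$. Hence $\E_{\z \sim p_0}[V_a(\z)] = \sum_t \E_{\z \sim p_0}[Y_a(u_t,A_t,s_t)^2]$. Averaging over $a$ and invoking the second-moment bound $\E_{s,a}[Y_a(u,A,s)^2] \leq \epsilon^2/(2^n+1)$ from Lemma~\ref{lem:PropertyYChan} depth-by-depth, linearity of expectation gives
\begin{equation}
\E_{a, \z \sim p_0}[V_a(\z)] \leq \frac{T\epsilon^2}{2^n+1}.
\end{equation}

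Second, I would apply Markov's inequality twice. One application to the non-negative random variable $V_a(\z)$ under the joint distribution on $(a, \z)$ yields
\begin{equation}
\E_{\z \sim p_0}\bigl[\Pr_a[\z \text{ is not } (a,\nu)\text{-balanced}]\bigr] = \Pr_{a, \z \sim p_0}[V_a(\z) > \nu] \leq \frac{T\epsilon^2}{\nu(2^n+1)}.
\end{equation}
A second application, now to the $\z$-measurable random variable $\Pr_a[\z \text{ is not } (a,\nu)\text{-balanced}]$, shows this quantity exceeds $100T\epsilon^2/(\nu(2^n+1))$ with probability at most $0.01$ over $\z \sim p_0$, which is precisely the intended conclusion after passing to the complementary event.

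The only subtlety is the first step: one must verify that the nested conditional expectation in the definition of $V_a$ collapses correctly under $p_0$. This works cleanly because $p_0$ is generated by exactly the POVM weights $\{\omega^u_s\}_s$ at each adaptive node when $\Lambdadep$ is applied, so the outer expectation over $\z \sim p_0$ and the inner conditional expectation over $s_t$ combine via the tower property into an unconditional expectation of $Y_a^2$, which Lemma~\ref{lem:PropertyYChan} directly bounds—with no hidden dependence on the concatenation length $M^{u_t}$ because the factor $\xi_a^{u_t}(M^{u_t})^2 \leq 1$ was already absorbed into that lemma's proof.
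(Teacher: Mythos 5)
Your proposal is correct and follows essentially the same route as the paper: bound $\E_{a,\z\sim p_0}[\sum_t \E_{s_t}[Y_a^2]]$ by $T\epsilon^2/(2^n+1)$ via Lemma~\ref{lem:PropertyYChan} and linearity, apply Markov to get the expected proportion of balanced paths, then apply Markov once more to obtain the $0.99$-probability statement with the factor $100$. You are also right that the inequality in the statement is a typo and should read ``$\geq 0.99$,'' as confirmed by how the lemma is invoked later in the proof of Theorem~\ref{thm:AdaptConLow}.
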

\begin{proof}
\begin{align*}
\Pr_{\z\sim p_0}[\text{proportion of balanced path}&=\E_a[1-\Pr_{\z}[\z\text{ not }a\text{-balanced}]]\\
&\geq\E_a\Bigl[1-\E_{\z}\Bigl[\frac{\sum_{t=1}^T \E_{s_t}[Y_a(u_t,A_t,s_t)^2]}{\nu}\Bigr]\Bigr]\\
&=1-\frac{T\epsilon^2}{\nu(2^n+1)}\,.\qedhere
\end{align*}
\end{proof}

\paragraph{Martingale concentration.} Now, we consider the paths that are both good and balanced. Together we can obtain the following Bernstein-type concentration.
\begin{lemma}\label{lem:chanBernsteinbound}
For any $a\in\{1,\ldots,4^n-1\}$, consider the following sequence of random variables:
\begin{align*}
\bigl\{X_t\coloneqq\log(1+Y_a(u_t,A_t,s_t))\cdot\mathbbm{1}[(u_t,A_t,s_t)\text{ is }a\text{-good}\;]\bigr\}_{t=1}^T
\end{align*}
where the randomness is with respect to $p_0$. For any $\eta,\nu,\epsilon>0$, we have
\begin{align*}
&\Pr_{\z\sim p_0}\left[\sum_{t=1}^TX_t\leq-\left(1+\frac{1}{\alpha}\right)\sum_{t=1}^T \E_{s_t}[Y_a(u_t,A_t,s_t)^2]-\eta\text{ and }\sum_{t=1}^T \E_{s_t}[Y_a(u_t,A_t,s_t)^2]\leq\nu\right ]\\
&\qquad\leq\exp\left(\frac{-\eta^2}{4\nu+2\alpha\eta/3}\right).
\end{align*}
\end{lemma}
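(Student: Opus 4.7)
The lemma is a martingale Bernstein-type concentration inequality, and my plan is to prove it via the standard exponential-supermartingale method that underlies Freedman's inequality. The ingredients are a scalar Taylor bound on $\log(1+y)$, a per-step Bernstein MGF estimate, and a Markov's-inequality-plus-optimization step at the end.

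First, I would establish the one-variable Taylor inequality: for all $y \geq -\alpha$ (with $\alpha$ small enough that $\alpha \leq 1/2$, say),
\[ \log(1+y) \;\geq\; y - \tfrac{1}{2}(1 + 1/\alpha)\, y^{2}, \]
which I verify by showing that $f(y) = \log(1+y) - y + \tfrac{1}{2}(1+1/\alpha)y^{2}$ has $f(0) = f'(0) = 0$, is convex on $[0,\infty)$, and is non-increasing on $[-\alpha, 0]$. Since on $a$-bad edges $X_t = 0$ while $Y_t - \tfrac12(1+1/\alpha)Y_t^2 < 0$ (because $Y_t < -\alpha < 0$), this lifts to the unconditional pointwise bound $X_t \geq Y_t - \tfrac{1}{2}(1+1/\alpha)\,Y_t^{2}$, and taking conditional expectations together with $\E_{s_t}[Y_a(u_t,A_t,s_t)] = 0$ (\lem{PropertyYChan}) gives $\E_{s_t}[X_t] \geq -\tfrac{1}{2}(1+1/\alpha)\,\E_{s_t}[Y_a(u_t,A_t,s_t)^2]$.

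Second, I would set up the exponential supermartingale. For $\lambda \in (0, 3/\alpha)$, the plan is to show
\[ \E_{s_t}[\exp(-\lambda X_t)] \;\leq\; \exp\!\Bigl(\lambda(1+1/\alpha)\, V_t + \tfrac{\lambda^{2} V_t}{1 - \lambda \alpha/3}\Bigr), \]
where I abbreviate $V_t := \E_{s_t}[Y_a(u_t,A_t,s_t)^2]$. This is obtained by substituting the Taylor bound into the expansion of $\exp(-\lambda X_t) = (1+Y_a(u_t,A_t,s_t))^{-\lambda\mathbbm{1}[\text{good}]}$ and using the standard scalar inequality $e^u - 1 - u \leq u^{2} / (2(1 - u/3))$. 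It then follows that
\[ W_t \;:=\; \exp\!\Bigl(-\lambda \textstyle\sum_{s\leq t} X_s - \lambda(1+1/\alpha)\sum_{s\leq t}V_s - \tfrac{\lambda^{2}}{1-\lambda\alpha/3}\sum_{s\leq t}V_s\Bigr) \]
is a supermartingale with $\E[W_T] \leq 1$. By Markov's inequality, the probability of the event in the lemma (which forces $\sum V_s \leq \nu$ and $-\sum X_s \geq (1+1/\alpha)\sum V_s + \eta$) is bounded by $\exp(-\lambda \eta + \lambda^{2}\nu/(1-\lambda\alpha/3))$, and the Bernstein-style choice $\lambda = \eta / (2\nu + 2\alpha\eta/3)$ delivers the claimed tail estimate $\exp(-\eta^{2} / (4\nu + 2\alpha\eta/3))$.

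The main technical hurdle is the MGF bound in the second step, specifically handling the indicator $\mathbbm{1}[\text{good}]$ inside $X_t$: on bad edges $\exp(-\lambda X_t) = 1$, which contributes only $\Pr[\text{bad}] \leq V_t/\alpha^{2}$ (by Chebyshev on $Y_a(u_t,A_t,s_t)$), and after combining with the $\lambda(1+1/\alpha)V_t$ mean-correction term this extra contribution can be absorbed into the variance proxy $V_t$ without worsening the stated constants. Everything else is bookkeeping in the Chernoff optimization.
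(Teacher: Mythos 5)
Your proposal takes essentially the same approach as the paper: establish per-step first- and second-moment estimates for $X_t$, then plug them into a Freedman-type martingale Bernstein inequality via the exponential-supermartingale machinery. The paper proves precisely the two moment bounds $\E_{s_t}[X_t]\ge -(1+1/\alpha)\E_{s_t}[Y_a^2]$ and $\E_{s_t}[X_t^2]\le 2\E_{s_t}[Y_a^2]$ and leaves the supermartingale/Chernoff step implicit. You derive the first-moment bound by a cleaner route — the $\alpha$-dependent Taylor estimate $\log(1+y)\ge y-\tfrac12(1+1/\alpha)y^2$ on $y\ge-\alpha$, lifted pointwise because the right side is negative on bad edges — which even improves the constant by a factor of two, and you spell out the supermartingale and the optimization $\lambda=\eta/(2\nu+2\alpha\eta/3)$, both of which I checked and are correct. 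The paper instead uses $\log(1+y)\ge y-y^2$ and absorbs the indicator with Cauchy--Schwarz and Chebyshev, which yields the same first-moment conclusion.

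The one place your sketch does not close as written is the per-step MGF bound. If you literally substitute the Taylor bound into $e^{-\lambda X_t}$, you are left to control $\E_{s_t}\bigl[\exp\bigl(-\lambda Y_t+\tfrac{\lambda}{2}(1+1/\alpha)Y_t^2\bigr)\bigr]$, and that exponent is only bounded by something of order $\lambda(1+1/\alpha)\epsilon^2$, not $\lambda\alpha$, so the scalar estimate $e^u-1-u\le u^2/(2(1-u/3))$ applied there will not reproduce the $1-\lambda\alpha/3$ denominator. The correct move is to apply that scalar inequality directly with $u=-\lambda X_t$, which is uniformly bounded above by $-\lambda\log(1-\alpha)\approx\lambda\alpha$ (because good edges enforce $X_t\ge\log(1-\alpha)$ and bad edges give $X_t=0$); but then what enters the quadratic term is $\E_{s_t}[X_t^2]$, so you still need the paper's second-moment bound $\E_{s_t}[X_t^2]\le 2\E_{s_t}[Y_a^2]$ (from $\log(1+y)^2\le 2y^2$ on $y\ge-\alpha$, say with $\alpha\le 1-1/\sqrt{2}$). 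Your Taylor lower bound on $X_t$ does not supply that upper bound on $X_t^2$; adding it would make the proof complete. Separately, note that the bound constant is really $-\log(1-\alpha)$ rather than $\alpha$, a negligible distinction at $\alpha=10^{-2}$ that the paper also glosses over.
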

\begin{proof}
Notice that for any edge $(u_t,A_t,s_t)$, we have
\begin{align*}
\MoveEqLeft \E_{s_t}[\log(1+Y_a(u_t,A_t,s_t))\cdot\mathbbm{1}[(u_t,A_t,s_t)\text{ is }a\text{-good}]]\\
&\geq\E_{s_t}[(Y_i(u_t,A_t,s_t)-Y_a(s_t,A_t,u_t)^2)\cdot\mathbbm{1}[(u_t,A_t,s_t)\text{ is }a\text{-good}]]\\
&\geq\E_{s_t}[(Y_a(u_t,A_t,s_t)-Y_a(s_t,A_t,u_t)^2)]-\E_{s_t}[Y_a(u_t,A_t,s_t)\cdot\mathbbm{1}[(u_t,A_t,s_t)\text{ is not }a\text{-good}]]\\
&\geq-\E_{s_t}[Y_a(u_t,A_t,s_t)^2]-\E_{s_t}[Y_a(u_t,A_t,s_t)^2]^{1/2}\cdot\Pr[(u_t,A_t,s_t)\text{ is not }a\text{-good}]^{1/2}\\
&=-\Bigl(1+\frac{1}{\alpha}\Bigr)\E_{s_t}[Y_a(u_t,A_t,s_t)^2]\,,
\end{align*}
where in the last two lines we used Cauchy-Schwartz and Chebyshev's inequality. In addition, we have
\begin{align*}
\MoveEqLeft\E_{s_t}[\log(1+Y_a(u_t,A_t,s_t))^2\cdot\mathbbm{1}[(u_t,A_t,s_t)\text{ is }a\text{-good}]]\\
&\leq\E_{s_t}[2Y_i(u_t,A_t,s_t)^2\cdot\mathbbm{1}[(u_t,A_t,s_t)\text{ is }a\text{-good}]]\\
&\leq\E_{s_t}[2Y_i(u_t,s_t)^2]\,.\qedhere
\end{align*}
\end{proof}

Now, we are ready to complete our proof of \Cref{thm:AdaptConLow}. Suppose to the contrary that the protocol given by our learning tree can distinguish between the two scenarios with constant advantage using only
\begin{align*}
T\leq O\left(\frac{2^n}{\epsilon^2}\right)
\end{align*}
measurements.
Without loss of generality, we assume that $\frac{T\epsilon^2}{2^n+1}\leq c$ for some constant $c>0$. We choose the following parameters
\begin{align*}
c=1/(101000),\quad\alpha=10^{-2},\quad\nu=1/(10100),\quad\eta=0.05.
\end{align*}

According to \Cref{lem:chanprobgood}, with probability at least $0.99$, the proportion of good paths with respect to randomly chosen paths $\z\sim p_0$ is at least $0.99$. According to \Cref{lem:chanprobbalance}, with probability at least $0.99$, the proportion of balanced path with respect to $\z\sim p_0$ is at least $0.99$. According to \Cref{lem:chanBernsteinbound}, for all $a$ we have $\sum_{t=1}^TX_t>-0.1$ with probability at least $0.99$. By the definition of $X_t$, we have $L_a(\z)\geq 0.9$ with probability at least $0.99$ with respect to $\z\sim p_0$.

Combining these three results, a randomly chosen path $\z$ is good and balanced with probability at least $0.99^2$. In addition, this path satisfies that $L_a(\z)\geq 0.9$ with probability at least $0.99$ for any $a$. Notice that the variation distance along any remaining paths is bounded by $1$. Thus the total variation distance is bounded by
\begin{align*}
0.99^3\times 0.99^3\times 0.1+(1-0.99^3\times0.99^3)\leq 1/6.
\end{align*}

Thus this algorithm cannot learn the problem with a probability of at least $5/6$. Therefore, any protocols that solve the many-versus-one distinguishing problem with probability at least $5/6$ requires $O(2^n/\epsilon^2)$ measurements. This finishes the proof for \Cref{thm:AdaptConLow}.

\section{Lower bound on query complexity with bounded quantum memory}\label{sec:SampComp}

In this section, we prove Theorem~\ref{thm:AdaptConKLow_informal}, which gives an exponential lower bound on the number of channel queries needed by any learning protocol with $k\le n$ ancilla qubits of quantum memory:
\begin{theorem}\label{thm:AdaptConKLow}
For any (possibly adaptive and concatenating) protocol with $k$ ancilla qubits of the quantum memory that solves Pauli eigenvalue estimation (Problem~\ref{prob:def}) and even easier Pauli spike detection (Problem~\ref{prob:SpikeDetect}) to a constant error with at least large constant probability must make at least $\Omega(2^{(n-k)/3})$ queries to the unknown channel.
\end{theorem}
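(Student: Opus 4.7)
The plan is to adapt the tree-representation and martingale-concentration machinery of \Cref{sec:MeasComp} to the setting of $k$-qubit quantum memory, now working at the granularity of individual channel queries rather than entire measurements. First, I extend \Cref{def:TreeMemless} to a tree $\cT$ whose internal quantum state lives on an $(n+k)$-qubit register (the $n$-qubit working register together with a $k$-qubit ancilla), and whose edges correspond to rounds consisting of an adaptive sequence of channel queries on the working register, interleaved with arbitrary data-processing channels on the full $(n+k)$-qubit register, terminating in a POVM measurement on all $n+k$ qubits. The complexity parameter of interest is now $N$, the total number of channel queries summed along any root-to-leaf path. As in \Cref{sec:MeasComp}, I reduce to the Pauli Spike Detection problem (\Cref{prob:SpikeDetect}) and invoke Le Cam's method (\Cref{lem:LeCam}), so that it suffices to show that any protocol with $N\ll 2^{(n-k)/3}$ cannot distinguish $\Lambdadep$ from a uniformly random $\Lambda_a$ with constant advantage.

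The analytical core is a per-query decomposition of the likelihood ratio. For any state $\sigma$ on the $(n+k)$-qubit register with Pauli expansion $\sigma=\sum_{P,Q}c_{P,Q}\,P\otimes Q$, where $P$ ranges over $n$-qubit Paulis and $Q$ over $k$-qubit Paulis, a direct computation gives that $(\Lambda_a\otimes\mathrm{id})(\sigma)-(\Lambdadep\otimes\mathrm{id})(\sigma)=P_a\otimes\sum_Q c_{P_a,Q}Q$. Using this identity, I assign to each of the $N$ queries $q$ a random variable $Y_a(q)$ capturing the contribution of query $q$ to the final POVM outcome likelihood under hypothesis $a$, obtained by propagating the above spike forward through all subsequent queries (treated as $\Lambdadep$) and data-processing channels of the same round until it is absorbed into the POVM effect that produces the observed outcome. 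Using $\sum_{P,Q}c_{P,Q}^2\le 2^{-(n+k)}$ for pure inputs, together with an $(n+k)$-qubit Pauli orthogonality inequality in the spirit of \Cref{lem:SumPauliExp}, I expect to obtain per-query first-moment zero and second-moment bounds analogous to \Cref{lem:PropertyYChan}, with the second moment decaying as a suitable negative power of $2^n$ relative to $2^k$.

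Given these per-query moment bounds, the rest of the argument mirrors the good-and-balanced-path analysis of \Cref{sec:MeasComp}. I define the $a$-good and $(a,\nu)$-balanced notions with respect to the $Y_a(q)$ variables summed over queries (rather than measurements), use Markov's inequality to show that a $p^{\Lambdadep}$-typical path is both good and balanced, and invoke a Bernstein-type martingale bound in the spirit of \Cref{lem:chanBernsteinbound} to conclude that $L_a(\z)$ stays $\Omega(1)$ for most $a$ along such paths. Tuning the good and balanced thresholds against the per-query second moment then drives the Le Cam total variation distance strictly below $1/2$ whenever $N$ is asymptotically smaller than $2^{(n-k)/3}$, contradicting constant-advantage distinguishing and establishing the lower bound.

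The main obstacle is the concatenation. Within a single round, the state entering the $i$-th query depends on whether earlier queries in the same round were $\Lambdadep$ or $\Lambda_a$, so a naive definition of $Y_a(q)$ would mix the ``fresh spike'' at query $q$ with ``drift'' inherited from the earlier spikes, violating the martingale-difference property and inflating the variance estimate. The plan is to define $Y_a(q)$ by telescoping $L_a(\z)$ one query at a time, choosing each linearization so that only the $P_a$ component injected at query $q$ contributes to $Y_a(q)$; the remaining cross terms then either vanish upon averaging over $a$ by Pauli orthogonality, or are of higher order and can be absorbed into the balanced-path analysis.
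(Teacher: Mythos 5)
Your proposal identifies the right target problem and the right reduction (Pauli Spike Detection plus Le Cam), but the engine you propose is not the one the paper uses, and it has a structural gap you half-acknowledge but do not resolve. The paper does not run a martingale argument at the query level at all; instead it performs a hybrid (replacement) argument on the total variation distance. It introduces intermediate probability distributions $p_{\bm m}$ obtained by replacing the last $m_t$ channel queries in round $t$ by $\Lambda_a$, telescopes via the triangle inequality $\mathrm{TV}(p_1,p_2)\leq\sum_{t,m_t}\mathrm{TV}(p_{\ldots m_t\ldots},p_{\ldots (m_t+1)\ldots})$ into $O(N)$ pairwise comparisons, and bounds each pairwise TV by $2\bigl(\tfrac{2^k}{2^n-1}\bigr)^{1/3}$. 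This sidesteps the obstruction entirely: the telescoping happens at the level of distributions, not at the level of a multiplicative likelihood ratio, so no per-query martingale structure is needed.

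The gap in your approach is exactly the one you flag and then wave away. The likelihood ratio $L_a(\z)$ of the paper's Section~3 factors over \emph{rounds}, because each factor is the ratio of two measurement-outcome probabilities, and a measurement is only applied once per round. Within a round there is no observed randomness associated with the intermediate queries, so there is no filtration with respect to which ``$Y_a(q)$'' for a single query $q$ is a martingale increment, and $\E_{s_t}[Y_a(q)]=0$ has no meaning at the per-query level. Your remedy (``telescoping $L_a(\z)$ one query at a time, choosing each linearization so that only the $P_a$ component injected at query $q$ contributes'') amounts to re-deriving the hybrid decomposition but insisting on phrasing it multiplicatively, and the claim that the cross terms ``vanish upon averaging over $a$ by Pauli orthogonality, or are of higher order'' is not substantiated and is not true in general: a single round with $M_t=2^{\Theta(n)}$ queries can have a per-round log-likelihood increment of order $1$, as your own Theorem~\ref{thm:AncillaConDis} illustrates.

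You also do not identify where the factor $2^k$ actually enters. Your proposal appeals to $\sum_{P,Q}c_{P,Q}^2\le 2^{-(n+k)}$ and an $(n+k)$-qubit analogue of \Cref{lem:SumPauliExp}, but a direct application of that style of bound only controls $\E_a\bigl[\bra{\psi}(P_a\otimes I)\ket{\psi}^2\bigr]\lesssim 2^{-n}$ and gives no $2^k$-dependence. The paper's mechanism is different: after reducing the intermediate data-processing channels to the endpoints of a round via a Stinespring-dilation trick, it writes the per-hybrid ratio as $1-\tfrac{p_1^t(j_t)}{p_0^t(j_t)}=-\tfrac{\Tr(P_aC_j^\dagger P_aC_j)}{\Tr(C_j^\dagger C_j)}$ for a $2^n\times 2^n$ matrix $C_j=B_jA^\dagger$ of rank $\le 2^k$, and then applies a Cauchy--Schwarz inequality of the form $\Tr(P_aC_j^\dagger P_aC_j)^2\le 2^k\,\Tr(C_jC_j^\dagger P_aC_jC_j^\dagger P_a)$ before summing over $a$. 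The rank-$2^k$ structure of $C_j$, not a Pauli orthogonality bound on an $(n+k)$-qubit pure state, is the mechanism that makes the per-hybrid TV scale like $(2^k/2^n)^{1/3}$. Without this ingredient, the quantitative claim ``second moment decaying as a suitable negative power of $2^n$ relative to $2^k$'' is unsupported and the exponent $\tfrac{n-k}{3}$ does not come out.
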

\noindent Note that unlike in the previous section, the lower bound here is only for the query complexity rather than the measurement complexity. 

We first set some notation. Recall that any adaptive and concatenating strategy with a $k$-qubit memory can be described as follows. Given the unknown Pauli channel $\Lambda$, we perform $T$ rounds of quantum measurement. In the $t$-th round, we input a state $\rho^t$ (possibly adaptively chosen), interleave channel queries with a sequence of $M^t$ adaptive data processing channels $\cC_1^t,\ldots,\cC_{M_t}^t\in\C^{2^{n+k}\times 2^{n+k}}$, and perform an adaptive POVM $\{E^t_j\}_j$ on all $n+k$ qubits. The $j$-th measurement can be obtained with probability:
\begin{align*}
p^t(j|t-1,\ldots,1)=\Tr[E_j^t\Lambda\otimes I(\cC_{M_t}^t(\Lambda\otimes I\cdots\Lambda\otimes I(\cC_1^t(\Lambda\otimes I(\rho^t)))\cdots))],
\end{align*}
where $\Lambda\otimes I$ is the unknown channel acting on the $n$ qubits. Without loss of generality, we assume $\rho^t=\ket{\phi^t}\bra{\phi^t}$ and $\{E^t_j\}_j=\bigl\{\omega_j^t 2^{n+k}\ket{\psi^t_j}\bra{\psi_j^t}\bigr\}_j$.

For the proof of Theorem~\ref{thm:AdaptConKLow}, we follow the formalism of the many-versus-one distinguishing problem introduced in Section~\ref{sec:tree} and consider the Pauli spike detection defined in \Cref{prob:SpikeDetect}, which we rewrote here as distinguishing between the following two scenarios:
\begin{itemize}
    \item The unknown channel is the depolarization channel $\Lambdadep=\frac{1}{2^n}I\Tr(\cdot)$.
    \item The unknown channel is sampled uniformly from a set of channels $\{\Lambda_a\}_{a\in\Z_2^{2n}\backslash \{0\}}$ with $\Lambda_a=\frac{1}{2^n}[I\Tr(\cdot)+ P_a\Tr(P_a(\cdot))]$.
\end{itemize}
It is straightforward to observe that the learning strategy can accomplish this task with high probability. Let the probability distribution of the measurement outcomes under these two cases be $p_1$ and $p_2$, respectively. The total variation distance TV$(p_1,p_2)$ must be bounded below by $O(1)$ for the distinguishing task to succeed with high probability.

Now, we begin to prove \Cref{thm:AdaptConKLow}. Without loss of generality, we consider an algorithm of $T$ adaptive, concatenating measurements. In the $t$-th turn, we assume that the algorithm concatenates $M_t$ channels before the measurement. Therefore, the total number of $\Lambda$ used is
\begin{align*}
N=\sum_{t=1}^TM_t.
\end{align*}

In the $t$-th turn, we denote
\begin{align*}
p_0^t(j)=\Tr[E_j^t\Lambdadep\otimes I(\cC_{M_t}^t(\Lambdadep\otimes I\cdots\Lambdadep\otimes I(\cC_1^t(\Lambdadep\otimes I(\ket{\phi^t}\bra{\phi^t})))\cdots))],
\end{align*}
which is the probability of obtaining the result $j$ in the first case. We further define some intermediate probability
\begin{align*}
p_m^t(j)=\Tr[E_j^t\underbrace{\Lambda_a\otimes I(\cC_{M_t}^t(\Lambda_a\otimes I}_{m_t\text{ channels}}\cdots\underbrace{\Lambdadep\otimes I(\cC_1^t(\Lambdadep\otimes I}_{M_t-m_t+1\text{ channels}}(\ket{\phi^t}\bra{\phi^t})))\cdots))],
\end{align*}
where the last $m$ channels are replaced by $\Lambda_a$. We define the procedure in this turn the $m_t$-intermediate procedure. Therefore, we have $p_{M_t+1}^t(j)$ the probability of obtaining result $j$ in the $t$-th turn:
\begin{align*}
p_{M_t+1}^t(j)=\Tr[E_j^t\Lambda_a\otimes I(\cC_{M_t}^t(\Lambda_a\otimes I\cdots\Lambda_a\otimes I(\cC_1^t(\Lambda_a\otimes I(\ket{\phi^t}\bra{\phi^t})))\cdots))].
\end{align*}

Now, we consider the whole algorithm, the probability of obtaining a result $\bm{j}=j_1,\ldots,j_T$ with $m_t$-intermediate procedures is represented by
\begin{align*}
p_{\bm{m}}(\bm{j})=\E_a \prod_{t=1}^Tp_{m_t}^t(j_t),
\end{align*}
where $\bm{m}=m_1,\ldots,m_T$. In the first case, the probability of obtaining the result $\bm{j}$ is
\begin{align*}
p_1(\bm{j})=p_{\bm{0}}(\bm{j}),
\end{align*}
(actually independent from $a$) while in the second case, the probability of obtaining result $\bm{j}$ is
\begin{align*}
p_2(\bm{j})=p_{\bm{M}}(\bm{j}),
\end{align*}
where $\bm{M}=M_1+1,\ldots,M_T+1$. According to the triangular inequality, the total variation distance between $p_1$ and $p_2$ is bounded by
\begin{align}\label{eq:TriangleDecomp}
\text{TV}(p_1,p_2)\leq\sum_{t=1}^T\sum_{m_t=0}^{M_t}\text{TV}(p_{(M_1+1)\ldots(M_{t-1}+1)m_t0\ldots 0},p_{(M_1+1)\ldots(M_{t-1}+1)(m_t+1)0\ldots 0}).
\end{align}
By the above inequality, we decompose the total variation distance between $p_1$ and $p_2$ into $O(N+T)=O(N)$ terms, each containing the total variation distance between two algorithms with \textit{exactly} one different channel $\Lambdadep$ and $\Lambda_a$. In the following, we consider each term
\begin{align*}
\text{TV}(p_{M_1\ldots M_{t-1}m_t0\ldots 0},p_{M_1\ldots M_{t-1}(m_t+1)0\ldots 0}).
\end{align*}
In particular, we prove the following two lemmas.
\begin{lemma}
For any $t$, we have $\text{TV}(p_{(M_1+1)\ldots(M_{t-1}+1)00\ldots 0},p_{(M_1+1)\ldots(M_{t-1}+1)10\ldots 0})\leq 2\left(\frac{2^k}{2^n-1}\right)^{1/3}$.
\end{lemma}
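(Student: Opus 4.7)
The plan proceeds in three stages.

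First, I reduce the TV distance to a one-step contribution. Because the first $t-1$ rounds apply $\Lambda_a$ and the rounds after $t$ apply $\Lambdadep$ in both scenarios, the TV distance arises entirely from the final channel of round $t$, which is $\Lambdadep$ under $p_{\cdots 00\cdots 0}$ and $\Lambda_a$ under $p_{\cdots 10\cdots 0}$. A crucial structural observation is that the state $\sigma_h$ entering this final channel depends only on the history $h = (j_1,\ldots,j_{t-1})$ and not on $a$: every channel in round $t$ preceding the final one is $\Lambdadep$, which overwrites the $n$-qubit working register with $I_A/2^n$, so the combined effect of the intermediate data-processing channels $\cC_1^t,\ldots,\cC_{M_t}^t$ on $\rho^t$ carries $a$-dependence only through $h$.

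Second, I expand the one-step difference explicitly. From $(\Lambda_a-\Lambdadep)(\rho)=\frac{1}{2^n}P_a\Tr(P_a\rho)$ the POVM probability difference equals $\omega_j^h\cdot 2^k\langle\psi_j^h|P_a\otimes R_{a,\sigma_h}|\psi_j^h\rangle$ with $R_{a,\sigma_h}:=\Tr_A[(P_a\otimes I)\sigma_h]$, so
\begin{equation*}
\text{TV}(q_1,q_2)\;=\;2^{k-1}\sum_{h,j}\omega_j^h\Bigl|\E_a\,P_a(h)\,\langle\psi_j^h|P_a\otimes R_{a,\sigma_h}|\psi_j^h\rangle\Bigr|.
\end{equation*}
Applying Cauchy--Schwarz over $a$ (to separate $P_a(h)$ from the inner product), then over $j$ with weights $\omega_j^h$ (using $\sum_j\omega_j^h\langle X\rangle_{\psi_j^h}^2\le \Tr(X^2)/2^{n+k}$), and invoking the Pauli orthogonality identity $\sum_{a,c}\lambda_{ac}^2(\sigma_h)=2^{n+k}\Tr(\sigma_h^2)\le 2^{n+k}$ to obtain the moment bound $\E_a\|R_{a,\sigma_h}\|_2^2\le 2^n/(4^n-1)$ uniformly in $h$, I reduce the problem to controlling
\begin{equation*}
\text{TV}(q_1,q_2)\;\le\;2^{(k-1)/2}\sum_h\sqrt{\E_a P_a(h)^2\cdot \E_a\|R_{a,\sigma_h}\|_2^2}.
\end{equation*}

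Third, the main obstacle is to bound the $h$-sum without incurring a factor $\sqrt{|H|}$ from a naive Cauchy--Schwarz over $h$: if $P_a(h)$ were constant in $a$ the summand would reduce to $\bar P(h)\sqrt{2^k/2^n}$ and summing over $h$ would give the $(2^k/2^n)^{1/2}$ exponent, but concentration of the posterior $P(a\mid h)$ can blow up $\sqrt{\E_a P_a(h)^2}/\bar P(h)$ in general. To obtain the sharper $1/3$ exponent I plan to interpolate between two regimes via a threshold $\tau$: on ``typical'' histories (where $P_a(h)\le \tau\bar P(h)$ for all $a$) the Cauchy--Schwarz estimate above is essentially tight and contributes $O(\tau^{1/2}(2^k/2^n)^{1/2})$, while on ``concentrated'' histories I use the crude bound $\|R_{a,\sigma_h}\|_1\le 2^k$ together with a Markov-type estimate on the $\bar P$-mass of such histories, contributing $O(2^k/\tau)$. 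Optimizing $\tau\sim (2^n/2^k)^{1/3}$ balances these contributions and yields the stated bound $2(2^k/(2^n-1))^{1/3}$. The three-way balance between the $L_2$-moment estimate, the trivial $L_1$ bound, and the tail probability of concentrated histories is the heart of the argument and the reason the exponent is $1/3$ rather than the $1/2$ one gets from unconditional Cauchy--Schwarz.
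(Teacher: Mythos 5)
Your Stage~1 reduction and your observation that the state $\sigma_h$ entering the distinguished channel is $a$-independent both match the paper's setup, but from there the two arguments genuinely diverge, and your route has a gap that I do not see how to close. The paper does \emph{not} separate the history weight from the per-step deviation. Instead it works with the \emph{ratio} $1 - p_1^t(j_t)/p_0^t(j_t)$: writing $A,B_j \in \C^{2^n\times 2^k}$ for the coefficient matrices of the input state and POVM vector and $C_j = B_jA^\dagger$ (rank $\le 2^k$), one gets $1 - p_1^t(j_t)/p_0^t(j_t) = -\Tr(P_aC_j^\dagger P_aC_j)/\Tr(C_j^\dagger C_j)$. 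A rank-aware Cauchy--Schwarz, $\Tr(P_aC_j^\dagger P_aC_j)^2\le 2^k\,\Tr(C_jC_j^\dagger P_aC_jC_j^\dagger P_a)$, plus the Pauli sum identity then give $\sum_a\bigl(\Tr(P_aC_j^\dagger P_aC_j)/\Tr(C_j^\dagger C_j)\bigr)^2 \le 2^{n+k}$ \emph{for each fixed} $(h,j)$. Because this ratio is also uniformly bounded by $1$, a single threshold at $\theta = (2^k/(2^n-1))^{1/3}$ gives: at most a $\theta^{-2}2^{n+k}/(4^n-1)$ fraction of $a$'s exceed $\theta$, contributing $\theta + \theta^{-2}2^{n+k}/(4^n-1) \approx 2\theta$. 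The history marginal stays attached to the $a$-average and never needs to be controlled on its own.

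Your Cauchy--Schwarz over $a$ separating $P_a(h)$ from the per-step difference is the step that loses the game: it leaves you needing to control $\sum_h\sqrt{\E_a P_a(h)^2}$, which can be as large as $\sqrt{|H|}$, and your Stage~3 plan does not rescue it. On the ``typical'' side your estimate is fine, but on the ``concentrated'' side the Markov estimate you invoke gives nothing: from $\sum_h\max_a P_a(h)\le 4^n-1$ you only get $\sum_{h\,\mathrm{conc}}\bar P(h) \le (4^n-1)/\tau$, which is $\ge 1$ for any reasonable threshold $\tau$, so there is no small-mass control of concentrated histories; and even granting such control, the crude bound $\|R_{a,\sigma_h}\|_{\mathrm{op}}\le 1$ on that piece carries the $2^{k-1}$ prefactor from the TV expansion, which already dwarfs the target $(2^k/(2^n-1))^{1/3}$. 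The missing ingredient is exactly the rank-$2^k$ Cauchy--Schwarz applied to the \emph{ratio} $\Tr(P_aC_j^\dagger P_aC_j)/\Tr(C_j^\dagger C_j)$ rather than to the raw probability difference; it is what produces a pointwise-in-$(h,j)$ second-moment bound over $a$ that makes any control of the posterior $P_a(h)$ unnecessary, and it is where the exponent $1/3$ actually comes from (threshold vs.\ tail-count tradeoff, not an $L^1$/$L^2$ interpolation in $h$).
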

\begin{proof}
For simplicity, we denote $p_{(M_1+1)\ldots(M_{t-1}+1)00\ldots 0}=p'$ and $p_{(M_1+1)\ldots(M_{t-1}+1)10\ldots 0}=p''$. We also denote
\begin{align*}
\rho_{\phi}=\cC_{M_t}^t(\Lambda_a\otimes I\cdots\Lambda_a\otimes I(\cC_1^t(\Lambda_a\otimes I(\ket{\phi^t}\bra{\phi^t})))\cdots)
\end{align*}
It is straightforward to observe that
\begin{align*}
\text{TV}(p',p'')&=\sum_{\bm{j}:p'(\bm{j})>p''(\bm{j})}(p'(\bm{j})-p''(\bm{j}))\\
&=\E_a\sum_{\bm{j}:p'(\bm{j})>p''(\bm{j})}\left(\prod_{t'=1}^{t-1}p^{t'}_{M_{t'}+1}(j_{t'})p_0^{t}(j_{t})\prod_{t''=t+1}^{T}p^{t''}_{0}(j_{t''})-\prod_{t'=1}^{t-1}p^{t'}_{M_{t'}+1}(j_{t'})p_1^{t}(j_{t})\prod_{t''=t+1}^{T}p^{t''}_{0}(j_{t''})\right)\\
&=\E_a\sum_{\bm{j}:p'(\bm{j})>p''(\bm{j})}p'(\bm{j})\left(1-\frac{p_1^{t}(j_{t})}{p_0^{t}(j_{t})}\right).
\end{align*}
We further observe that
\begin{align*}
1-\frac{p_1^{t}(j_{t})}{p_0^{t}(j_{t})}=1-\frac{\Tr(E_{j_t}^t\Lambda_a\otimes I(\rho_\phi))}{\Tr(E_{j_t}^t\Lambdadep\otimes I(\rho_\phi))}.
\end{align*}
Due to the convexity of quantum states and channels, we can regard $\rho_\phi$ as a pure state. Thus, we still denote $\rho_\phi=\ket{\phi}\bra{\phi}$. The above term can be written as:
\begin{align*}
1-\frac{p_1^{t}(j_{t})}{p_0^{t}(j_{t})}=-\frac{\sum_{b=0}^{4^k}\bra{\psi_{j_t}^t}P_a\otimes P_b\ket{\psi_{j_t}^t}\bra{\phi}P_a\otimes P_b\ket{\phi}}{\sum_{b=0}^{4^k}\bra{\psi_{j_t}^t}I\otimes P_b\ket{\psi_{j_t}^t}\bra{\phi}I\otimes P_b\ket{\phi}}.
\end{align*}
We denote $A,B_j\in\C^{2^n\times 2^k}$ such that
\begin{align*}
\ket{\phi}=\sum_{p=0}^{2^n-1}\sum_{q=0}^{2^k-1}\bra{p}A\ket{q}\ket{p}\ket{q},\qquad\ket{\psi_j^t}=\sum_{p=0}^{2^n-1}\sum_{q=0}^{2^k-1}\bra{p}B_j\ket{q}\ket{p}\ket{q}.
\end{align*}
According to the normalization condition, we have $\Tr(B_j^\dagger B_j)=\Tr(A^\dagger A)$. We further denote $C_j=B_jA^\dagger$ a $2^n\times 2^n$ matrix of rank at most $2^k$. Thus, we have
\begin{align*}
1-\frac{p_1^{t}(j_{t})}{p_0^{t}(j_{t})}&=-\frac{\sum_{b=0}^{4^k}\Tr(B_j^\dagger P_aB_j P_b^T)\Tr(A^\dagger P_aA P_b^T)}{\sum_{b=0}^{4^k}\Tr(B_j^\dagger B_j P_b^T)\Tr(A^\dagger A P_b^T)}\\
&=-\frac{\Tr(B_j^\dagger P_aB_jA^\dagger P_a A)}{\Tr(B_j^\dagger B_jA^\dagger A)}\\
&=-\frac{\Tr(P_aC_j^\dagger P_aC_j)}{\Tr(C_j^\dagger C_j)},
\end{align*}
where the second line follows from the fact that $\mathbb{I}(\cdot)=2^{-k}\sum_{b=0}^{4^k}P_b\Tr(P_b(\cdot))$. Notice that $\Tr(P_aC_j^\dagger P_aC_j)^2$\\$\leq 2^k \Tr(C_j C_j^\dagger P_a C_j C_j^\dagger P_a)$ according to the Cauchy-Schwarz inequality, the second moment can be calculated as:
\begin{align*}
\sum_{a=1}^{4^n-1}\left(\frac{\Tr(P_aC_j^\dagger P_aC_j)}{\Tr(C_j^\dagger C_j)}\right)^2&\leq\sum_{a=1}^{4^n-1}\frac{2^k\Tr(C_j^\dagger C_j P_a C_jC_J^\dagger P_a)}{\Tr(C_j^\dagger C_j)}\\
&=2^k\frac{2^n\Tr(C_j^\dagger C_j)-\Tr(C^\dagger_j C_jC_j^\dagger C_j)}{\Tr(C_j^\dagger C_j)}\\
&\leq 2^k2^n\,.\qedhere
\end{align*}
Therefore, there can be at most $\left(\frac{2^k}{2^n-1}\right)^{1/3}(4^n-1)$ choices of $a$ such that
\begin{align*}
\abs{\frac{\Tr(P_aC_j^\dagger P_aC_j)}{\Tr(C_j^\dagger C_j)}}\geq\left(\frac{2^k}{2^n-1}\right)^{1/3}.
\end{align*}
We have
\begin{align*}
\text{TV}(p',p'')&=\E_a\sum_{\bm{j}:p'(\bm{j})>p''(\bm{j})}p'(\bm{j})\Bigl(1-\frac{p_1^{t}(j_{t})}{p_0^{t}(j_{t})}\Bigr)\\
&=\frac{1}{4^n-1}\sum_{\bm{j}:p'(\bm{j})>p''(\bm{j})}p'(\bm{j}) \cdot \sum_{a=1}^{4^n-1}\Bigl(1-\frac{p_1^{t}(j_{t})}{p_0^{t}(j_{t})}\Bigr)\\
&\leq\sum_{\bm{j}:p'(\bm{j})>p''(\bm{j})}p'(\bm{j})\Bigl[\Bigl(\frac{2^k}{2^n-1}\Bigr)^{1/3}+\Bigl(1-\Bigl(\frac{2^k}{2^n-1}\Bigr)^{1/3}\Bigr)\Bigl(\frac{2^k}{2^n-1}\Bigr)^{1/3}\Bigr]\\
&\leq 2\Bigl(\frac{2^k}{2^n-1}\Bigr)^{1/3}\,.
\end{align*}
\end{proof}

\noindent We also have the following lemma
\begin{lemma}
For any $t$ and $0<m_t\leq M_t$, we have 
\begin{align*}
    \text{TV}(p_{(M_1+1)\ldots(M_{t-1}+1)m_t0\ldots 0},p_{(M_1+1)\ldots(M_{t-1}+1)(m_t+1)0\ldots 0})\leq 2\Bigl(\frac{2^k}{2^n-1}\Bigr)^{1/3}\,.
\end{align*}
\end{lemma}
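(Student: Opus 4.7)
The plan is to adapt the preceding lemma's proof to arbitrary $m_t$. Fix $t,m_t$, and let the \emph{critical channel} denote the $(M_t+1-m_t)$-th channel of round $t$---the one whose identity flips from $\Lambdadep$ (stage $m_t$) to $\Lambda_a$ (stage $m_t+1$). All other operations---the pre-critical $\Lambdadep$'s interleaved with $\cC_1^t,\ldots,\cC_{M_t-m_t}^t$; the post-critical $\cC_{M_t+1-m_t}^t,\ldots,\cC_{M_t}^t$ interleaved with $m_t$ copies of $\Lambda_a$; and the final POVM $\{E_j^t\}_j$---are identical in both stages. Conditioning on the history $j_{<t}$ of outcomes from rounds $1,\ldots,t-1$, the pre-critical state $\rho_\phi(j_{<t})$ is a well-defined operator independent of $a$ (the pre-critical channels in round $t$ are all $\Lambdadep$), and by convexity we may take it pure, $\rho_\phi=\ket\phi\bra\phi$.

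Since the post-critical processing is a CPTP map applied identically in both stages, the data-processing inequality gives
\begin{equation}
\text{TV}\bigl(p^{t,m_t}(\cdot\mid j_{<t};a),p^{t,m_t+1}(\cdot\mid j_{<t};a)\bigr)\;\le\;\tfrac12\bigl\|(\Lambdadep-\Lambda_a)\otimes I(\rho_\phi)\bigr\|_1.
\end{equation}
Using $\Lambda_a(\cdot)-\Lambdadep(\cdot)=\tfrac{1}{2^n}P_a\Tr(P_a\cdot)$, the right-hand side equals $\tfrac12\|\tau_a\|_1$, where $\tau_a:=\Tr_n\bigl((P_a\otimes I)\rho_\phi\bigr)$ is a $2^k\times 2^k$ matrix (and satisfies the trivial bound $\|\tau_a\|_1\le 1$ because the partial trace and $P_a\otimes I$ are both $\|\cdot\|_1$-contractions).

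From here I would control $\E_a\|\tau_a\|_1$ by the same second-moment-plus-Markov argument as in the preceding lemma. Writing $\ket\phi=\sum_{p,q}A_{pq}\ket{p,q}$ with $A\in\C^{2^n\times 2^k}$ and $M:=AA^\dagger$ (rank $\le 2^k$, trace one), the rank bound gives $\|\tau_a\|_1\le\sqrt{2^k}\,\|\tau_a\|_2$ with $\|\tau_a\|_2^2=\Tr(MP_aMP_a)$, and the Pauli twirling identity $\sum_{a\ne 0}P_aMP_a=2^nI\,\Tr(M)-M$ yields $\sum_{a\ne 0}\Tr(MP_aMP_a)\le 2^n$. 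Declaring an index $a$ ``bad'' when $\|\tau_a\|_1>(2^k/(2^n-1))^{1/3}$, Markov bounds the fraction of bad indices by $(2^k/(2^n-1))^{1/3}$; combining the contribution of good indices (each at most $(2^k/(2^n-1))^{1/3}$) with bad indices (each at most $1$) yields $\tfrac12\E_a\|\tau_a\|_1\le 2(2^k/(2^n-1))^{1/3}$ uniformly in $j_{<t}$, and the outer expectation over $(a,j_{<t})$ preserves this bound.

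The key step I expect to verify carefully is the conditioning on $j_{<t}$: the Pauli twirling identity requires $M$ to be fixed in the sum over $a$, which is guaranteed precisely because $\rho_\phi(j_{<t})$ is $a$-independent once $j_{<t}$ is held fixed. The only $a$-dependence in the pre-critical state arises through the distribution over $j_{<t}$, which the conditioning neutralizes, so no further structural obstacle arises beyond the bookkeeping already present in the preceding lemma's proof.
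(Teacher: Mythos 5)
Your reduction of the conditional TV to $\tfrac12\|\tau_a\|_1$ via data processing, with $\tau_a=\Tr_n[(P_a\otimes I_k)\rho_\phi]$, is a cleaner route than the paper's: the paper instead transports the post-critical map $\cC$ onto the POVM element through a Stinespring-adjoint manipulation it only sketches (and the resulting $\cC'(\ket{\psi_{j_t}^t}\bra{\psi_{j_t}^t})$ is not a normalized state, so the subsequent ``take it pure by convexity'' step is itself delicate), then works with $\Tr(P_aC_j^\dagger P_aC_j)/\Tr(C_j^\dagger C_j)$ for $C_j=B_jA^\dagger$. Your rank-$2^k$ Cauchy--Schwarz, the Pauli twirl giving $\sum_{a\ne 0}\Tr(MP_aMP_a)\le 2^n$, and the Markov count of bad $a$'s match the paper's argument in substance while avoiding the adjoint-channel detour.

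However, the step you explicitly flag at the end is not resolved by the conditioning and is a genuine gap. For each fixed $j_{<t}$ you correctly get $\E_a\bigl[\tfrac12\|\tau_a(j_{<t})\|_1\bigr]\le 2\delta$ with $\delta=(2^k/(2^n-1))^{1/3}$, because $\rho_\phi(j_{<t})$, hence $M(j_{<t})$, is $a$-independent once $j_{<t}$ is fixed. But the quantity the lemma bounds is $\E_a\E_{j_{<t}\sim q_a}\bigl[\tfrac12\|\tau_a(j_{<t})\|_1\bigr]$, where $q_a(j_{<t})=\prod_{t'<t}p^{t'}_{M_{t'}+1}(j_{t'})$ is the transcript distribution produced by running rounds $1,\ldots,t-1$ with $\Lambda_a$ in these hybrids --- and $q_a$ depends on $a$. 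You cannot interchange $\E_a$ and $\E_{j_{<t}}$: if the first $t-1$ rounds let the protocol form an accurate estimate $\hat a(j_{<t})$ of $a$ (which is exactly what an adaptive protocol is trying to do, and is not precluded for the larger values of $t$ appearing in the telescope), then for each $a$ the measure $q_a$ concentrates on histories under which the adaptive channel $\cC^t_{M_t-m_t}$ re-prepares the working register in an eigenstate of $P_{\hat a(j_{<t})}=P_a$, driving $\|\tau_a(j_{<t})\|_1$ toward $1$ on most of $q_a$'s mass while the fixed-$j_{<t}$ Markov bound stays satisfied. This is precisely the scenario the per-$j_{<t}$ bound does not control. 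The paper's proof has the same hole: it pulls $p'(\bm{j})$ outside the sum over $a$ even though the factor $\prod_{t'<t}p^{t'}_{M_{t'}+1}(j_{t'})$ inside $p'(\bm{j};a)$ depends on $a$. So your instinct to single out this step was correct, but the claim that ``the conditioning neutralizes'' the $a$-dependence is where the argument fails, and supplying a valid argument here would require an idea that neither your writeup nor the paper contains.
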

\begin{proof}
For simplicity, we also denote $p_{(M_1+1)\ldots(M_{t-1}+1)m_t0\ldots 0}=p'$ and $p_{(M_1+1)\ldots(M_{t-1}+1)(m_t+1)0\ldots 0}=p''$. We can observe that
\begin{align*}
\text{TV}(p',p'')&=\sum_{\bm{j}:p'(\bm{j})>p''(\bm{j})}(p'(\bm{j})-p''(\bm{j}))\\
&=\E_a\sum_{\bm{j}:p'(\bm{j})>p''(\bm{j})}\left(\prod_{t'=1}^{t-1}p^{t'}_{M_{t'}+1}(j_{t'})p_{m_t}^{t}(j_{t})\prod_{t''=t+1}^{T}p^{t''}_{0}(j_{t''})-\prod_{t'=1}^{t-1}p^{t'}_{M_{t'}+1}(j_{t'})p_{m_t+1}^{t}(j_{t})\prod_{t''=t+1}^{T}p^{t''}_{0}(j_{t''})\right)\\
&=\E_a\sum_{\bm{j}:p'(\bm{j})>p''(\bm{j})}p'(\bm{j})\left(1-\frac{p_{m_t+1}^{t}(j_{t})}{p_{m_t}^{t}(j_{t})}\right).
\end{align*}
In particular, we have
\begin{align*}
1-\frac{p_{m_t+1}^{t}(j_{t})}{p_{m_t}^{t}(j_{t})}=1-\frac{\Tr[E_j^t\underbrace{\Lambda_a\otimes I(\cC_{M_t}^t(\Lambda_a\otimes I}_{m_t+1\text{ channels}}\cdots\underbrace{\Lambdadep\otimes I(\cC_1^t(\Lambdadep\otimes I}_{M_t-m_t\text{ channels}}(\ket{\phi^t}\bra{\phi^t})))\cdots))]}{\Tr[E_j^t\underbrace{\Lambda_a\otimes I(\cC_{M_t}^t(\Lambda_a\otimes I}_{m_t\text{ channels}}\cdots\underbrace{\Lambdadep\otimes I(\cC_1^t(\Lambdadep\otimes I}_{M_t-m_t+1\text{ channels}}(\ket{\phi^t}\bra{\phi^t})))\cdots))]}.
\end{align*}
Without loss of generality, we denote
\begin{align*}
\rho_\phi=\underbrace{\Lambdadep\otimes I(\cC_1^t(\Lambdadep\otimes I}_{M_t-m_t\text{ channels}}(\ket{\phi^t}\bra{\phi^t}))),\quad\cC=\underbrace{\Lambda_a\otimes I(\cC_{M_t}^t\cdots(\Lambda_a\otimes I}_{m_t\text{ channels}}(\cdot))\cdots).
\end{align*}
Therefore, we have
\begin{align*}
1-\frac{p_{m_t+1}^{t}(j_{t})}{p_{m_t}^{t}(j_{t})}=1-\frac{\Tr(E_{j_t}^t\cC(\Lambda_a\otimes I(\rho_\phi)))}{\Tr(E_{j_t}^t\cC(\Lambdadep\otimes I(\rho_\phi)))}.
\end{align*}

When summing over all the $\bm{j}$'s to compute the total variation distance, we are also summing up all $j_t$. 
%According to the property of a quantum channel (complete positive trace-preserving (CPTP) map) $\mathcal{E}$, we have
% \begin{align*}
% \norm{\mathcal{E}(X)}_{\Tr}\leq\norm{X}_{\Tr}.
% \end{align*}
We compute the total variation distance between $p'$ and $p''$ as:
\begin{align*}
\text{TV}(p',p'')&=\E_a\sum_{\bm{j}:p'(\bm{j})>p''(\bm{j})}p'(\bm{j})\Bigl(1-\frac{p_{m_t+1}^{t}(j_{t})}{p_{m_t}^{t}(j_{t})}\Bigr)\\
&=\E_a\sum_{\bm{j}:p'(\bm{j})>p''(\bm{j})}p'(\bm{j})\Bigl(1-\frac{\Tr(E_{j_t}^t\cC(\Lambda_a\otimes I(\rho_\phi)))}{\Tr(E_{j_t}^t\cC(\Lambdadep\otimes I(\rho_\phi)))}\Bigr)\\
% &\textcolor{blue}{!}\leq \E_a\sum_{\bm{j}:p'(\bm{j})>p''(\bm{j})}p'(\bm{j})\Bigl(1-\frac{\Tr(E_{j_t}^t\Lambda_a\otimes I(\rho_\phi))}{\Tr(E_{j_t}^t\Lambdadep\otimes I(\rho_\phi))}\Bigr).
\end{align*}
Without loss of generality, we assume that $\cC(\rho)=\sum_{k=1}^K M_k^\dagger\rho M_k$ for any quantum state $\rho$, where $\rho$ is an arbitrary quantum state. Suppose the unitary in the expanded Hilbert space for $\cC$ according to the Stinespring dilation theorem~\cite{stinespring1955positive} is $U$, we consider the channel corresponding to $U^\dagger$ in the expanded Hilbert space. The quantum channel corresponding to this unitary is exactly $\cC'=\sum_{k=1}^K M_k\rho M_k^\dagger$. We notice that
\begin{align*}
\frac{\Tr(E_{j_t}^t\cC(\Lambda_a\otimes I(\rho_\phi)))}{\Tr(E_{j_t}^t\cC(\Lambdadep\otimes I(\rho_\phi)))}&=\frac{\Tr\left(E_{j_t}^t\sum_{k=1}^KM_k^\dagger(\Lambda_a\otimes I(\rho_\phi))M_k\right)}{\Tr\left(E_{j_t}^t\sum_{k=1}^KM_k^\dagger(\Lambdadep\otimes I(\rho_\phi))M_k\right)}\\
&=\frac{\Tr\left(\sum_{k=1}^KM_kE_{j_t}^tM_k^\dagger(\Lambda_a\otimes I(\rho_\phi))\right)}{\Tr\left(\sum_{k=1}^KM_kE_{j_t}^tM_k^\dagger(\Lambdadep\otimes I(\rho_\phi))\right)}\\
&=\frac{\Tr\left(\cC'\left(\ket{\psi_{j_t}^t}\bra{\psi_{j_t}^t}\right)\Lambda_a\otimes I(\rho_\phi)\right)}{\Tr\left(\cC'\left(\ket{\psi_{j_t}^t}\bra{\psi_{j_t}^t}\right)\Lambdadep\otimes I(\rho_\phi)\right)},
\end{align*}
where the last line follows from the definition of the POVM. We denote $\cC'\left(\ket{\psi_{j_t}^t}\bra{\psi_{j_t}^t}\right)$ as $\rho_{\psi}$. Due to the convexity of quantum states and channels, we can regard $\rho_{\psi}$ and $\rho_\phi$ as a pure state. Thus, we still denote $\rho_{\psi}=\ket{\psi_{j_t}^t}\bra{\psi_{j_t}^t}$ and $\rho_\phi=\ket{\phi}\bra{\phi}$. Then this problem reduces exactly to the same case as the previous lemma. We still use the definition of $A$, $B_j$'s, and $C$. According to the proof for the previous lemma, we have
\begin{align*}
1-\frac{\Tr(E_{j_t}^t\Lambda_a\otimes I(\rho_\phi))}{\Tr(E_{j_t}^t\Lambdadep\otimes I(\rho_\phi))}=-\frac{\Tr(P_aC_j^\dagger P_aC_j)}{\Tr(C_j^\dagger C_j)}.
\end{align*}
Thus, there can be at most $\left(\frac{2^k}{2^n-1}\right)^{1/3}(4^n-1)$ choices of $a$ such that
\begin{align*}
\Bigl|\frac{\Tr(P_aC_j^\dagger P_aC_j)}{\Tr(C_j^\dagger C_j)}\Bigr|\geq\Bigl(\frac{2^k}{2^n-1}\Bigr)^{1/3}.
\end{align*}
Therefore, we have
\begin{align*}
\text{TV}(p',p'')&=\E_a\sum_{\bm{j}:p'(\bm{j})>p''(\bm{j})}p'(\bm{j})\Bigl(1-\frac{p_1^{t}(j_{t})}{p_0^{t}(j_{t})}\Bigr)\\
&=\frac{1}{4^n-1}\sum_{\bm{j}:p'(\bm{j})>p''(\bm{j})}p'(\bm{j})\cdot\sum_{a=1}^{4^n-1}\Bigl(1-\frac{p_1^{t}(j_{t})}{p_0^{t}(j_{t})}\Bigr)\\
&\leq\sum_{\bm{j}:p'(\bm{j})>p''(\bm{j})}p'(\bm{j})\Bigl(\Bigl(\frac{2^k}{2^n-1}\Bigr)^{1/3}+\Bigl(1-\Bigl(\frac{2^k}{2^n-1}\Bigr)^{1/3}\Bigr)\Bigl(\frac{2^k}{2^n-1}\Bigr)^{1/3}\Bigr)\\
&\leq 2\Bigl(\frac{2^k}{2^n-1}\Bigr)^{1/3}.\qedhere
\end{align*}
\end{proof}

Combining the above two lemmas and the triangle inequality in \eqref{eq:TriangleDecomp}, we have
\begin{align*}
\text{TV}(p_1,p_2)\leq O(N)\cdot\Bigl(\frac{2^k}{2^n-1}\Bigr)^{1/3}.
\end{align*}
Notice that the maximal success probability of distinguishing two probability distributions is given by $\frac12(1+\text{TV}(p_1,p_2))$, we prove the lower bound on the number of $\Lambda$ required for learning Pauli channel using adaptive and concatenating strategy, and $k$-qubit quantum memory is
\begin{align*}
N=\Omega\bigl(2^{(n-k)/3}\bigr).
\end{align*}
This finishes the proof for \Cref{thm:AdaptConKLow}.

\section*{Acknowledgements}
We thank Yuan-Zhuo Wang, Yi-Ran Xiao, Ming-Yang Li, Shengjun Wu, and Zeng-Bing Chen for finding and fixing an issue in our proof for \Cref{lem:ThresholdSearch}, and solving one of our open questions by reducing the number of ancilla qubits for learning eigenvalues of Pauli channels with polynomial measurement complexity to a constant~\cite{wang2025weakly}. We thank Matthias Caro, Senrui Chen, Jordan Cotler, Dong-Ling Deng, Hsin-Yuan Huang, Tongyang Li, Qi Ye, and Sisi Zhou for helpful discussions. This work was supported in part by NSF Award 2430375.
%%%%%%%%%%%%%%%%%%%%%%%%%%%%%%%%%%%%%%%%%%%%%%%%%%%%%%%%%%%%%%%%%%%%%%%%%%%%%%

\bibliographystyle{myhamsplain}
\bibliography{QChanEstBoundMem}

\providecommand{\bysame}{\leavevmode\hbox to3em{\hrulefill}\thinspace}
\begin{thebibliography}{10}

\bibitem{aaronson2007learnability}
Scott Aaronson, \emph{The learnability of quantum states}, Proceedings of the Royal Society A: Mathematical, Physical and Engineering Sciences \textbf{463} (2007), no.~2088, 3089--3114, \mbox{\href{http://arxiv.org/abs/quant-ph/0608142}{quant-ph/0608142}}.

\bibitem{aaronson2018shadow}
\bysame, \emph{Shadow tomography of quantum states}, Proceedings of the 50th annual ACM SIGACT symposium on theory of computing, pp.~325--338, 2018, \mbox{\href{http://arxiv.org/abs/1711.01053}{1711.01053}}.

\bibitem{aaronson2018online}
Scott Aaronson, Xinyi Chen, Elad Hazan, Satyen Kale, and Ashwin Nayak, \emph{Online learning of quantum states}, Advances in neural information processing systems \textbf{31} (2018), \mbox{\href{http://arxiv.org/abs/1802.09025}{1802.09025}}.

\bibitem{aaronson2019gentle}
Scott Aaronson and Guy~N Rothblum, \emph{Gentle measurement of quantum states and differential privacy}, Proceedings of the 51st Annual ACM SIGACT Symposium on Theory of Computing, pp.~322--333, 2019, \mbox{\href{http://arxiv.org/abs/1904.08747}{1904.08747}}.

\bibitem{aharonov1997fault}
Dorit Aharonov and Michael Ben-Or, \emph{Fault-tolerant quantum computation with constant error}, Proceedings of the twenty-ninth annual ACM symposium on Theory of computing, pp.~176--188, 1997, \mbox{\href{http://arxiv.org/abs/quant-ph/9906129}{quant-ph/9906129}}.

\bibitem{aharonov2022quantum}
Dorit Aharonov, Jordan Cotler, and Xiao-Liang Qi, \emph{Quantum algorithmic measurement}, Nature communications \textbf{13} (2022), no.~1, 1--9.

\bibitem{anshu2023survey}
Anurag Anshu and Srinivasan Arunachalam, \emph{A survey on the complexity of learning quantum states}, arXiv:2305.20069 (2023), \mbox{\href{http://arxiv.org/abs/2305.20069}{2305.20069}}.

\bibitem{arora2012multiplicative}
Sanjeev Arora, Elad Hazan, and Satyen Kale, \emph{The multiplicative weights update method: a meta-algorithm and applications}, Theory of computing \textbf{8} (2012), no.~1, 121--164.

\bibitem{buadescu2021improved}
Costin B{\u{a}}descu and Ryan O'Donnell, \emph{Improved quantum data analysis}, Proceedings of the 53rd Annual ACM SIGACT Symposium on Theory of Computing, pp.~1398--1411, 2021, \mbox{\href{http://arxiv.org/abs/2011.10908}{2011.10908}}.

\bibitem{bubeck2020entanglement}
Sebastien Bubeck, Sitan Chen, and Jerry Li, \emph{Entanglement is necessary for optimal quantum property testing}, 2020 IEEE 61st Annual Symposium on Foundations of Computer Science (FOCS), pp.~692--703, IEEE, 2020.

\bibitem{caro2022learning}
Matthias~C Caro, \emph{Learning quantum processes and hamiltonians via the pauli transfer matrix}, arXiv:2212.04471 (2022), \mbox{\href{http://arxiv.org/abs/2212.04471}{2212.04471}}.

\bibitem{chen2023learnability}
Senrui Chen, Yunchao Liu, Matthew Otten, Alireza Seif, Bill Fefferman, and Liang Jiang, \emph{The learnability of pauli noise}, nature communications \textbf{14} (2023), no.~1, 52, \mbox{\href{http://arxiv.org/abs/2206.06362}{2206.06362}}.

\bibitem{chen2023tight}
Senrui Chen, Changhun Oh, Sisi Zhou, Hsin-Yuan Huang, and Liang Jiang, \emph{Tight bounds on pauli channel learning without entanglement}, arXiv:2309.13461 (2023), \mbox{\href{http://arxiv.org/abs/2309.13461}{2309.13461}}.

\bibitem{chen2022quantum}
Senrui Chen, Sisi Zhou, Alireza Seif, and Liang Jiang, \emph{Quantum advantages for pauli channel estimation}, Physical Review A \textbf{105} (2022), no.~3, 032435, \mbox{\href{http://arxiv.org/abs/2108.08488}{2108.08488}}.

\bibitem{chen2021hierarchy}
Sitan Chen, Jordan Cotler, Hsin-Yuan Huang, and Jerry Li, \emph{A hierarchy for replica quantum advantage}, arXiv:2111.05874 (2021), \mbox{\href{http://arxiv.org/abs/2111.05874}{2111.05874}}.

\bibitem{chen2022complexity}
\bysame, \emph{The complexity of nisq}, arXiv:2210.07234 (2022), \mbox{\href{http://arxiv.org/abs/2210.07234}{2210.07234}}.

\bibitem{chen2022exponential}
\bysame, \emph{Exponential separations between learning with and without quantum memory}, 2021 IEEE 62nd Annual Symposium on Foundations of Computer Science (FOCS), pp.~574--585, IEEE, 2022, \mbox{\href{http://arxiv.org/abs/2111.05881}{2111.05881}}.

\bibitem{chen2023futility}
Sitan Chen and Weiyuan Gong, \emph{Futility and utility of a few ancillas for pauli channel learning}, arXiv preprint arXiv:2309.14326 (2023).

\bibitem{chen2022tight}
Sitan Chen, Brice Huang, Jerry Li, Allen Liu, and Mark Sellke, \emph{When does adaptivity help for quantum state learning?}, 2023 IEEE 64th Annual Symposium on Foundations of Computer Science (FOCS), IEEE, 2023.

\bibitem{chen2022tight2}
Sitan Chen, Jerry Li, Brice Huang, and Allen Liu, \emph{Tight bounds for quantum state certification with incoherent measurements}, 2022 IEEE 63rd Annual Symposium on Foundations of Computer Science (FOCS), pp.~1205--1213, IEEE, 2022.

\bibitem{chen2022toward}
Sitan Chen, Jerry Li, and Ryan O’Donnell, \emph{Toward instance-optimal state certification with incoherent measurements}, Conference on Learning Theory, pp.~2541--2596, PMLR, 2022.

\bibitem{chen2022adaptive}
Xinyi Chen, Elad Hazan, Tongyang Li, Zhou Lu, Xinzhao Wang, and Rui Yang, \emph{Adaptive online learning of quantum states}, arXiv:2206.00220 (2022), \mbox{\href{http://arxiv.org/abs/2206.00220}{2206.00220}}.

\bibitem{chen2020more}
Yifang Chen and Xin Wang, \emph{More practical and adaptive algorithms for online quantum state learning}, arXiv:2006.01013 (2020), \mbox{\href{http://arxiv.org/abs/2006.01013}{2006.01013}}.

\bibitem{childs2023streaming}
Andrew~M Childs, Honghao Fu, Debbie Leung, Zhi Li, Maris Ozols, and Vedang Vyas, \emph{Streaming quantum state purification}, arXiv:2309.16387 (2023), \mbox{\href{http://arxiv.org/abs/2309.16387}{2309.16387}}.

\bibitem{chiuri2011experimental}
A~Chiuri, V~Rosati, G~Vallone, S~P{\'a}dua, H~Imai, S~Giacomini, C~Macchiavello, and P~Mataloni, \emph{Experimental realization of optimal noise estimation for a general pauli channel}, Physical Review Letters \textbf{107} (2011), no.~25, 253602, \mbox{\href{http://arxiv.org/abs/1107.0888}{1107.0888}}.

\bibitem{choi1975completely}
Man-Duen Choi, \emph{Completely positive linear maps on complex matrices}, Linear algebra and its applications \textbf{10} (1975), no.~3, 285--290.

\bibitem{cirac1999optimal}
J~Ignacio Cirac, AK~Ekert, and Chiara Macchiavello, \emph{Optimal purification of single qubits}, Physical review letters \textbf{82} (1999), no.~21, 4344, \mbox{\href{http://arxiv.org/abs/quant-ph/9812075}{quant-ph/9812075}}.

\bibitem{collins2013mixed}
David Collins, \emph{Mixed-state pauli-channel parameter estimation}, Physical Review A \textbf{87} (2013), no.~3, 032301, \mbox{\href{http://arxiv.org/abs/1208.6049}{1208.6049}}.

\bibitem{eisert2020quantum}
Jens Eisert, Dominik Hangleiter, Nathan Walk, Ingo Roth, Damian Markham, Rhea Parekh, Ulysse Chabaud, and Elham Kashefi, \emph{Quantum certification and benchmarking}, Nature Reviews Physics \textbf{2} (2020), no.~7, 382--390, \mbox{\href{http://arxiv.org/abs/1910.06343}{1910.06343}}.

\bibitem{endo2018practical}
Suguru Endo, Simon~C Benjamin, and Ying Li, \emph{Practical quantum error mitigation for near-future applications}, Physical Review X \textbf{8} (2018), no.~3, 031027, \mbox{\href{http://arxiv.org/abs/1712.09271}{1712.09271}}.

\bibitem{endo2021hybrid}
Suguru Endo, Zhenyu Cai, Simon~C Benjamin, and Xiao Yuan, \emph{Hybrid quantum-classical algorithms and quantum error mitigation}, Journal of the Physical Society of Japan \textbf{90} (2021), no.~3, 032001, \mbox{\href{http://arxiv.org/abs/2011.01382}{2011.01382}}.

\bibitem{erhard2019characterizing}
Alexander Erhard, Joel~J Wallman, Lukas Postler, Michael Meth, Roman Stricker, Esteban~A Martinez, Philipp Schindler, Thomas Monz, Joseph Emerson, and Rainer Blatt, \emph{Characterizing large-scale quantum computers via cycle benchmarking}, Nature communications \textbf{10} (2019), no.~1, 5347, \mbox{\href{http://arxiv.org/abs/1902.08543}{1902.08543}}.

\bibitem{fawzi2023quantum}
Omar Fawzi, Nicolas Flammarion, Aur{\'e}lien Garivier, and Aadil Oufkir, \emph{Quantum channel certification with incoherent strategies}, arXiv:2303.01188 (2023), \mbox{\href{http://arxiv.org/abs/2303.01188}{2303.01188}}.

\bibitem{fawzi2023lower}
Omar Fawzi, Aadil Oufkir, and Daniel~Stilck Fran{\c{c}}a, \emph{Lower bounds on learning pauli channels}, arXiv:2301.09192 (2023), \mbox{\href{http://arxiv.org/abs/2301.09192}{2301.09192}}.

\bibitem{flammia2021pauli}
Steven~T Flammia and Ryan O'Donnell, \emph{Pauli error estimation via population recovery}, Quantum \textbf{5} (2021), 549, \mbox{\href{http://arxiv.org/abs/2105.02885}{2105.02885}}.

\bibitem{flammia2020efficient}
Steven~T Flammia and Joel~J Wallman, \emph{Efficient estimation of pauli channels}, ACM Transactions on Quantum Computing \textbf{1} (2020), no.~1, 1--32, \mbox{\href{http://arxiv.org/abs/1907.12976}{1907.12976}}.

\bibitem{fujiwara2003quantum}
Akio Fujiwara and Hiroshi Imai, \emph{Quantum parameter estimation of a generalized pauli channel}, Journal of Physics A: Mathematical and General \textbf{36} (2003), no.~29, 8093.

\bibitem{gong2023learning}
Weiyuan Gong and Scott Aaronson, \emph{Learning distributions over quantum measurement outcomes}, International Conference on Machine Learning, pp.~11598--11613, PMLR, 2023, \mbox{\href{http://arxiv.org/abs/2209.03007}{2209.03007}}.

\bibitem{harper2020efficient}
Robin Harper, Steven~T Flammia, and Joel~J Wallman, \emph{Efficient learning of quantum noise}, Nature Physics \textbf{16} (2020), no.~12, 1184--1188, \mbox{\href{http://arxiv.org/abs/1907.13022}{1907.13022}}.

\bibitem{harper2021fast}
Robin Harper, Wenjun Yu, and Steven~T Flammia, \emph{Fast estimation of sparse quantum noise}, PRX Quantum \textbf{2} (2021), no.~1, 010322, \mbox{\href{http://arxiv.org/abs/2007.07901}{2007.07901}}.

\bibitem{hayashi2010quantum}
Masahito Hayashi, \emph{Quantum channel estimation and asymptotic bound}, Journal of Physics: Conference Series, vol. 233, p.~012016, IOP Publishing, 2010.

\bibitem{hazan2016introduction}
Elad Hazan et~al., \emph{Introduction to online convex optimization}, Foundations and Trends{\textregistered} in Optimization \textbf{2} (2016), no.~3-4, 157--325, \mbox{\href{http://arxiv.org/abs/1909.05207}{1909.05207}}.

\bibitem{huang2022quantum}
Hsin-Yuan Huang, Michael Broughton, Jordan Cotler, Sitan Chen, Jerry Li, Masoud Mohseni, Hartmut Neven, Ryan Babbush, Richard Kueng, John Preskill, et~al., \emph{Quantum advantage in learning from experiments}, Science \textbf{376} (2022), no.~6598, 1182--1186, \mbox{\href{http://arxiv.org/abs/2112.00778}{2112.00778}}.

\bibitem{huang2020predicting}
Hsin-Yuan Huang, Richard Kueng, and John Preskill, \emph{Predicting many properties of a quantum system from very few measurements}, Nature Physics \textbf{16} (2020), no.~10, 1050--1057, \mbox{\href{http://arxiv.org/abs/2002.08953}{2002.08953}}.

\bibitem{huang2021information}
\bysame, \emph{Information-theoretic bounds on quantum advantage in machine learning}, Physical Review Letters \textbf{126} (2021), no.~19, 190505, \mbox{\href{http://arxiv.org/abs/2101.02464}{2101.02464}}.

\bibitem{jamiolkowski1972linear}
Andrzej Jamio{\l}kowski, \emph{Linear transformations which preserve trace and positive semidefiniteness of operators}, Reports on Mathematical Physics \textbf{3} (1972), no.~4, 275--278.

\bibitem{keyl2001rate}
Michael Keyl and Reinhard~F Werner, \emph{The rate of optimal purification procedures}, Annales Henri Poincare, vol.~2, pp.~1--26, Springer, 2001, \mbox{\href{http://arxiv.org/abs/quant-ph/9910124}{quant-ph/9910124}}.

\bibitem{liu2023memory}
Qipeng Liu, Ran Raz, and Wei Zhan, \emph{Memory-sample lower bounds for learning with classical-quantum hybrid memory}, Proceedings of the 55th Annual ACM Symposium on Theory of Computing (New York, NY, USA), STOC 2023, p.~1097–1110, Association for Computing Machinery, 2023, \mbox{\href{http://arxiv.org/abs/2303.00209}{2303.00209}}.

\bibitem{liu2021benchmarking}
Yunchao Liu, Matthew Otten, Roozbeh Bassirianjahromi, Liang Jiang, and Bill Fefferman, \emph{Benchmarking near-term quantum computers via random circuit sampling}, arXiv:2105.05232 (2021), \mbox{\href{http://arxiv.org/abs/2105.05232}{2105.05232}}.

\bibitem{lumbreras2022multi}
Josep Lumbreras, Erkka Haapasalo, and Marco Tomamichel, \emph{Multi-armed quantum bandits: Exploration versus exploitation when learning properties of quantum states}, Quantum \textbf{6} (2022), 749, \mbox{\href{http://arxiv.org/abs/2108.13050}{2108.13050}}.

\bibitem{magesan2011scalable}
Easwar Magesan, Jay~M Gambetta, and Joseph Emerson, \emph{Scalable and robust randomized benchmarking of quantum processes}, Physical review letters \textbf{106} (2011), no.~18, 180504, \mbox{\href{http://arxiv.org/abs/1009.3639}{1009.3639}}.

\bibitem{mohseni2008quantum}
Masoud Mohseni, Ali~T Rezakhani, and Daniel~A Lidar, \emph{Quantum-process tomography: Resource analysis of different strategies}, Physical Review A \textbf{77} (2008), no.~3, 032322, \mbox{\href{http://arxiv.org/abs/quant-ph/0702131}{quant-ph/0702131}}.

\bibitem{nielsen2010quantum}
Michael~A. Nielsen and Isaac~L. Chuang, \emph{Quantum {{Computation}} and {{Quantum Information}}}, {Cambridge University Press}, {Cambridge}, 2010.

\bibitem{rakhlin2015online}
Alexander Rakhlin, Karthik Sridharan, and Ambuj Tewari, \emph{Online learning via sequential complexities.}, J. Mach. Learn. Res. \textbf{16} (2015), no.~1, 155--186, \mbox{\href{http://arxiv.org/abs/1006.1138}{1006.1138}}.

\bibitem{ruppert2012optimal}
L{\'a}szl{\'o} Ruppert, D{\'a}niel Virosztek, and Katalin Hangos, \emph{Optimal parameter estimation of pauli channels}, Journal of Physics A: Mathematical and Theoretical \textbf{45} (2012), no.~26, 265305, \mbox{\href{http://arxiv.org/abs/1511.06662}{1511.06662}}.

\bibitem{shor1996fault}
Peter~W Shor, \emph{Fault-tolerant quantum computation}, Proceedings of 37th conference on foundations of computer science, pp.~56--65, IEEE, 1996, \mbox{\href{http://arxiv.org/abs/quant-ph/9605011}{quant-ph/9605011}}.

\bibitem{slud1977distribution}
Eric~V Slud, \emph{Distribution inequalities for the binomial law}, The Annals of Probability \textbf{5} (1977), no.~3, 404--412.

\bibitem{stinespring1955positive}
W~Forrest Stinespring, \emph{Positive functions on c*-algebras}, Proceedings of the American Mathematical Society \textbf{6} (1955), no.~2, 211--216.

\bibitem{tsuda2005matrix}
Koji Tsuda, Gunnar R{\"a}tsch, and Manfred~K Warmuth, \emph{Matrix exponentiated gradient updates for on-line learning and bregman projection}, Journal of Machine Learning Research \textbf{6} (2005), no.~Jun, 995--1018.

\bibitem{van2023quantum}
Joran van Apeldoorn, Arjan Cornelissen, Andr{\'a}s Gily{\'e}n, and Giacomo Nannicini, \emph{Quantum tomography using state-preparation unitaries}, Proceedings of the 2023 Annual ACM-SIAM Symposium on Discrete Algorithms (SODA), pp.~1265--1318, SIAM, 2023, \mbox{\href{http://arxiv.org/abs/2207.08800}{2207.08800}}.

\bibitem{wallman2016noise}
Joel~J Wallman and Joseph Emerson, \emph{Noise tailoring for scalable quantum computation via randomized compiling}, Physical Review A \textbf{94} (2016), no.~5, 052325, \mbox{\href{http://arxiv.org/abs/1512.01098}{1512.01098}}.

\bibitem{wang2025weakly}
Yuan-Zhuo Wang, Yi-Ran Xiao, Ming-Yang Li, Shengjun Wu, and Zeng-Bing Chen, \emph{Weakly-driven quantum walks for memory-constrained pauli channel learning}, arXiv:2509.07702 (2025), \mbox{\href{http://arxiv.org/abs/2509.07702}{2509.07702}}.

\bibitem{watrous2018theory}
John Watrous, \emph{The theory of quantum information}, Cambridge University Press, 2018.

\bibitem{yang2020revisiting}
Feidiao Yang, Jiaqing Jiang, Jialin Zhang, and Xiaoming Sun, \emph{Revisiting online quantum state learning}, Proceedings of the AAAI Conference on Artificial Intelligence, vol.~34, pp.~6607--6614, 2020.

\bibitem{yu1997assouad}
Bin Yu, \emph{Assouad, fano, and le cam}, Festschrift for Lucien Le Cam: research papers in probability and statistics (1997), 423--435.

\bibitem{zimmert2022pushing}
Julian Zimmert, Naman Agarwal, and Satyen Kale, \emph{Pushing the efficiency-regret pareto frontier for online learning of portfolios and quantum states}, Conference on Learning Theory, pp.~182--226, PMLR, 2022, \mbox{\href{http://arxiv.org/abs/2202.02765}{2202.02765}}.

\end{thebibliography}
\clearpage
\appendix

\section{An alternative proof of a lower bound for shadow tomography}\label{app:Shadowlow}
We consider the task of learning the expectation value of observables $O_1,\ldots,O_M$ for an unknown quantum state $\rho$, i.e., approximating $\Tr(O_i\rho)$ for all $i=1,\ldots,M$. Throughout, we assume that $\norm{O_i}_{\sf op} \le 1$. 

We still consider a tree representation of the (adaptive) learning procedure as in the main text. 
\begin{definition}[Tree representation for (adaptive) state learning algorithms]Fix an unknown $n$-qubit quantum state $\rho$, a learning algorithm without quantum memory can be expressed as a rooted tree $\cT$ of depth $T$. Each node on the tree encodes the measurement outcomes the algorithm has seen so far. The tree satisfies:
\begin{itemize}
    \item Each node $u$ is associated with a probability $p^\rho(u)$.
    \item For the root $r$ of the tree, $p^\rho(r)=1$.
    \item At each non-leaf node $u$, we measure an adaptive POVM $\cM_s^u=\{\omega_s^u 2^n\ket{\psi_s^u}\bra{\psi_s^u}\}_s$ on an adaptive input state $\ket{\phi^u}\bra{\phi^u}$ on one $\rho$ to obtain a classical outcome $s$. Each child node $v$ of the node $u$ is connected through the edge $e_{u,s}$.
    \item If $v$ is the child node of $u$ connected through the edge $e_{u,s}$, then
    \begin{align*}
    p^\rho(v)=p^\rho(u)\omega_s^u2^n\bra{\psi_s^u}\rho\ket{\psi_s^u}.
    \end{align*}
    \item Every root-to-leaf path is of length $T$. Note that for a leaf node $\ell$, $p^\rho(\ell)$ is the probability that the classical memory is in state $\ell$ after the learning procedure.
\end{itemize}
\end{definition}

\noindent Again, due to the convexity of linear combination and the fact that we never consider post-measurement quantum states, we can assume all the POVMs are rank-$1$ elements. We can also assume each input state to be a pure state. Because we consider $T$ measurements, all the leaf nodes are at depth $T$. 

For a shadow tomography problem fixing $M$ observable $O_1,\ldots,O_M$, we consider a reduction from the prediction task we care about to a two-hypothesis distinguishing problem. Consider the following $M+1$ states
\begin{itemize}
    \item The unknown state is the maximal mixed state $\rho_0=\frac{1}{2^n}I$.
    \item The unknown channel is sampled uniformly from a set of states $\{\rho_i=\frac{I+\epsilon O_i}{2^n}\}_{i=1}^M$.
\end{itemize}

The goal of the learning algorithm is to predict which event has happened. In this tree representation, in order to distinguish between the two events, the distribution over the classical memory state in the two events must be sufficiently distinct. 

In particular, we can prove the following strengthening of Theorem 5.5 from~\cite{chen2022exponential} for the above distinguishing task.

\begin{theorem}\label{thm:shadow}
The following holds for any $0<\epsilon\leq1$. Given $M$ \textit{traceless} observables $O_1,\ldots,O_M$, any learning algorithm without quantum memory requires
\begin{align*}
T\geq\frac{1}{\epsilon^2\delta(O_1,\ldots,O_M)}
\end{align*}
copies of $\rho$, where $\delta(O_1,\ldots,O_M)=\frac{1}{M}\max_{\ket{\phi}}\sum_{i=1}^M\bra{\phi}O_i\ket{\phi}^2$, to predict expectation values of $\Tr(O_i\rho)$ within at most $\pm\epsilon$ error for all $i\in\{1,\ldots,M\}$ with a high probability.
\end{theorem}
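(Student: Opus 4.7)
The plan is to adapt the likelihood-ratio / martingale-concentration proof of \Cref{thm:AdaptConLow} to the (simpler) non-concatenating state-learning tree defined just above the theorem. First I reduce shadow tomography to the many-versus-one distinguishing problem between $\rho_0 = I/2^n$ and the uniform mixture over $\rho_i = (I+\epsilon O_i)/2^n$: any protocol that estimates every $\Tr(O_i \rho)$ to error $\epsilon/3$ (say) trivially distinguishes the two scenarios with large constant advantage, so it suffices to lower bound the leaf-distribution TV distance by a constant. By \Cref{lem:LeCam}, this reduces to upper bounding $\E_{\z \sim p_0, i}[\max(0, 1 - L_i(\z))]$ along a random root-to-leaf path.

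Next I write the likelihood ratio explicitly. A root-to-leaf path is $\z = ((u_1, s_1), \ldots, (u_T, s_T))$, and since $\bra{\psi_s^u}\rho_0\ket{\psi_s^u} = 1/2^n$ while $\bra{\psi_s^u}\rho_i\ket{\psi_s^u} = (1 + \epsilon \bra{\psi_s^u} O_i \ket{\psi_s^u})/2^n$, I get
\begin{align}
L_i(\z) = \prod_{t=1}^T \bigl(1 + Y_i(u_t, s_t)\bigr), \qquad Y_i(u,s) \coloneqq \epsilon \bra{\psi_s^u} O_i \ket{\psi_s^u}.
\end{align}
The tracelessness of $O_i$ combined with the POVM completeness relation $\sum_s \omega_s^u 2^n \ket{\psi_s^u}\bra{\psi_s^u} = I$ gives $\E_s[Y_i(u,s)] = \epsilon \Tr(O_i)/2^n = 0$, and the hypothesis $\norm{O_i}_{\sf op}\le 1$ gives $|Y_i(u,s)| \le \epsilon$. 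Crucially, for the second moment,
\begin{align}
\E_{s,i}[Y_i(u,s)^2] = \epsilon^2 \sum_s \omega_s^u \cdot \frac{1}{M}\sum_{i=1}^M \bra{\psi_s^u} O_i \ket{\psi_s^u}^2 \;\le\; \epsilon^2 \delta(O_1,\ldots,O_M),
\end{align}
by the very definition of $\delta$ and the fact that the $\ket{\psi_s^u}$ are unit vectors. This is the exact analogue of \Cref{lem:PropertyYChan}, with $\delta(O_1,\ldots,O_M)$ playing the role that $1/(2^n+1)$ played via \Cref{lem:SumPauliExp}.

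From here the argument is verbatim that of \Cref{sec:MeasComp}. I define an edge $(u,s)$ to be $i$-good if $Y_i(u,s) \ge -\alpha$ and a path to be $(i,\nu)$-balanced if $\sum_t \E_{s_t}[Y_i(u_t,s_t)^2] \le \nu$. Chebyshev and Markov, exactly as in \Cref{lem:chanprobgood,lem:chanprobbalance}, bound the fraction of $i$-bad and $i$-imbalanced paths by $O(T\epsilon^2\delta / \alpha^2)$ and $O(T\epsilon^2\delta / \nu)$ respectively. On the good-and-balanced event, the Bernstein-type martingale bound of \Cref{lem:chanBernsteinbound}, applied to $X_t = \log(1+Y_i(u_t,s_t))\,\mathbbm{1}[\text{edge $i$-good}]$, yields $\sum_t X_t \ge -(1+1/\alpha)\nu - \eta$ with probability at least $1 - \exp(-\Omega(\eta^2/(\nu + \alpha\eta)))$, hence $L_i(\z) \ge 1 - o(1)$. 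Choosing $\alpha, \nu, \eta$ to be the small constants used in \eq{probcalc} and assuming $T \le c/(\epsilon^2 \delta)$ for sufficiently small constant $c$, the same computation as \eq{probcalc} gives $\text{TV}(p_0, \E_i[p_i]) \le 1/6$, contradicting the distinguishing requirement and proving $T \ge \Omega(1/(\epsilon^2 \delta(O_1,\ldots,O_M)))$.

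The only nontrivial step is ensuring the right variance bound: tracelessness is used in exactly one place (zero mean of $Y_i$), and the operator-norm bound $\norm{O_i}_{\sf op} \le 1$ is needed both to keep $|Y_i|\le\epsilon$ (so $\log(1+Y_i)$ can be Taylor-expanded safely when $\epsilon < 1$) and to justify the identification of $\delta$ with an upper bound on the averaged second moment. All remaining tree/martingale bookkeeping transfers without modification, since the absence of concatenation only simplifies the analysis by removing the $\xi_a^{u_t}(M^{u_t})$ factors from $Y_i$.
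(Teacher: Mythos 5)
Your proposal is correct and follows essentially the same route as the paper's Appendix~\ref{app:Shadowlow}: the same reduction to the many-versus-one distinguishing problem with $\rho_i=(I+\epsilon O_i)/2^n$, the same likelihood-ratio random variable $Y_i(u,s)=\epsilon\bra{\psi_s^u}O_i\ket{\psi_s^u}$, the same moment bounds (tracelessness for the mean, the definition of $\delta$ for the variance), and the same good/balanced-path plus Bernstein-type martingale machinery with identical constants. The only cosmetic difference is that you explicitly flag the pointwise bound $|Y_i|\le\epsilon$; the paper instead relies on the goodness cutoff $Y_i\ge-\alpha$ (together with $Y_i\le 1$) to justify the $\log(1+Y)$ inequalities, which amounts to the same control.
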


\noindent Although this theorem has a similar form to Theorem 5.5 in~\cite{chen2022exponential}, we emphasize that it is stronger in the following sense. In~\cite{chen2022exponential}, the observables $O_1,\ldots,O_M$ had to be closed under negation, and the theorem only applied to $\epsilon$ which is upper bounded by some absolute constant. In contrast, our theorem applies to an arbitrary collection of observables $O_1,\ldots,O_M$ and any $\epsilon$ in the range $[0,1)$.

To prove Theorem~\ref{thm:shadow}, we adapt the sub-martingale technique from \cite{chen2022complexity}. Our goal is to bound $\text{TV}(p_0,\E_{i}[p_i])$, where $p_0$ and $p_a$ are the probability distribution on all leaves corresponding to the cases $\rho_0$ and $\rho_i$. Given a path $\z=((u_1,s_1),\ldots,(u_T,s_T))$, the likelihood ratio $L_i(\z)$ between traversing that path when the underlying state is $\rho_i$ versus when it is $\rho_0$ is given by
\begin{align*}
L_i(\z)&=\prod_{t=1}^T\frac{\omega_{s_t}^{u_t} 2^n\bra{\psi_{s_t}^{u_t}}\rho_i\ket{\psi_{s_t}^{u_t}}}{\omega_{s_t}^{u_t} 2^n\bra{\psi_{s_t}^{u_t}}\rho_0\ket{\psi_{s_t}^{u_t}}}\\
&=\prod_{t=1}^T\left[1+\epsilon\bra{\psi_{s_t}^{u_t}}O_i\ket{\psi_{s_t}^{u_t}}\right]\\
&=\prod_{t=1}^T(1+Y_i(u_t,s_t)),
\end{align*}
where $Y_i$ is defined as
\begin{align*}
Y_i(u_t,s_t)=\epsilon\bra{\psi_{s_t}^{u_t}}O_i\ket{\psi_{s_t}^{u_t}}.
\end{align*}
With respect to the distribution over paths $\z$, the distribution on $L_i(\z)$ actually indicates the total variation distance. This is because
\begin{align*}
\text{TV}(p_0,\E_{i}[p_i])=\E_{\z\sim p_0}[\max(0,1-\E_i[L_i(\z)])]\leq\E_{\z,i}[\max(0,1-L_i(\z))].
\end{align*}

Regarding the variable $Y_i(u_t,s_t)$, we have the following lemma
\begin{lemma}\label{lem:PropertyY}
For any edge $(u,s)$ in the tree, we have that
\begin{itemize}
    \item $\E_{s}[Y_i(u,s)]=0$;
    \item $\E_{s,i}[Y_i(u,s)^2]\leq\epsilon^2\delta(O_1,\ldots,O_M)$.
\end{itemize}
\end{lemma}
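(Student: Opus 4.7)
The plan is to unpack the definitions carefully and then use two simple facts: the POVM normalization $\sum_s \omega_s^u 2^n \ket{\psi_s^u}\bra{\psi_s^u} = I$, and the tracelessness of the $O_i$. The key observation is that in the definition of $Y_i(u,s)$, the randomness over $s$ at node $u$ is taken with respect to the null hypothesis $\rho_0 = I/2^n$, so the probability of observing outcome $s$ conditional on reaching $u$ is $\omega_s^u 2^n\bra{\psi_s^u}\rho_0\ket{\psi_s^u} = \omega_s^u$. This also implies $\sum_s \omega_s^u = 1$ (by tracing the POVM completeness relation).

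For the first moment, I would simply write
\begin{align}
\E_s[Y_i(u,s)] = \epsilon \sum_s \omega_s^u \bra{\psi_s^u} O_i \ket{\psi_s^u} = \epsilon \Tr\Bigl(O_i \sum_s \omega_s^u \ket{\psi_s^u}\bra{\psi_s^u}\Bigr) = \frac{\epsilon}{2^n}\Tr(O_i) = 0,
\end{align}
where the last equality uses that the $O_i$ are traceless.

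For the second moment, I would interchange the order of summation to isolate the dependence on $\ket{\psi_s^u}$:
\begin{align}
\E_{s,i}[Y_i(u,s)^2] = \epsilon^2 \sum_s \omega_s^u \cdot \frac{1}{M}\sum_{i=1}^M \bra{\psi_s^u} O_i \ket{\psi_s^u}^2.
\end{align}
Since each $\ket{\psi_s^u}$ is a pure state, the inner quantity is bounded pointwise by $\max_{\ket{\phi}} \frac{1}{M}\sum_{i=1}^M \bra{\phi} O_i \ket{\phi}^2 = \delta(O_1, \ldots, O_M)$ by definition. Combining this with $\sum_s \omega_s^u = 1$ yields the claimed bound.

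Overall, this lemma really is a direct unpacking; there is no serious obstacle, which is consistent with its role as a bookkeeping step. The ``hard'' conceptual ingredient is already baked into the definition of $\delta(O_1,\ldots,O_M)$ as the worst-case weighted second-moment of observables against pure states, so the only care needed is to check that the expectation over $s$ is governed by the null-hypothesis distribution (and not by $\rho_i$), which is what allows the tracelessness of $O_i$ to kill the first moment.
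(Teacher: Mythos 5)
Your proof is correct and follows essentially the same route as the paper's: use POVM completeness to reduce the first moment to $\frac{\epsilon}{2^n}\Tr(O_i)$ (vanishing by tracelessness), and for the second moment exchange the sums so that the definition of $\delta(O_1,\ldots,O_M)$ applies pointwise to each $\ket{\psi_s^u}$ and then use $\sum_s \omega_s^u = 1$. You are in fact slightly more careful than the paper's write-up, which drops the $\epsilon$ in the first-moment display and writes an unnecessary $\leq$ for the equality $Y_i^2 = \epsilon^2\bra{\psi_s^u}O_i\ket{\psi_s^u}^2$, but the argument is the same.
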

\begin{proof}
The first-order average is computed as
\begin{align*}
\E_{s,i}[Y_i(u,s)]&=\sum_s\Tr(\omega_s^u \ket{\psi_2^u}\bra{\psi_s^u}O_i)]\\
&=\Tr\left(\frac{I}{2^n}O_i\right)\\
&=0,
\end{align*}
where the third line follows from the POVM property, i.e., $\sum_s \omega_s^u2^n\ket{\psi_2^u}\bra{\psi_s^u})=I$.

The second-order average is bounded by
\begin{align*}
\E_{s,i}[Y_i(u,s)^2]&\leq\epsilon^2\E_{s,i}[\bra{\psi_{s_t}^{u_t}}O_i\ket{\psi_{s_t}^{u_t}}^2]\\
&=\epsilon^2\sum_s\omega_s^u\E_i[\bra{\psi_{s_t}^{u_t}}O_i\ket{\psi_{s_t}^{u_t}}^2]\\
&\leq\epsilon^2\sum_i\omega_s^u\delta(O_1,\ldots,O_M)\\
&=\epsilon^2\delta(O_1,\ldots,O_M).
\end{align*}
\end{proof}

Recall the concept of good and balanced paths in the main text. Given an edge $(u,s)$ and a constant value $\alpha$ to be fixed later, we it is $i$-good if
\begin{align*}
Y_i(u,s)\geq-\alpha.
\end{align*}
We say that a path $\z$ is $i$-good if all of its constituent edges are $i$-good.

We say a path $\z$ is $(i,\nu)$-balanced if its constituent edges $((u_1,s_1),\ldots,(u_T,s_T))$ satisfy
\begin{align*}
\sum_{t=1}^T \E_{s_t}[Y_i(u_t,s_t)^2]\leq\nu
\end{align*}

Similar to the main text, we bound the probability of a bad path or an imbalanced path:
\begin{align*}
\Pr_{\z\sim p_0}[\text{proportion of bad path}]&=\E_i\Pr_{\z}[\z\text{ is not }i\text{ good}]\\
&\leq T\E_i\Pr[(u_t,s_t)\text{ is not }i\text{ good}]\\
&=\frac{T\epsilon^2\delta(O_1,\ldots,O_M)}{\alpha^2}.
\end{align*}
Here, the last line follows from Chebyshev's inequality. Thus we obtain the following lemma via Markov's inequality
\begin{lemma}\label{lem:probgood}
$\Pr_{\z\sim p_0}[\text{proportion of good path}\geq 1-\frac{100T\epsilon^2\delta(O_1,\ldots,O_M)}{\alpha^2}]\leq 0.99$
\end{lemma}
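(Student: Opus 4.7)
The plan is to unwind ``proportion of good paths'' into an expectation that can be bounded edge-by-edge via Chebyshev and Lemma~\ref{lem:PropertyY}, and then to conclude by Markov's inequality. The argument mirrors the one used for Lemma~\ref{lem:chanprobgood} in the main text, with the channel-learning likelihood ratio replaced by the state-learning likelihood ratio.

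First I would rewrite the random quantity of interest. For a path $\z \sim p_0$, the proportion of $i$'s for which $\z$ is $i$-good equals $\E_i[\mathbbm{1}[\z\ \text{is }i\text{-good}]]$, so by Markov's inequality it suffices to prove
\begin{equation}
\E_{\z \sim p_0}\E_i\bigl[\mathbbm{1}[\z\ \text{is not }i\text{-good}]\bigr] \;\leq\; \frac{T \epsilon^2 \delta(O_1,\ldots,O_M)}{\alpha^2}.
\end{equation}
Once this is established, applying Markov to the nonnegative random variable $\E_i[\mathbbm{1}[\z\ \text{is not }i\text{-good}]]$ at level $100 T \epsilon^2 \delta/\alpha^2$ yields the claimed lemma.

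Next I would bound the above expectation. By the definition of an $i$-good path, $\z = ((u_1,s_1),\ldots,(u_T,s_T))$ fails to be $i$-good iff $Y_i(u_t,s_t) < -\alpha$ for some $t \in [T]$, so a union bound over $t$ gives
\begin{equation}
\E_i \Pr_{\z \sim p_0}[\z\ \text{is not }i\text{-good}] \;\leq\; \sum_{t=1}^{T} \E_i \Pr_{\z \sim p_0}[Y_i(u_t,s_t) < -\alpha].
\end{equation}
I would then condition on the history node $u_t$ at depth $t$. Under $p_0$ (the maximally mixed $\rho_0 = I/2^n$), the conditional distribution of $s_t$ given $u_t$ is just the POVM weighting $\omega^{u_t}$, since $\Tr(\omega_s^u 2^n \ket{\psi_s^u}\bra{\psi_s^u} \cdot I/2^n) = \omega_s^u$. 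Because $\E_{s_t}[Y_i(u_t,s_t)] = 0$ (first bullet of Lemma~\ref{lem:PropertyY}), Chebyshev's inequality followed by Fubini and the second bullet of Lemma~\ref{lem:PropertyY} gives
\begin{align}
\E_i \Pr_{\z \sim p_0}[Y_i(u_t,s_t) < -\alpha] \;\leq\; \frac{\E_{\z \sim p_0} \E_i[Y_i(u_t,s_t)^2]}{\alpha^2} \;\leq\; \frac{\epsilon^2 \delta(O_1,\ldots,O_M)}{\alpha^2}.
\end{align}
Summing over $t = 1,\ldots,T$ yields the desired bound, and Markov's inequality then finishes the proof.

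The only delicate point is the Fubini step: one must know that the bound $\E_{s,i}[Y_i(u,s)^2] \leq \epsilon^2 \delta$ of Lemma~\ref{lem:PropertyY} holds for \emph{each} non-leaf node $u$ separately (with $s$ distributed as $\omega^u$), not merely after a further average over $u$. Inspecting the proof of that lemma, this is indeed what was shown, so the argument goes through without change.
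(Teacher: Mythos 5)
Your proof is correct and follows essentially the same route as the paper: bound $\E_{\z\sim p_0}\E_i[\mathbbm{1}[\z\text{ is not }i\text{-good}]]$ by a union bound over $t$, Chebyshev using the mean-zero and second-moment bounds of Lemma~\ref{lem:PropertyY}, then Markov at level $100T\epsilon^2\delta/\alpha^2$. One small remark: the inequality ``$\leq 0.99$'' in the statement of the lemma is a typo for ``$\geq 0.99$'' (as confirmed by how the lemma is invoked later in the paper), and your argument establishes the intended ``$\geq 0.99$'' bound.
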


\noindent Similarly, we obtain the following lemma
\begin{lemma}\label{lem:probbalance}
$\Pr_{\z\sim p_0}[\text{proportion of balanced path}\geq 1-\frac{100T\epsilon^2\delta(O_1,\ldots,O_M)}{\nu}]\leq 0.99$
\end{lemma}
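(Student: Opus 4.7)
\textbf{Proof plan for \Cref{lem:probbalance}.} The statement is the direct shadow-tomography analog of \Cref{lem:chanprobbalance}, and the plan is to mirror the two-step argument used there: bound the expected proportion of balanced paths using \Cref{lem:PropertyY}, then convert this expectation bound into a high-probability bound via Markov's inequality. The only difference from the channel setting is that the key second-moment estimate now involves $\delta(O_1,\ldots,O_M)$ instead of $1/(2^n+1)$.

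\textbf{Step 1: set up the relevant averages.} For a realized path $\z\sim p_0$, the ``proportion of balanced paths'' is, by definition, the proportion of indices $i\in\{1,\ldots,M\}$ for which $\z$ is $(i,\nu)$-balanced, i.e.
\begin{equation}
B(\z) \;\coloneqq\; \frac{1}{M}\sum_{i=1}^M \mathbbm{1}\!\left[\,\sum_{t=1}^{T}\E_{s_t}\!\bigl[Y_i(u_t,s_t)^2\bigr]\le \nu\,\right].
\end{equation}
Fubini then gives $\E_{\z\sim p_0}[B(\z)] = \E_i\,\Pr_{\z\sim p_0}\!\left[\z\text{ is }(i,\nu)\text{-balanced}\right]$. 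Equivalently, writing $S_i(\z)\coloneqq\sum_{t=1}^T \E_{s_t}[Y_i(u_t,s_t)^2]$, we have $1 - \E_\z[B(\z)] = \E_i \Pr_\z[S_i(\z) > \nu]$.

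\textbf{Step 2: bound the expected proportion of imbalanced paths.} For each fixed $i$, Markov's inequality yields $\Pr_\z[S_i(\z) > \nu] \le \E_\z[S_i(\z)]/\nu$. Averaging this over $i$ and applying the second moment bound from \Cref{lem:PropertyY}, which says that at every node $u$ we have $\E_{s,i}[Y_i(u,s)^2]\le \epsilon^2\delta(O_1,\ldots,O_M)$, we obtain
\begin{equation}
\E_i\E_\z[S_i(\z)] \;=\; \sum_{t=1}^T \E_\z\E_i\E_{s_t}[Y_i(u_t,s_t)^2] \;\le\; T\epsilon^2\delta(O_1,\ldots,O_M),
\end{equation}
and hence $\E_\z[1-B(\z)] \le T\epsilon^2\delta(O_1,\ldots,O_M)/\nu$.

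\textbf{Step 3: pass from expectation to tail.} Since $1-B(\z)\in[0,1]$ is nonnegative, one more Markov's inequality gives
\begin{equation}
\Pr_{\z\sim p_0}\!\left[\,1-B(\z) \;\ge\; \frac{100\,T\epsilon^2\delta(O_1,\ldots,O_M)}{\nu}\,\right] \;\le\; \frac{1}{100},
\end{equation}
which rearranges into the conclusion that with probability at least $0.99$ over $\z\sim p_0$, the proportion of balanced paths is at least $1 - 100T\epsilon^2\delta(O_1,\ldots,O_M)/\nu$, matching the form of \Cref{lem:chanprobbalance} in the channel setting.

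\textbf{Main obstacle.} The argument is essentially mechanical once one sets up the correct interpretation of ``proportion of balanced paths.'' The only real subtlety is the Fubini swap in Step 1: the proportion is naturally measured over the random index $i$ for a fixed path $\z$, and it is crucial that we average over $i$ before applying Markov's inequality so that the uniform-in-$i$ second moment estimate from \Cref{lem:PropertyY} can be invoked. Other than this bookkeeping, no new ideas beyond those already present in the channel analysis are required.
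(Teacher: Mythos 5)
Your proposal is correct and follows essentially the same route as the paper: bound the expected proportion of imbalanced paths by applying Markov's inequality to $S_i(\z)=\sum_t\E_{s_t}[Y_i(u_t,s_t)^2]$ together with the second-moment bound from \Cref{lem:PropertyY}, then apply Markov once more to convert the expectation bound into the stated tail bound with the factor of $100$. The paper's proof only writes out the first of these two steps and leaves the final Markov application implicit (and, like you, it really establishes the ``$\geq 0.99$'' direction that is used downstream); your version just makes the bookkeeping over the random index $i$ explicit.
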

\begin{proof}
\begin{align*}
\Pr_{\z\sim p_0}[\text{proportion of balanced path}&=\E_i[1-\Pr_{\z}[\z\text{ not }i-\text{balanced}]]\\
&\geq\E_i\left[1-\E_{\z}\left[\frac{\sum_{t=1}^T \E_{s_t}[Y_i(u_t,s_t)^2]}{\nu}\right]\right]\\
&=1-\frac{T\epsilon^2\delta(O_1,\ldots,O_M)}{\nu}\,.\qedhere
\end{align*}
\end{proof}

\noindent\paragraph{Martingale concentration.} Now, we consider the paths that are both good and balanced. Together we can obtain the following Bernstein-type concentration.
\begin{lemma}\label{lem:Bernsteinbound}
For any $i\in\{1,\ldots,M\}$, consider the following sequence of random variables:
\begin{align*}
\left\{X_t\coloneqq\log(1+Y_i(u_t,s_t))\cdot\mathbbm{1}[(u_t,s_t)\text{ is }i-\text{good}]\right\}_{t=1}^T
\end{align*}
where the randomness is with respect to $p_0$. For any $\eta,\nu,\epsilon>0$, we have
\begin{align*}
\Pr_{\z\sim p_0}\left[\sum_{t=1}^TX_t\leq-(1+\frac{1}{\alpha})\sum_{t=1}^T \E_{s_t}[Y_i(u_t,s_t)^2]-\eta\text{ and }\sum_{t=1}^T \E_{s_t}[Y_i(u_t,s_t)^2]\leq\nu\right ]\leq\exp\left(\frac{-\eta^2}{4\nu+2\alpha\eta/3}\right).
\end{align*}
\end{lemma}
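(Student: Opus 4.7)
The plan is to mirror the argument for \Cref{lem:chanBernsteinbound} in the main text, with the only new ingredient being that the role of the second-moment bound from \Cref{lem:PropertyYChan} is now played by \Cref{lem:PropertyY}. The first step is to control the conditional mean and conditional second moment of each summand
\begin{align}
X_t = \log(1+Y_i(u_t,s_t))\cdot\mathbbm{1}[(u_t,s_t)\text{ is }i\text{-good}]\,,
\end{align}
where conditioning is understood with respect to the prefix of the path up to time $t-1$. Because the indicator restricts to $Y_i(u_t,s_t) \geq -\alpha$, I can use the elementary inequality $\log(1+x)\geq x - x^2$ valid for $x\geq -\alpha$ (for small enough $\alpha$, e.g.\ $\alpha\leq 1/2$), together with $\E_{s_t}[Y_i(u_t,s_t)]=0$ from the first item of \Cref{lem:PropertyY}. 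Handling the contribution from non-good edges via Cauchy--Schwarz and Chebyshev (exactly as in the displayed calculation right after \Cref{lem:chanBernsteinbound}), I would obtain
\begin{align}
\E_{s_t}[X_t\mid\text{past}] \geq -\Bigl(1+\frac{1}{\alpha}\Bigr)\,\E_{s_t}[Y_i(u_t,s_t)^2\mid\text{past}]\,.
\end{align}

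Next I would control the second moment by using $\log(1+x)^2 \leq 2x^2$ for $|x|\leq 1/2$ together with the goodness indicator to get
\begin{align}
\E_{s_t}[X_t^2\mid\text{past}] \leq 2\,\E_{s_t}[Y_i(u_t,s_t)^2\mid\text{past}]\,,
\end{align}
and note that the increment is bounded in magnitude by $|\log(1+Y_i)|\leq \alpha + O(\alpha^2)$, which I will absorb into an effective bound of $O(\alpha)$. This hands me a martingale difference sequence $X_t - \E_{s_t}[X_t\mid\text{past}]$ with bounded increments and predictable quadratic variation dominated by $2\sum_t\E_{s_t}[Y_i(u_t,s_t)^2\mid\text{past}]$, which on the event appearing in the lemma is at most $2\nu$.

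At this point I would invoke Freedman's inequality for martingales (the version that conditions on the predictable quadratic variation being bounded), which yields
\begin{align}
\Pr\left[\sum_{t=1}^T\bigl(X_t-\E_{s_t}[X_t\mid\text{past}]\bigr)\leq -\eta\ \text{and}\ \sum_{t=1}^T\E_{s_t}[Y_i(u_t,s_t)^2]\leq\nu\right] \leq \exp\!\left(\frac{-\eta^2}{4\nu+2\alpha\eta/3}\right)\,.
\end{align}
Combining this with the lower bound on $\E_{s_t}[X_t\mid\text{past}]$ established in the first step gives the stated inequality after rearrangement.

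The main obstacle I anticipate is the bookkeeping around the event $\{\sum_t\E_{s_t}[Y_i(u_t,s_t)^2]\leq\nu\}$: since this is determined by the path, it is not a trivially predictable event, and a clean application of Freedman requires either stopping the martingale at the first time the predictable variance exceeds $\nu$ or directly using the form of Freedman that builds this truncation in. Once this technicality is handled, the rest is a routine adaptation of the argument for \Cref{lem:chanBernsteinbound}, with $\frac{1}{2^n+1}$ replaced by $\delta(O_1,\ldots,O_M)$ implicitly through \Cref{lem:PropertyY}.
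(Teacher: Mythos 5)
Your proposal is correct and follows essentially the same approach as the paper: you establish the same lower bound $-\bigl(1+\tfrac{1}{\alpha}\bigr)\E_{s_t}[Y_i(u_t,s_t)^2]$ on the conditional mean and the same upper bound $2\E_{s_t}[Y_i(u_t,s_t)^2]$ on the conditional second moment, then feed these into a Freedman-type martingale inequality. The paper's own proof (like that of Lemma~\ref{lem:chanBernsteinbound}) stops after deriving the two moment bounds and leaves the Freedman step and the stopping-time bookkeeping around the predictable quadratic variation implicit (it is inherited from~\cite{chen2022complexity}), whereas you spell those steps out — a stylistic, not substantive, difference.
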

\begin{proof}
Notice that for any edge $(u_t,s_t)$, we have
\begin{align*}
\MoveEqLeft\E_{s_t}[\log(1+Y_i(u_t,s_t))\cdot\mathbbm{1}[(u_t,s_t)\text{ is }i-\text{good}]]\\
&\geq\E_{s_t}[(Y_i(u_t,s_t)-Y_i(s_t,u_t)^2)\cdot\mathbbm{1}[(u_t,s_t)\text{ is }i-\text{good}]]\\
&\geq\E_{s_t}[(Y_i(u_t,s_t)-Y_i(s_t,u_t)^2)]-\E_{s_t}[Y_i(u_t,s_t)\cdot\mathbbm{1}[(u_t,s_t)\text{ is not }i-\text{good}]]\\
&\geq-\E_{s_t}[Y_i(u_t,s_t)^2]-\E_{s_t}[Y_i(u_t,s_t)^2]^{1/2}\cdot\Pr[(u_t,s_t)\text{ is not }i-\text{good}]^{1/2}\\
&=-\left(1+\frac{1}{\alpha}\right)\E_{s_t}[Y_i(u_t,s_t)^2]\,,
\end{align*}
where in the last two lines we use the Cauchy-Schwartz inequality and the Chebyshev inequality. In addition, we have
\begin{align*}
\MoveEqLeft\E_{s_t}[\log(1+Y_i(u_t,s_t))^2\cdot\mathbbm{1}[(u_t,s_t)\text{ is }i-\text{good}]]\\
&\leq\E_{s_t}[2Y_i(u_t,s_t)^2\cdot\mathbbm{1}[(u_t,s_t)\text{ is }i-\text{good}]]\\
&\leq\E_{s_t}[2Y_i(u_t,s_t)^2]\,.\qedhere
\end{align*}
\end{proof}

As we are to prove a lower bound, we inversely assume that
\begin{align*}
T\leq O\left(\frac{1}{\epsilon^2\delta(O_1,\ldots,O_M)}\right).
\end{align*}
Without loss of generality, we assume that $T\epsilon^2\delta(O_1,\ldots,O_M)\leq c$ for some constant $c>0$. We choose the following parameters
\begin{align*}
c=1/(101000),\quad\alpha=10^{-2},\quad\nu=1/(10100),\quad\eta=0.05.
\end{align*}

According to \Cref{lem:probgood}, with probability at least $0.99$, the proportion of good paths with respect to randomly chosen paths $\z\sim p_0$ is at least $0.99$. According to \Cref{lem:probbalance}, with probability at least $0.99$, the proportion of balanced path with respect to $\z\sim p_0$ is at least $0.99$. According to \Cref{lem:Bernsteinbound}, for all $a$ we have $\sum_{t=1}^TX_t>-0.1$ with probability at least $0.99$. By the definition of $X_t$, we have $L_a(\z)\geq 0.9$ with probability at least $0.99$ with respect to $\z\sim p_0$.

Following the same derivation with in the main text, the total variation distance is bounded by
\begin{align*}
0.99^3\times 0.99^3\times 0.1+(1-0.99^3\times0.99^3)\leq 1/6.
\end{align*}

Thus this algorithm cannot learn the problem with a probability of at least $5/6$. Similar to the argument in the main text, we obtain that any algorithm that succeeds with probability at least $5/6$ requires $T\geq \Omega\left(\frac{1}{\epsilon^2\delta(O_1,\ldots,O_M)}\right)$ samples.

\clearpage

\end{document}